\definecolor{blue}{rgb}{0,0,0}
\newtheorem{theorem}{Theorem}
\newtheorem{lemma}{Lemma}
\newtheorem{remark}{Remark}
\newtheorem{assumption}{Assumption}
\newtheorem{definition}{Definition}
\newtheorem{corollary}{Corollary}
\begin{document}

\title{Distributed Primal-Dual Algorithm for Constrained  Multi-agent Reinforcement Learning under Coupled Policies}

\author{Pengcheng Dai, He Wang, Dongming Wang, and Wenwu Yu~\IEEEmembership{Senior Member,~IEEE}
\thanks{This work was supported in part by the National Key Research and Development Program of China under Grant 2022ZD0120002 and in part by the National Natural Science Foundation of China under Grant 62233004.
\emph{(Corresponding author: Wenwu Yu and He Wang.)}
}
\thanks{Pengcheng Dai is with the Engineering Systems and Design Pillar, Singapore University of Technology and Design, Singapore 487372 (e-mail: Jldaipc@163.com).}
\thanks{He Wang is with the School of Mathematics, Southeast University, Nanjing 210096, China (e-mail: wanghe91@seu.edu.cn).}
\thanks{Dongming Wang is with the Department of Electrical and Computer Engineering, University of California, Riverside, CA 92521, USA. (e-mail: wdong025@ucr.edu).}
\thanks{Wenwu Yu is with the Frontiers Science Center for Mobile Information Communication and Security, School of Mathematics, Southeast University, Nanjing 210096, China, and also with Purple Mountain Laboratories, Nanjing 211102, China (e-mail: wwyu@seu.edu.cn).}
}

\markboth{Manuscript For
 Review}
{Manuscript For Review}

\maketitle

\begin{abstract}
This paper investigates constrained multi-agent reinforcement learning (CMARL) in coupled environments,
where agents collaboratively maximize the sum of local objectives while satisfying
individual safety constraints. Existing studies face two limitations: (1) most rely on independent policies that fail to capture complex interactions in coupled environments;
and (2) {\color{blue}agents require the global Lagrange multipliers, which are sensitive learned variables whose global sharing risks exposing private agent-specific information.}
To overcome these issues, we propose a framework where agents adopt coupled policies that depend on both the states and policy parameters of their $\kappa_p$-hop neighbors, where $\kappa_p>0$ denotes the coupling distance, and develop a distributed and scalable primal-dual (DSPD) algorithm wherein each agent accesses only information within a prescribed local neighborhood. {\color{blue}In the proposed algorithm, agents
exchange sensitive parameters only with immediate neighbors over a separate time-varying network, while maintaining local estimates to execute the coupled policy.}
We establish that the proposed algorithm achieves $\epsilon$-first-order stationary convergence with approximation error $\mathcal{O}(\gamma^{\frac{\kappa+1}{\kappa_p}})$, {\color{blue}where $\kappa>0$ is the truncated distance and $\gamma\in(0,1)$ is discount factor}.
{\color{blue}Simulations on a wireless access-control network demonstrate that the proposed algorithm outperforms existing state-of-the-art algorithms, validating its effectiveness.}
\end{abstract}

\begin{IEEEkeywords}
Constrained multi-agent reinforcement learning, coupled policies, distributed primal-dual algorithm, $\epsilon$-first-order stationary convergence.
\end{IEEEkeywords}

\IEEEpeerreviewmaketitle

\section{Introduction}

Reinforcement learning (RL)~\cite{Bhandari2024,Mei2020} has achieved considerable
success across a broad spectrum of decision-making applications.
However, many real-world systems impose stringent safety requirements, wherein agent must maximize the objective performance while ensuring compliance with safety constraints.
{\color{blue}For example, in wireless access networks~\cite{YingNIPS2024},
transmitting agents must collaboratively maximize packet transmission throughput subject to individual transmission budget constraints.
These safety-critical applications have spurred extensive research on} constrained reinforcement learning
(CRL)~\cite{PaternainNIPS2019,PaternainTAC2022,FullanaTAC2024,ChenTAC2024,GuTPAMI2024}.
For single-agent CRL, a widely adopted approach is provided by Lagrangian-based
algorithms~\cite{DingICAIS2021,YingICAIS2022,YingAAAI2023}, where the primal variable
is updated through policy gradient ascent and the dual variable is adjusted using
projected subgradient descent.
\par
{\color{blue}Extending CRL to multi-agent environments gives rise to {\color{blue}constrained multi-agent reinforcement learning (CMARL)}, which faces two fundamental challenges: i) the curse of dimensionality arising from the exponentially growing joint state-action space, and ii) achieving cooperative optimization while satisfying individual safety constraints.
We note that cooperative CMARL constitutes an identical-interest game and therefore falls within the class of Markov potential games~(MPGs).
Recent studies on (constrained) MPGs~\cite{Das2024,Maheshwari2025,Guo2026}
have established convergence to Nash equilibria.
However, these methods require every agent to access the global state, resulting in the curse of dimensionality and poor scalability in large-scale networks.
This limitation also persists in existing decentralized
multi-agent RL algorithms~\cite{BediCDC2021,LuAAAI2021}, which likewise
assume global state access.
Although distributed gradient descent and consensus-based methods in distributed optimization~\cite{NedicTAC2015,LiuTCNS2025} share conceptual similarities with CMARL, their analyses cannot be directly extended to the present setting.
In particular, both the objective and constraint functions must be estimated from sampled trajectories, the resulting optimization problem is generally non-convex and therefore admits at best convergence to first-order stationary solutions.}
{\color{blue}To address these challenges while ensuring scalability,~\cite{QuCLDC2020} proposed a truncation-based
approach exploiting the exponential decay property, and~\cite{YingNIPS2024} extended
this to scalable CMARL using a primal-dual actor-critic algorithm with local
neighborhood information.}
\par
Despite these advancements,~\cite{YingNIPS2024} presents two key limitations.
{\color{blue}First, its parameter-independent policy architecture fails to capture
the inter-agent dependencies arising from coupled state-action dynamics in
safety-critical applications such as
intelligent transportation systems~\cite{ChuTITS2020,WangTCYB2021,DaiTII2024}.}
Second,~\cite{YingNIPS2024} requires agents to access the global Lagrange multipliers, introducing the risk of sensitive information exposure.
These limitations inspire our investigation of CMARL under coupled policies, where
each agent's policy depends on the states and parameters of its $\kappa_p$-hop
neighbors, with private information (e.g., policy parameters and Lagrange multipliers) restricted to a separate time-varying
network.
{\color{blue}This design introduces three fundamental challenges.
(1) Collaboratively optimizing under coupled policies while satisfying individual
safety constraints is significantly more challenging than the independent policy
setting;
(2) Restricting the exchange of sensitive policy parameters and Lagrange
multipliers to local neighborhoods while guaranteeing convergence is nontrivial; and
(3) Existing analytical frameworks for independent policies are inapplicable,
necessitating new theoretical techniques.}
\par
\par
To address these challenges, the main contributions are as follows.
\begin{enumerate}
\item A novel approximated coupled policy gradient is proposed, which is directly
    constructed from reward information within the $(\kappa+2\kappa_p)$-hop
    neighborhood and state-action pairs within the $2\kappa_p$-hop neighborhood, {\color{blue}where $\kappa>0$ is the truncation distance and $\kappa_p>0$ is the coupling distance,}
    eliminating the need for global state-action information.
    {\color{blue}A geometric 2-horizon
    sampling scheme is employed to provide an unbiased estimate of the approximated coupled policy gradient and the Lagrangian gradient,}
    enabling coordinated policy updates while satisfying local safety constraints.
    \item {\color{blue}To reduce the exposure of sensitive agent-specific information,} each agent maintains local estimates of other agents' policy parameters and Lagrange multipliers, while exchanging true
    variables only with immediate neighbors over time-varying learning networks.
    {\color{blue}A distributed update mechanism based on the push-sum protocol is developed for
    these local estimates}, which not only refines each agent's understanding of
    others' parameters but also promotes the generation of informative samples that
    guide coordinated policy improvement.
    {\color{blue}Building upon the above two designs, a distributed and scalable primal-dual (DSPD) algorithm is proposed, where each agent operates solely based on information from a prescribed local neighborhood without requiring access to global information.}
    \item {\color{blue}A new theoretical analysis framework is developed for the proposed DSPD algorithm under coupled policies.
        Specifically,} we establish: (i) an approximation error bound for the coupled policy gradient involving both the truncation distance $\kappa$ and the coupling distance $\kappa_p$ (see Theorem~\ref{thelemmaofapproximatedpolicygradient});
        (ii) the convergence of both policy parameter and Lagrange multiplier estimates to their true values {\color{blue}at rate $\mathcal{O}(\frac{1}{m})$ via a carefully ordered interleaved analysis that resolves the circular dependency between the two estimation processes} (see Theorem~\ref{Convergencetheoreminparameterestimation}); and (iii) an $\epsilon$-FOSP guarantee with approximation error $\mathcal{O}(\gamma^{\frac{\kappa+1}{\kappa_p}})$, {\color{blue}where $\gamma\in(0,1)$ is discount factor} (see Theorem~\ref{thetheoremconvergenceofpolicygradient}).
\end{enumerate}

\par
The rest of this paper is organized as follows. {\color{blue}Section~\ref{SectionIIpreliminaries}
introduces network preliminaries.
Section~\ref{SectionIII} formulates the CMARL problem and derives the coupled policy gradient.
Section~\ref{DSPDalgorithm} develops the {\color{blue}DSPD algorithm}.
Section~\ref{thesectionofanalysis} establishes the $\epsilon$-FOSP convergence guarantee {\color{blue}of the DSPD algorithm}.
Section~\ref{SectionSimulations} validates the proposed algorithm on a wireless access-control network.
Section~\ref{SectionVConclusions} concludes the paper.}

\par

{\color{blue}\textbf{Notations}: Let $\mathbb{N}$, $\mathbb{R}$, and $\mathbb{R}_{+}$ denote the sets of natural, real, and positive real numbers, respectively.
For $N\in\mathbb{N}$, $\mathbb{R}^{N}$ ($\mathbb{R}_{+}^{N}$) denotes the $N$-dimensional Euclidean space (positive orthant), and $\mathbf{0}_{N}$ denotes the zero vector.
For $a,b\in\mathbb{R}$, $\mathrm{mod}(a,b)$ and $\lfloor a\rfloor$ denote the modulo and floor operations, respectively.
For $x\in\mathbb{R}^{N}$, $\|x\|$ denotes its Euclidean norm. $(\cdot)^{\top}$ and $\mathbb{E}[\cdot]$ denote the transpose and expectation operators, respectively.}


\section{Network preliminaries}\label{SectionIIpreliminaries}
Let $\mathcal{G}(\mathcal{N},\mathcal{E})$ represent a directed network, where
$\mathcal{N}=\{1,\cdots,N\}$ denotes the set of agents, $\mathcal{E}$ represents the set of directed
edges, {\color{blue}and $e_{ij}\in\mathcal{E}$ indicates that agent $i$ can receive information from agent $j$.}
$\mathcal{G}$ is strongly connected if there exists at
least one directed path from any agent $j\in\mathcal{N}$ to every other agent $i\in\mathcal{N}$.
Define $\mathcal{N}_i=\{j|e_{ij}\in\mathcal{E}\}$ as the
neighborhood of agent $i$, {\color{blue}and $\mathcal{N}_i^\kappa$ as the $\kappa$-hop neighborhood, i.e., agents within graph distance $\kappa$ from agent $i$ (including
$i$ itself).}
Let $\mathcal{N}_{i,-j}^\kappa=\mathcal{N}_i^\kappa\setminus\{j\}$
and $\mathcal{N}_{-i}^\kappa=\mathcal{N}\setminus\mathcal{N}_i^\kappa$.
\par
Let $\{\mathcal{G}_m(\mathcal{N},\mathcal{E}_m)\}_{m\geq 1}$ represent time-varying
networks, which are uniformly strongly connected if there exists an integer $D>0$ such that $(\mathcal{N},\bigcup_{m=kD+1}^{(k+1)D}\mathcal{E}_m)$ is strongly connected for any $k\in\mathbb{N}$.
For $\mathcal{G}_m$, denote
$\mathcal{N}_{i,m}^{out}=\{j|e_{ji,m}\in\mathcal{E}_m,\forall j\in\mathcal{N}\}$
as the out-neighborhood of agent $i$, and $W_m=[w_{ij,m}]_{N\times N}$ {\color{blue}as the weight matrix with}
\begin{align}\label{thedefinitionofweightmatrix}
w_{ij,m}=\left\{
\begin{array}{ll}
\frac{1}{|\mathcal{N}^{out}_{j,m}|},~\mathrm{if}~i\in\mathcal{N}^{out}_{j,m}\\
0,\;\;\;\;\;\;\;\;\;\mathrm{otherwise},
\end{array}\right.
\end{align}
where $|\mathcal{N}^{out}_{j,m}|$ is the number of agents in agent $j$'s out-neighborhood.

\section{CMARL problem under coupled policies}\label{SectionIII}
In this section, {\color{blue}we formulate the CMARL problem under coupled policies, define the $\epsilon$-FOSP, and derive the coupled policy gradient.}
\subsection{{\color{blue}Problem formulation}}
The CMARL problem under coupled policies is formally defined as $\big(\mathcal{G}^{E}(\mathcal{N},\mathcal{E}^{E}),\{\mathcal{S}_{i}\}_{i\in\mathcal{N}},\{\mathcal{A}_{i}\}_{i\in\mathcal{N}},\{\mathcal{P}_{i}\}_{i\in\mathcal{N}},$
$\{f_{i}\}_{i\in\mathcal{N}},$ $\{g_{i}\}_{i\in\mathcal{N}},\{\pi_{i}\}_{i\in\mathcal{N}},\bm{\rho}, \gamma\big)$, {\color{blue}where each element is summarized below}.
\par
\textbf{Environmental network}: $\mathcal{G}^{E}\big(\mathcal{N},\mathcal{E}^{E}\big)$ is an environmental network among agents, where $\mathcal{N}=\{1,\cdots,N\}$ denotes the set of agents and the edge set $\mathcal{E}^{E}$ specifies the links between them.
{\color{blue}We employ
the superscript ``E'' for the corresponding concepts in $\mathcal{G}^{E}$.}
\par
\textbf{State and action}: $s_i\in\mathcal{S}_i$ and $a_i\in\mathcal{A}_i$ denote
the local state and action of agent $i$. The global state and action are
$\bm{s}=(s_1,\cdots,s_N)\in\bm{\mathcal{S}}=\prod_{i=1}^N\mathcal{S}_i$ and
$\bm{a}=(a_1,\cdots,a_N)\in\bm{\mathcal{A}}=\prod_{i=1}^N\mathcal{A}_i$, and
$s_{\mathcal{N}_i^{E,\kappa}}\in\mathcal{S}_{\mathcal{N}_i^{E,\kappa}}=
\prod_{j\in\mathcal{N}_i^{E,\kappa}}\mathcal{S}_j$ and
$a_{\mathcal{N}_i^{E,\kappa}}\in\mathcal{A}_{\mathcal{N}_i^{E,\kappa}}=
\prod_{j\in\mathcal{N}_i^{E,\kappa}}\mathcal{A}_j$ denote the aggregated states
and actions of agent $i$'s $\kappa$-hop neighbors.
\par
\textbf{State transition probability function}: $\mathcal{P}_i(s'_i|s_{\mathcal{N}_i^E},a_i)$ specifies
that agent $i$'s next state is determined by its neighbors' states and its own
action, {\color{blue}with the global transition} $\bm{\mathcal{P}}(\bm{s}'|\bm{s},\bm{a})=
\prod_{i=1}^N\mathcal{P}_i(s'_i|s_{\mathcal{N}_i^E},a_i)$.
\par
\textbf{Reward functions}:
$f_i(s_i,a_i)$ and $g_i(s_i,a_i)$ {\color{blue}denote the objective
and constraint reward functions of agent $i$, respectively.}
\par
\textbf{Coupled policy}:
$\pi_i(\cdot|s_{\mathcal{N}_i^{E,\kappa_p}},\theta_i,
\theta_{\mathcal{N}_{i,-i}^{E,\kappa_p}})$ represents the local coupled policy of agent $i$,
where $\kappa_p\geq 1$ is the coupling distance, $\theta_i\in\Theta_i\subseteq
\mathbb{R}^{d_{\theta_i}}$ is the local policy parameter, and
$\theta_{\mathcal{N}_{i,-i}^{E,\kappa_p}}$ are the policy parameters of agent $i$'s
$\kappa_p$-hop neighbors. The joint policy is
$\bm{\pi_\theta}(\bm{a}|\bm{s})=\prod_{i=1}^N\pi_i(a_i|s_{\mathcal{N}_i^{E,\kappa_p}},
\theta_i,\theta_{\mathcal{N}_{i,-i}^{E,\kappa_p}})$ with joint parameter
$\bm{\theta}=(\theta_1^\top,\cdots,\theta_N^\top)^\top\in\bm{\Theta}=
\prod_{i=1}^N\Theta_i$.
{\color{blue}This coupled policy framework is naturally suited for environments with coupled state transition dynamics, such as intelligent transportation systems~\cite{ChuTITS2020,WangTCYB2021,DaiTII2024}, where the next state of each agent is jointly determined by the current states of its neighbors.
It can also be readily implemented as a neural network, where agent $i$'s policy takes $s_{\mathcal{N}_i^{E,\kappa_p}}$ and $\theta_{\mathcal{N}_{i,-i}^{E,\kappa_p}}$ as inputs while each agent optimizes solely its own parameter $\theta_i$.
}
{\color{blue}\begin{remark}
The coupled policy $\pi_i(\cdot|s_{\mathcal{N}_i^{E,\kappa_p}},\theta_i,
\theta_{\mathcal{N}_{i,-i}^{E,\kappa_p}})$ is motivated by empirical findings in cooperative multi-agent applications, where coupled policies conditioning on both state and policy information from neighbors consistently outperform independent policies~\cite{ChuTITS2020,WangTCYB2021,DaiTII2024}.
Although coupled policies have demonstrated empirical success, their theoretical foundations in distributed constrained learning remain largely unexamined.
This work fills this gap by providing a rigorous convergence analysis under the coupled policy setting.
\end{remark}}

\par
\textbf{Initial state distribution}: $\bm{\rho}$ is the distribution of the initial state $\bm{s}_{0}$.
\par
\textbf{Discount factor}: $\gamma\in(0,1)$ is the discount factor.
\par
In {\color{blue}the} CMARL problem, define the local objective function {\color{blue}and constraint function} of agent $i$ under joint policy $\bm{\pi}_{\bm{\theta}}$ as
\begin{align}\label{thelongtermreturninNMARLofagenti}
F_{i}(\bm{\theta})\!=\!\mathbb{E}_{\bm{s}\sim\bm{\rho}}\Big[\sum_{t=0}^{\infty}\!\gamma^{t}f_{i}(s_{i,t},a_{i,t})\Big|\bm{s}_{0}\!=\!\bm{s},\bm{a}_{t}\!\sim\!\bm{\pi_{\theta}}(\cdot|\bm{s}_{t})\Big]
\end{align}
and
\begin{align}\label{thelongtermreturninNMARLGG}
G_{i}(\bm{\theta})\!=\!\mathbb{E}_{\bm{s}\sim\bm{\rho}}\Big[\sum_{t=0}^{\infty}\!\gamma^{t}g_{i}(s_{i,t},a_{i,t})\Big|\bm{s}_{0}\!=\!\bm{s},\bm{a}_{t}\!\sim\!\bm{\pi_{\theta}}(\cdot|\bm{s}_{t})\Big],
\end{align}
where $\bm{s}_{t}=(s_{1,t},\cdots,s_{N,t})$ and $\bm{a}_{t}=(a_{1,t},\cdots,a_{N,t})$ represent the global state and global action at time $t$, respectively.
The optimization objective {\color{blue}of agents} is to find a joint policy parameter $\bm{\theta}$ that
maximizes the discounted average cumulative rewards while satisfying safety constraints:
\begin{align}\label{theproblemofCMARL}
\max\limits_{\bm{\theta}\in\bm{\Theta}}F(\bm{\theta})\triangleq\frac{1}{N}\sum_{i=1}^{N}F_{i}(\bm{\theta}),\;\;\mathrm{s.t.}\;G_{i}(\bm{\theta})\geq c_{i},
\end{align}
where $c_{i}\in\mathbb{R}$ denotes the constraint {\color{blue}threshold} value for $G_{i}(\bm{\theta})$ of agent $i$~\footnote{{\color{blue}
Although the unconstrained cooperative objective admits a MPG formulation, our constrained formulation~(\ref{theproblemofCMARL}) is analyzed through the Lagrangian, and the solution concept is an $\epsilon$-FOSP rather than a Nash equilibrium.
}}.
\subsection{Approximate first-order stationary point}\label{Approximatefirstorderstationarypoint}
{\color{blue}The Lagrangian function associated with~(\ref{theproblemofCMARL}) is}
\begin{align}\label{LagrangianfunctionofCMARL}
\mathcal{L}(\bm{\theta},\bm{\mu}) = F(\bm{\theta}) + \frac{1}{N}\sum_{i=1}^{N}\mu_{i}\big(G_{i}(\bm{\theta})-c_{i}),
\end{align}
where $\bm{\mu} = (\mu_{1},\cdots,\mu_{N})^{\top} \in \mathbb{R}^{N}_{+}$ denotes the vector of Lagrange multipliers.
Accordingly, the Lagrangian formulation of (\ref{theproblemofCMARL}) is expressed as
\begin{align}\label{Lagrangianformulationoforiginalproblem}
\max\limits_{\bm{\theta}\in\bm{\Theta}}\min\limits_{\bm{\mu}\geq\mathbf{0}_{N}}\mathcal{L}(\bm{\theta},\bm{\mu}).
\end{align}
{\color{blue}For the subsequent theoretical development, we introduce the following assumption and definition.}
\begin{assumption}\label{theassumptionofFOSP}
In CMARL problem (\ref{theproblemofCMARL}), there exists a first-order stationary point (FOSP) $(\bm{\theta}^{\star},\bm{\mu}^{\star})$ and a positive constant $\mu_{max}>0$, such that $\mu^{\star}_{i} \leq \mu_{max}$ holds for all $i\in\mathcal{N}$.
\end{assumption}
\par
Assumption~\ref{theassumptionofFOSP} adheres to the standard practice in constrained optimization by assuming that the feasible set of the dual variables is bounded, which is consistent with~\cite{LuAAAI2021,YingNIPS2024}.
\par
Since $F(\bm{\theta})$ and
$\{G_i(\bm{\theta})\}_{i\in\mathcal{N}}$ are generally nonconvex in $\bm{\theta}$, computing the
global optimum of~(\ref{theproblemofCMARL}) is NP-hard.
{\color{blue}In this work, we instead
seek an approximate FOSP of
problem~(\ref{theproblemofCMARL})}.
Under Assumption~\ref{theassumptionofFOSP}, let $\bm{\mathcal{U}}=\prod_{i=1}^{N}\mathcal{U}_{i}$ as the feasible set of the dual variables, where
$\mathcal{U}_{i}=[0,\mu_{max}]$ for each agent $i\in\mathcal{N}$.
{\color{blue}The $\epsilon$-FOSP is defined as follows.}
\begin{definition}\label{thedefinitionofepsilonFOSP}
($\epsilon$-FOSP) A solution $(\bm{\theta},\bm{\mu})\in\bm{\Theta}\times\bm{\mathcal{U}}$ is said to be an $\epsilon$-FOSP, if
\begin{align}\label{theequationindefinitionofepsilonFOSP}
\mathcal{E}(\bm{\theta},\bm{\mu})=\mathcal{X}(\bm{\theta},\bm{\mu})^{2}+\mathcal{Y}(\bm{\theta},\bm{\mu})^{2}\leq\epsilon,
\end{align}
where $\mathcal{X}(\bm{\theta},\bm{\mu})$ and $\mathcal{Y}(\bm{\theta},\bm{\mu})$ are respectively defined as
\begin{align}\label{thedefinitionofmathcalX}
\mathcal{X}(\bm{\theta},\bm{\mu})=\max\limits_{\bm{\mu}'\in\bm{\mathcal{U}},\|\bm{\mu}'-\bm{\mu}\|\leq1}\langle\nabla_{\bm{\mu}}\mathcal{L}(\bm{\theta},\bm{\mu}),\bm{\mu}'-\bm{\mu}\rangle
\end{align}
and
\begin{align}\label{thedefinitionofmathcalY}
\mathcal{Y}(\bm{\theta},\bm{\mu})=\max\limits_{\bm{\theta}'\in\bm{\Theta},\|\bm{\theta}'-\bm{\theta}\|\leq1}\langle\nabla_{\bm{\theta}}\mathcal{L}(\bm{\theta},\bm{\mu}),\bm{\theta}'-\bm{\theta}\rangle.
\end{align}
\end{definition}
\par
Since $\mathcal{X}(\bm{\theta},\bm{\mu})$ in (\ref{thedefinitionofmathcalX}) and $\mathcal{Y}(\bm{\theta},\bm{\mu})$ in (\ref{thedefinitionofmathcalY}) are defined based on the first-order optimality conditions~\cite{NouiehedNIPS2019},
if there exists $(\bm{\theta}^{*}, \bm{\mu}^{*})\in\bm{\Theta}\times\bm{\mathcal{U}}$ such that $\mathcal{E}(\bm{\theta}^{*}, \bm{\mu}^{*})=0$, then $(\bm{\theta}^{*},\bm{\mu}^{*})$ constitutes a FOSP of problem~(\ref{theproblemofCMARL}).
{\color{blue}Moreover, for any $\epsilon$-FOSP $(\bm{\theta},\bm{\mu})$, the
constraint violation satisfies
\begin{align}
\frac{1}{N}|G_i(\bm{\theta})-c_i|\leq\mathcal{X}(\bm{\theta},\bm{\mu})\leq
\sqrt{\epsilon},\quad\forall i\in\mathcal{N},\notag
\end{align}
showing that the constraint violation $|G_i(\bm{\theta})-c_i|$ vanishes as $\epsilon\to 0$.}

\subsection{Coupled policy gradient}\label{Coupledpolicygradient}
{\color{blue}For the CMARL problem in~(\ref{theproblemofCMARL}), given any joint policy $\bm{\pi_\theta}$ and Lagrangian multiplier $\bm{\mu}$, we define the global and local Lagrangian $Q$-functions as}
\begin{align}
Q^{\bm{\pi_{\theta}}}(\bm{s},\bm{a};\bm{\mu})=&\mathbb{E}_{\bm{\pi_{\theta}}}\Big[\frac{1}{N}\sum_{t=0}^{\infty}\sum_{i=1}^{N}\gamma^{t}\big(f_{i}(s_{i,t},a_{i,t})\notag\\
&+\mu_{i}g_{i}(s_{i,t},a_{i,t})\big)\Big|\bm{s}_{0}=\bm{s},\bm{a}_{0}=\bm{a}\Big]\label{thedefinitionofglobalQfunction}
\end{align}
and
\begin{align}
Q^{\bm{\pi_{\theta}}}_{i}(\bm{s},\bm{a};\bm{\mu})=&\mathbb{E}_{\bm{\pi_{\theta}}}\Big[\sum_{t=0}^{\infty}\gamma^{t}\big(f_{i}(s_{i,t},a_{i,t})+\mu_{i}g_{i}(s_{i,t},a_{i,t})\big)\Big|\notag\\
&\bm{s}_{0}=\bm{s},\bm{a}_{0}=\bm{a}\Big].\label{thedefinitionoflocalQfunction}
\end{align}
Based on the above definitions, the relationship between the global and local
Lagrangian $Q$-functions {\color{blue}satisfies}
\begin{align}\label{thedecomposeofQfunction}
Q^{\bm{\pi_{\theta}}}(\bm{s},\bm{a};\bm{\mu})=\frac{1}{N}\sum_{i=1}^{N}Q^{\bm{\pi_{\theta}}}_{i}(\bm{s},\bm{a};\bm{\mu}).
\end{align}
\par
{\color{blue}We further define $Q^{\bm{\pi_\theta}}_{g,i}(\bm{s},\bm{a})$ as the local
constraint $Q$-function for agent $i$:}
\begin{align}
Q^{\bm{\pi_{\theta}}}_{g,i}(\bm{s},\bm{a})=&\mathbb{E}_{\bm{\pi_{\theta}}}\Big[\sum_{t=0}^{\infty}\gamma^{t}g_{i}(s_{i,t},a_{i,t}\big)\Big|\bm{s}_{0}=\bm{s},\bm{a}_{0}=\bm{a}\Big].\label{thedefinitionoflocalQfunctionconstraint}
\end{align}
Denote $d^{\bm{\pi_{\theta}}}_{\bm{\rho}}(\bm{s})$ as the discounted state visitation distribution of $\bm{s}\in\bm{\mathcal{S}}$ generated by the joint policy $\bm{\pi_{\theta}}$ and satisfy
\begin{align}\label{Thediscountedstatevisitationdistribution}
d^{\bm{\pi_{\theta}}}_{\bm{\rho}}(\bm{s})=(1-\gamma)\sum_{t=0}^{\infty}\gamma^{t}\mathrm{Pr}^{\bm{\pi_{\theta}}}(\bm{s}_{t}=\bm{s}|\bm{s}_{0}\sim\bm{\rho}),
\end{align}
where $\mathrm{Pr}^{\bm{\pi_{\theta}}}(\bm{s}_{t}=\bm{s}|\bm{s}_{0}\sim\bm{\rho})$ is the probability of occurrence of $\bm{s}_{t}=\bm{s}$ at time $t$ under joint policy $\bm{\pi_{\theta}}$ and initial state distribution $\bm{\rho}$.
\par
{\color{blue}Based on the above definitions, the coupled policy gradients of $\mathcal{L}(\bm{\theta},
\bm{\mu})$ with respect to $\theta_i$ and $\mu_i$ are given in the following lemma.}
\begin{lemma}\label{thelemmaofthepolicygradienttheoreminNMARLCP}
In CMARL problem (\ref{theproblemofCMARL}), for any joint policy $\bm{\pi_{\theta}}$, the gradients of $\mathcal{L}(\bm{\theta},\bm{\mu})$ with respect to $\theta_{i}$ and $\mu_{i}$ are respectively represented as
\begin{align}\label{thecoupledpolicygradienttheorem2}
\nabla_{\mu_{i}}\mathcal{L}(\bm{\theta},\bm{\mu})=\frac{1}{N}\big(G_{i}(\bm{\theta})-c_{i}\big)
\end{align}
and
\begin{align}\label{thecoupledpolicygradienttheorem}
\nabla_{\theta_{i}}\mathcal{L}(\bm{\theta},\bm{\mu})=&\frac{1}{1-\gamma}\mathbb{E}_{\bm{s}\sim d^{\bm{\pi_{\theta}}}_{\bm{\rho}},\bm{a}\sim\bm{\pi_{\theta}}}\Big[Q^{\bm{\pi_{\theta}}}(\bm{s},\bm{a};\bm{\mu})\notag\\
&\times\sum_{j\in\mathcal{N}^{E,\kappa_{p}}_{i}}\nabla_{\theta_{i}}\log\pi_{j}(a_{j}|s_{\mathcal{N}^{E,\kappa_{p}}_{j}},\theta_{j},\theta_{\mathcal{N}^{E,\kappa_{p}}_{j,-j}})\Big].
\end{align}
\end{lemma}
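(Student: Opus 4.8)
The plan is to treat the two gradients separately, since the $\mu_i$-gradient follows by direct differentiation while the $\theta_i$-gradient is a coupled-policy analogue of the classical policy gradient theorem. For (\ref{thecoupledpolicygradienttheorem2}), I would differentiate the Lagrangian (\ref{LagrangianfunctionofCMARL}) with respect to $\mu_i$ directly: since $F(\bm{\theta})$ and every constant $c_j$ are independent of $\mu_i$, and only the $i$-th summand of $\frac{1}{N}\sum_j \mu_j(G_j(\bm{\theta})-c_j)$ contains $\mu_i$, the derivative collapses to $\frac{1}{N}(G_i(\bm{\theta})-c_i)$, which is exactly the claimed expression.

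For (\ref{thecoupledpolicygradienttheorem}), the first step is to scalarize the constrained objective. Substituting (\ref{thelongtermreturninNMARLofagenti}) and (\ref{thelongtermreturninNMARLGG}) into (\ref{LagrangianfunctionofCMARL}) and collecting terms, I would write $\mathcal{L}(\bm{\theta},\bm{\mu}) = \mathbb{E}_{\bm{s}_0\sim\bm{\rho},\,\bm{a}_t\sim\bm{\pi_{\theta}}}\big[\sum_{t=0}^\infty \gamma^t r(\bm{s}_t,\bm{a}_t;\bm{\mu})\big] - \frac{1}{N}\sum_{j} \mu_j c_j$, where $r(\bm{s},\bm{a};\bm{\mu}) = \frac{1}{N}\sum_{j}(f_j(s_j,a_j)+\mu_j g_j(s_j,a_j))$ is the scalarized per-step reward whose associated state-action value is precisely $Q^{\bm{\pi_{\theta}}}(\bm{s},\bm{a};\bm{\mu})$ in (\ref{thedefinitionofglobalQfunction}). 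The constant $-\frac{1}{N}\sum_j \mu_j c_j$ is independent of $\bm{\theta}$ and vanishes under $\nabla_{\theta_i}$. Applying the log-derivative identity to the trajectory density $\mathrm{Pr}(\tau)=\bm{\rho}(\bm{s}_0)\prod_{t\geq0}\bm{\pi_{\theta}}(\bm{a}_t|\bm{s}_t)\bm{\mathcal{P}}(\bm{s}_{t+1}|\bm{s}_t,\bm{a}_t)$ and exploiting the causality property that each reward is uncorrelated with the score terms at strictly later time steps, the standard argument yields $\nabla_{\theta_i}\mathcal{L}(\bm{\theta},\bm{\mu}) = \frac{1}{1-\gamma}\mathbb{E}_{\bm{s}\sim d^{\bm{\pi_{\theta}}}_{\bm{\rho}},\,\bm{a}\sim\bm{\pi_{\theta}}}\big[Q^{\bm{\pi_{\theta}}}(\bm{s},\bm{a};\bm{\mu})\,\nabla_{\theta_i}\log\bm{\pi_{\theta}}(\bm{a}|\bm{s})\big]$, where the factor $\frac{1}{1-\gamma}$ is absorbed by the normalization in (\ref{Thediscountedstatevisitationdistribution}).

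The crux, and the only place where the coupled setting departs from the independent one, is the evaluation of the score $\nabla_{\theta_i}\log\bm{\pi_{\theta}}(\bm{a}|\bm{s})$. Using the product form $\bm{\pi_{\theta}}(\bm{a}|\bm{s})=\prod_{j=1}^N \pi_j(a_j|s_{\mathcal{N}^{E,\kappa_{p}}_{j}},\theta_j,\theta_{\mathcal{N}^{E,\kappa_{p}}_{j,-j}})$ gives $\nabla_{\theta_i}\log\bm{\pi_{\theta}}=\sum_{j=1}^N \nabla_{\theta_i}\log\pi_j$. In the independent case only the $j=i$ term survives, but under coupling the factor $\pi_j$ depends on $\theta_i$ whenever $i\in\mathcal{N}^{E,\kappa_{p}}_{j}$, so several terms are nonzero. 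I would argue that $\nabla_{\theta_i}\log\pi_j\equiv0$ unless $i\in\mathcal{N}^{E,\kappa_{p}}_{j}$, and then invoke the symmetry of the $\kappa_p$-hop neighborhood relation, $i\in\mathcal{N}^{E,\kappa_{p}}_{j}\Leftrightarrow j\in\mathcal{N}^{E,\kappa_{p}}_{i}$, to re-index the surviving terms as $\sum_{j\in\mathcal{N}^{E,\kappa_{p}}_{i}}\nabla_{\theta_i}\log\pi_j$, which produces (\ref{thecoupledpolicygradienttheorem}). The main obstacle I anticipate is making this re-indexing fully rigorous: it hinges on the neighborhood symmetry, which must be justified from the structure of $\mathcal{G}^{E}$ (and, for a directed environmental graph, may require care in how the $\kappa_p$-hop dependence of $\pi_j$ on $\theta_i$ is defined), together with a uniform-boundedness and dominated-convergence check so that interchanging $\nabla_{\theta_i}$ with the infinite-horizon sum and the expectation is valid.
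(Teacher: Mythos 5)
Your proposal is correct and follows essentially the same route as the paper: direct differentiation in $\mu_i$, then the policy gradient theorem applied to the scalarized Lagrangian reward, followed by decomposing $\nabla_{\theta_i}\log\bm{\pi_{\theta}}=\sum_{j}\nabla_{\theta_i}\log\pi_j$ and keeping only the terms where $\pi_j$ actually depends on $\theta_i$. The re-indexing subtlety you flag (that $\{j: i\in\mathcal{N}^{E,\kappa_p}_{j}\}=\mathcal{N}^{E,\kappa_p}_{i}$ requires symmetry of the $\kappa_p$-hop relation on $\mathcal{G}^{E}$) is real but is passed over silently in the paper's own proof as well, so your treatment is, if anything, slightly more careful than the original.
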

\par
Lemma~\ref{thelemmaofthepolicygradienttheoreminNMARLCP} provides the explicit expressions for the gradients of both the Lagrange multiplier and the coupled policy, with the corresponding proof detailed in Section~\ref{ProofofLemmathelemmaofthepolicygradienttheoreminNMARLCP}.

\section{DSPD algorithm}\label{DSPDalgorithm}
{\color{blue}In this section, we introduce the approximated coupled policy gradient and propose
the DSPD algorithm to address the CMARL problem under coupled policies.}
\subsection{Approximated policy gradient}
{\color{blue}To reduce the reliance on global state-action information, we adopt a truncation-based approach~\cite{QuCLDC2020} and design a novel neighbors' averaged Lagrangian $Q$-function for each agent $i$:}
\begin{align}\label{theaction-averagedQfunctionofagenti}
\widehat{Q^{\bm{\pi_{\theta}}}_{i}}(\bm{s},\bm{a};\bm{\mu})=&\mathbb{E}_{\bm{\pi_{\theta}}}\Big[\frac{1}{N}\sum^{\infty}_{t=0}\gamma^{t}\sum_{j\in\mathcal{N}^{E,\kappa+2\kappa_{p}}_{i}}\big(f_{j}(s_{j,t},a_{j,t})\notag\\
&+\mu_{j}g_{j}(s_{j,t},a_{j,t})\big)\Big|\bm{s}_{0}=\bm{s},\bm{a}_{0}=\bm{a}\Big].
\end{align}
{\color{blue}The $(\kappa+2\kappa_p)$-hop neighborhood in~(\ref{theaction-averagedQfunctionofagenti})
arises naturally from two components of the coupled policy gradient~(\ref{thecoupledpolicygradienttheorem}):
the log-policy gradients of $\kappa_p$-hop neighbors require state-action pairs from
their own $\kappa_p$-hop neighborhoods (i.e., the $2\kappa_p$-hop neighborhood of
agent $i$), and inspired by the $\kappa$-distance truncation in~\cite{QuCLDC2020},
an additional $\kappa$-hop range is introduced to approximate the global Lagrangian
$Q$-function, yielding the total range of $\kappa+2\kappa_p$ hops.}
Based on~(\ref{theaction-averagedQfunctionofagenti}), the approximated policy gradient
for agent $i$ is
\begin{align}\label{thepolicygradientapproxiamtionbynewfunction}
g^{\bm{\pi}_{\bm{\theta}},\bm{\mu}}_{app,i}=&\frac{1}{1-\gamma}\mathbb{E}_{\bm{s}\sim d^{\bm{\pi_{\theta}}}_{\bm{\rho}},\bm{a}\sim\bm{\pi_{\theta}}}\Big[\widehat{Q^{\bm{\pi_{\theta}}}_{i}}(\bm{s},\bm{a};\bm{\mu})\notag\\
&\times\!\!\sum_{j\in\mathcal{N}^{E,\kappa_{p}}_{i}}\!\!\nabla_{\theta_{i}}\log\pi_{j}(a_{j}|s_{\mathcal{N}^{E,\kappa_{p}}_{j}},\theta_{j},\theta_{\mathcal{N}^{E,\kappa_{p}}_{j,-j}})\Big],
\end{align}
which replaces the global Lagrangian $Q$-function in~(\ref{thecoupledpolicygradienttheorem})
with the neighbors' averaged Lagrangian $Q$-function, offering an approximation to
$\nabla_{\theta_i}\mathcal{L}(\bm{\theta},\bm{\mu})$.
{\color{blue}Since
$\widehat{Q^{\bm{\pi_\theta}}_i}(\bm{s},\bm{a};\bm{\mu})$ can be estimated from reward signals within the
$(\kappa+2\kappa_p)$-hop neighborhood and $g^{\bm{\pi_\theta},\bm{\mu}}_{app,i}$
from state-action pairs within the $2\kappa_p$-hop neighborhood, this provides a
theoretical basis for fully distributed implementation.}
\subsection{{\color{blue}Estimation of agent-specific parameters}}
{\color{blue}In the $m$-th iteration, let $\bm{\theta}_m$ and $\bm{\mu}_m$ denote
the joint policy parameter and Lagrangian multiplier. To reduce the exposure of
sensitive learned parameters, each agent exchanges true values only with immediate
neighbors over the time-varying learning networks $\{\mathcal{G}^L_m\}_{m\geq 1}$,
and maintains local estimates of other agents' parameters. Specifically, we define}
$\bm{\hat{\theta}}^i_m=\big((\hat{\theta}^i_{1,m})^\top,\cdots,(\hat{\theta}^i_{N,m})^\top\big)^\top$
and $\bm{\hat{\mu}}^i_m=\big((\hat{\mu}^i_{1,m})^\top,\cdots,(\hat{\mu}^i_{N,m})^\top\big)^\top$
as agent $i$'s estimates of other agents' policy parameters and Lagrange multipliers,
and $\bm{\breve{\theta}}^i_m=\big((\breve{\theta}^i_{1,m})^\top,\cdots,(\theta{\mu}^i_{N,m})^\top\big)^\top$, $\bm{\breve{\mu}}^i_m=\big((\breve{\mu}^i_{1,m})^\top,\cdots,(\breve{\mu}^i_{N,m})^\top\big)^\top$ as the corresponding
intermediate variables in the learning process.
\begin{assumption}\label{theassumptionofcommunicationdistance}
During the learning process, each agent $i$ receives state-action pairs
$\{(s_j,a_j)\}_{j\in\mathcal{N}_i^{E,2\kappa_p}}$ and reward signals
$\{f_j,g_j\}_{j\in\mathcal{N}_i^{E,\kappa+2\kappa_p}}$ through $\mathcal{G}^E$,
and exchanges intermediate variables $\{\bm{\breve{\theta}}^i_m,\bm{\breve{\mu}}^i_m\}$
and true values $\{\theta_{i,m},\mu_{i,m}\}$ with direct neighboring agents over
$\{\mathcal{G}^L_m(\mathcal{N},\mathcal{E}^L_m)\}_{m\geq 1}$.
\end{assumption}
\par
{\color{blue}Assumption~\ref{theassumptionofcommunicationdistance} introduces a dedicated learning network $\{\mathcal{G}^L_m\}_{m\geq 1}$, separate from the environmental network $\mathcal{G}^{E}$, motivated by practical systems such as networked control systems~\cite{XuShi2025} and digital twin-enabled systems~\cite{QiTao2018}, where the communication infrastructure for parameter exchange naturally differs from that for environmental interactions. This design limits the exposure of each agent's
decision-making variables to a minimal set of neighbors, thereby reducing vulnerability to potential network attacks compared to frameworks requiring global parameter broadcast~\cite{YingNIPS2024}.}

\subsection{DSPD algorithm designing}
Under Assumption~\ref{theassumptionofcommunicationdistance}, agents receive
states from theirs $\kappa_{p}$-hop environmental neighbors and execute joint policy $\bm{\hat{\pi}}_{\bm{\hat{\theta}}_{m}}=\prod_{i=1}^{N}\pi_{i}(a_{i}|s_{\mathcal{N}^{E,\kappa_{p}}_{i}},\hat{\theta}^{i}_{i,m},\hat{\theta}^{i}_{\mathcal{N}^{E,\kappa_{p}}_{i,-i},m})$.
{\color{blue}The DSPD algorithm proceeds in three steps.}
\par
\textbf{Step 1. Parameter estimation}: In the $m$-th iteration, {\color{blue}the estimates $\hat{\mu}^i_{j,m}$ and $\hat{\theta}^i_{j,m}$ are computed via the push-sum
protocol~\cite{NedicTAC2015}:}
\begin{subequations}\label{calculatetheestimates}
\begin{numcases}{}
\!\!p_{i,m+1}=\sum_{l\in\mathcal{N}^{L}_{i,m}}w^{L}_{il,m}p_{l,m}\label{thekeyupdateofpolicyparameterp}\\
\!\!\hat{\mu}^{i}_{j,m}=\frac{1}{p_{i,m+1}}\sum_{l\in\mathcal{N}^{L}_{i,m}}w^{L}_{il,m}\breve{\mu}^{l}_{j,m}\label{thekeyupdateofpolicyparametermu-3}\\
\!\!\hat{\theta}^{i}_{j,m}=\frac{1}{p_{i,m+1}}\sum_{l\in\mathcal{N}^{L}_{i,m}}w^{L}_{il,m}\breve{\theta}^{l}_{j,m},\label{thekeyupdateofpolicyparametertheta-3}
\end{numcases}
\end{subequations}
{\color{blue}with $\breve{\mu}^i_{j,1}=0$, $\breve{\theta}^i_{j,1}=\mathbf{0}_{d_{\theta_j}}$,
and $p_{i,1}=1$ for all $i,j\in\mathcal{N}$.}
\par
\textbf{Step 2. Lagrangian multiplier update}: Each agent $i$ estimates $\nabla_{\mu_i}\mathcal{L}(\bm{\theta}_m,\bm{\mu}_m)$ using $K_\mu$ batches of samples.
{\color{blue}For each batch $k\in\{1,\cdots,K_\mu\}$, agents execute
$\bm{\hat{\pi}}_{\bm{\hat{\theta}}_m}$ to generate a sample trajectory $\{\bm{s},\bm{a}\}_{0:T_1(k)}$ of length $T_1(k)\sim\mathrm{Geom}(1-\gamma^{1/2})$, and each agent $i$ computes the gradient estimate:}
\begin{align}\label{theapproximatedgradientinalgorithmdesignmu}
\hat{h}^{\bm{\hat{\pi}}_{\bm{\hat{\theta}}_{m}},\bm{\hat{\mu}}_{m}}_{app,i}(k)=\frac{1}{N}\Big(\sum_{t=0}^{T_{1}(k)}\gamma^{t/2}g_{i,t}-c_{i}\Big).
\end{align}
Based on (\ref{theapproximatedgradientinalgorithmdesignmu}), we define the averaged estimate $\bar{h}^{\bm{\hat{\pi}}_{\bm{\hat{\theta}}_{m}},\bm{\hat{\mu}}_{m}}_{app,i}$ as
\begin{align}\label{theapproximatedgradientinalgorithmdesignmubar}
\bar{h}^{\bm{\hat{\pi}}_{\bm{\hat{\theta}}_{m}},\bm{\hat{\mu}}_{m}}_{app,i}=\frac{1}{K_{\mu}}\sum_{k=1}^{K_{\mu}}\hat{h}^{\bm{\hat{\pi}}_{\bm{\hat{\theta}}_{m}},\bm{\hat{\mu}}_{m}}_{app,i}(k)
\end{align}
and update the Lagrangian multiplier $\mu_{i,m+1}$ according to
\begin{align}\label{theupdateoftruepolicyparametersmu}
\mu_{i,m+1}=P_{\mathcal{U}_{i}}(\mu_{i,m}-\eta_{\mu,m}\bar{h}^{\bm{\hat{\pi}}_{\bm{\hat{\theta}}_{m}},\bm{\hat{\mu}}_{m}}_{app,i}),
\end{align}
where $\eta_{\mu,m}$ is the learning rate of Lagrangian multiplier.
\par
{\color{blue}The intermediate variable of the Lagrangian multipliers $\breve{\mu}^{i}_{j,m+1}$ is then revised as}
\begin{align}
\!\!\breve{\mu}^{i}_{j,m+1}\!\!=\!\!\left\{
\begin{array}{ll}\label{thekeyupdateofpolicyparametermu-1}
\!\!\sum\limits_{l\in\mathcal{N}^{L}_{i,m}}\!\!\!w^{L}_{il,m}\breve{\mu}^{l}_{j,m},~\mathrm{if}~j\notin\mathcal{N}^{L}_{i,m}\\
\!\!\sum\limits_{l\in\mathcal{N}^{L}_{i,m}}\!\!\!w^{L}_{il,m}\breve{\mu}^{l}_{j,m}\!\!+\!\!w^{L}_{ij,m}N(\mu_{j,m+1}\!\!-\!\!\mu_{j,m}),\mathrm{o/w},
\end{array}\right.
\end{align}
{\color{blue}which ensures $(1/N)\sum_{i=1}^N\breve{\mu}^i_{j,m}=\mu_{j,m}$, guaranteeing average consensus convergence of $\hat{\mu}^i_{j,m}$ towards $\mu_{j,m}$.}
\par
\textbf{Step 3. Policy parameter update}: After Step~2, each agent $i$ first
updates $\bm{\hat{\mu}}^i_{m+1}$ via~(\ref{thekeyupdateofpolicyparametermu-3}).
Analogous to Step~2, each agent $i$ estimates $\nabla_{\theta_i}\mathcal{L}
(\bm{\theta}_m,\bm{\mu}_{m+1})$ using $K_\theta$ batches of samples.
{\color{blue}For each batch $k\in\{1,\cdots,K_\theta\}$, agents execute $\bm{\hat{\pi}}_{\bm{\hat{\theta}}_m}$ to generate two sample trajectories $\{\bm{s},\bm{a}\}_{0:T_2(k)}$ and
$\{\bm{s},\bm{a}\}_{T_2(k):T_2(k)+T_3(k)}$, where $T_2(k)\sim\mathrm{Geom}(1-\gamma)$ and $T_3(k)\sim\mathrm{Geom}(1-\gamma^{1/2})$, and each agent $i$ computes the approximated policy gradient estimate:}
\begin{align}\label{theapproximatedgradientinalgorithmdesign}
\hat{g}^{\bm{\hat{\pi}}_{\bm{\hat{\theta}}_{m}},\bm{\hat{\mu}}_{m+1}}_{app,i}(k)=&\frac{1}{1-\gamma}\hat{Q}^{\bm{\hat{\pi}}_{\bm{\hat{\theta}}_{m}},\bm{\hat{\mu}}_{m+1}}_{i,T_{2}(k)}\!\!\!\sum_{j\in\mathcal{N}^{E,\kappa_{p}}_{i}}\!\!\!\nabla_{\theta_{i}}\log\pi_{j}(a_{j,T_{2}(k)}|\notag\\
&s_{\mathcal{N}^{E,\kappa_{p}}_{j},T_{2}(k)},\hat{\theta}^{i}_{j,m},\hat{\theta}^{i}_{\mathcal{N}^{E,\kappa_{p}}_{j,-j},m}),
\end{align}
where $\hat{Q}^{\bm{\hat{\pi}}_{\bm{\hat{\theta}}_m},\bm{\hat{\mu}}_{m+1}}_{i,T_2(k)}$
is {\color{blue}a sample-based estimate} of the neighbors' averaged Lagrangian $Q$-function
$\widehat{Q^{\bm{\hat{\pi}}_{\bm{\hat{\theta}}_m}}_i}(\bm{s}_{T_2(k)},\bm{a}_{T_2(k)};
\bm{\hat{\mu}}^i_{m+1})$ in~(\ref{theaction-averagedQfunctionofagenti}), {\color{blue}computed
from the second trajectory $\{\bm{s},\bm{a}\}_{T_2(k):T_2(k)+T_3(k)}$:}
\begin{align}\label{theestimateofwildehatQi}
\hat{Q}^{\bm{\hat{\pi}}_{\bm{\hat{\theta}}_{m}},\bm{\hat{\mu}}_{m+1}}_{i,T_{2}(k)}=&\frac{1}{N}\sum_{t=0}^{T_{3}(k)}\gamma^{t/2}\sum_{j\in\mathcal{N}^{E,\kappa+2\kappa_{p}}_{i}}\big(f_{j,T_{2}(k)+t}+\notag\\
&\hat{\mu}^{i}_{j,m+1}g_{j,T_{2}(k)+t}\big).
\end{align}
Define the averaged policy gradient estimate for agent $i$ as
\begin{align}\label{theapproximatedgradientinalgorithmdesignthetabar}
\bar{g}^{\bm{\hat{\pi}}_{\bm{\hat{\theta}}_{m}},\bm{\hat{\mu}}_{m+1}}_{app,i}=\frac{1}{K_{\theta}}\sum_{k=1}^{K_{\theta}}\hat{g}^{\bm{\hat{\pi}}_{\bm{\hat{\theta}}_{m}},\bm{\hat{\mu}}_{m+1}}_{app,i}(k),
\end{align}
the policy parameter is updated according to
\begin{align}\label{theupdateoftruepolicyparameterstheta}
\theta_{i,m+1}=P_{\Theta_{i}}(\theta_{i,m}+\eta_{\theta,m}\bar{g}^{\bm{\hat{\pi}}_{\bm{\hat{\theta}}_{m}},\bm{\hat{\mu}}_{m+1}}_{app,i}),
\end{align}
where $\eta_{\theta,m}$ is the learning rate of policy parameter.
\begin{remark}
The use of $\bm{\hat{\mu}}_{m+1}$ instead of $\bm{\hat{\mu}}_{m}$ in $\bar{g}^{\bm{\hat{\pi}}_{\bm{\hat{\theta}}_{m}},\bm{\hat{\mu}}_{m+1}}_{app,i}$ in (\ref{theapproximatedgradientinalgorithmdesignthetabar}) is advantageous, as it facilitates the theoretical analysis of Theorem~\ref{thetheoremconvergenceofpolicygradient}, where the $\epsilon$-first-order stationary convergence is rigorously established.
\end{remark}
\par
{\color{blue}Similar to (\ref{thekeyupdateofpolicyparametermu-1}),
the intermediate variable for policy parameters is updated as}
\begin{align}
\!\!\breve{\theta}^{i}_{j,m+1}\!\!=\!\!\left\{
\begin{array}{ll}\label{thekeyupdateofpolicyparametertheta-1}
\!\!\!\sum\limits_{l\in\mathcal{N}^{L}_{i,m}}\!\!\!w^{L}_{il,m}\breve{\theta}^{l}_{j,m},~\mathrm{if}~j\notin\mathcal{N}^{L}_{i,m}\\
\!\!\!\sum\limits_{l\in\mathcal{N}^{L}_{i,m}}\!\!\!w^{L}_{il,m}\breve{\theta}^{l}_{j,m}\!\!+\!\!w^{L}_{ij,m}N(\theta_{j,m+1}\!\!-\!\!\theta_{j,m}),\mathrm{o/w},
\end{array}\right.
\end{align}
{\color{blue}which ensures $(1/N)\sum_{i=1}^N\breve{\theta}^i_{j,m}=\theta_{j,m}$ and average
consensus convergence of $\hat{\theta}^i_{j,m}$ to $\theta_{j,m}$.}
\par
To integrate Steps~1-3, the pseudocode of the DSCP algorithm is proposed in Algorithm~\ref{distributedscalableprimaldualAlgorithm}.
\par
As illustrated in Algorithm~\ref{distributedscalableprimaldualAlgorithm}, the DSPD
algorithm operates in a fully distributed manner involving two types of information:
{\color{blue}environmental data (state-action pairs and rewards) and sensitive learned information
(policy parameters and Lagrange multipliers).
Specifically, each agent $i$ receives state-action pairs and rewards through $\mathcal{G}^E$ (refer to Line~15), and exchanges estimated and true parameter values with direct neighbors through $\mathcal{G}^L_m$ (refer to Lines~4,~11, and~18).}
{\color{blue}The
information exchange structure is summarized in Table~\ref{table:information}.
\begin{table}[!h]
\centering
\caption{{\color{blue}Summary of information exchange of agent $i$ in the DSPD algorithm.}}
\label{table:information}
{\color{blue}\begin{tabular}{lcc}
\hline
\textbf{Information} & \textbf{Range} & \textbf{Network}\\
\hline
State-action pairs & $2\kappa_p$-hop & $\mathcal{G}^E$\\
Reward/constraint signals & $(\kappa+2\kappa_p)$-hop &
$\mathcal{G}^E$\\
True parameters $\{\theta_{i,m},\mu_{i,m}\}$ & 1-hop &
$\mathcal{G}^L_m$\\
Intermediate variables $\{\breve{\theta}^i_{j,m},
\breve{\mu}^i_{j,m}\}$ & 1-hop & $\mathcal{G}^L_m$\\
\hline
\end{tabular}}
\end{table}

\begin{remark}
Although agents execute the coupled policy using estimated parameters
$\{\hat{\theta}^i_{j,m}\}_{j\in\mathcal{N}}$, the coupling is genuine in two
aspects: (i) the state information $s_{\mathcal{N}_i^{E,\kappa_p}}$ is directly
accessible through $\mathcal{G}^E$; and (ii) as established in
Theorem~\ref{Convergencetheoreminparameterestimation}, $\lim_{m\to\infty}
\hat{\theta}^i_{j,m}=\theta_{j,m}$ at rate $\mathcal{O}(\frac{1}{\sqrt{m}})$, ensuring the
executed policy asymptotically converges to the intended coupled policy.
\end{remark}}

\begin{algorithm}[htb!]\label{distributedscalableprimaldualAlgorithm}
\SetAlgoLined
\textbf{Require:} $M,K_\mu,K_\theta\in\mathbb{N}^+$, learning rates $\eta_{\mu,m}$ and $\eta_{\theta,m}$\;
\textbf{Initialize:} $\mu_{j,1}=\breve{\mu}^i_{j,1}=0$, $\theta_{j,1}=\breve{\theta}^i_{j,1}=\mathbf{0}_{d_{\theta_i}}$, $\forall i,j\in\mathcal{N}$\;
\For{$m=1,\cdots,M$}{
    Compute $\{\hat{\mu}^i_{j,m}\}_{j\in\mathcal{N}}$ and $\{\hat{\theta}^i_{j,m}\}_{j\in\mathcal{N}}$ via~(\ref{calculatetheestimates})\;
    \tcc{Distributed update of Lagrangian multipliers}
    \For{$k=1,\cdots,K_\mu$}{
        {\color{blue}Draw} $\bm{s}_0\sim\bm{\rho}$, $T_1(k)\sim\mathrm{Geom}(1-\gamma^{1/2})$\;
        Agents execute $\bm{\hat{\pi}}_{\bm{\hat{\theta}}_m}$ to generate $\{\bm{s},\bm{a}\}_{0:T_1(k)}$\;
        Each agent $i$ computes $\hat{h}^{\bm{\hat{\pi}}_{\bm{\hat{\theta}}_m},\bm{\hat{\mu}}_m}_{app,i}(k)$ {\color{blue}via}~(\ref{theapproximatedgradientinalgorithmdesignmu})\;
    }
    Each agent $i$ computes $\bar{h}^{\bm{\hat{\pi}}_{\bm{\hat{\theta}}_m},\bm{\hat{\mu}}_m}_{app,i}$ {\color{blue}via}~(\ref{theapproximatedgradientinalgorithmdesignmubar}) and updates $\mu_{i,m+1}$ {\color{blue}via}~(\ref{theupdateoftruepolicyparametersmu})\;
    Each agent $i$ updates $\{\breve{\mu}^i_{j,m+1}\}_{j\in\mathcal{N}}$ via~(\ref{thekeyupdateofpolicyparametermu-1}) and $\{\hat{\mu}^i_{j,m+1}\}_{j\in\mathcal{N}}$ via~(\ref{thekeyupdateofpolicyparametermu-3})\;
    \tcc{Distributed update of policy parameters}
    \For{$k=1,\cdots,K_\theta$}{
        {\color{blue}Draw} $\bm{s}_0\sim\bm{\rho}$, $T_2(k)\sim\mathrm{Geom}(1-\gamma)$, $T_3(k)\sim\mathrm{Geom}(1-\gamma^{1/2})$\;
        Agents execute $\bm{\hat{\pi}}_{\bm{\hat{\theta}}_m}$ to generate $\{\bm{s},\bm{a}\}_{0:T_2(k)}$ and $\{\bm{s},\bm{a}\}_{T_2(k):T_2(k)+T_3(k)}$\;
        Each agent $i$ collects $(s_{\mathcal{N}_i^{E,2\kappa_p},T_2(k)},a_{\mathcal{N}_i^{E,2\kappa_p},T_2(k)})$ and $\{f_{\mathcal{N}_i^{E,\kappa+2\kappa_p},T_2(k)+t},g_{\mathcal{N}_i^{E,\kappa+2\kappa_p},T_2(k)+t}\}_{t=0}^{T_3(k)}$ to compute $\hat{g}^{\bm{\hat{\pi}}_{\bm{\hat{\theta}}_m},\bm{\hat{\mu}}_{m+1}}_{app,i}(k)$ {\color{blue}via}~(\ref{theapproximatedgradientinalgorithmdesign})\;
    }
    Each agent $i$ computes $\bar{g}^{\bm{\hat{\pi}}_{\bm{\hat{\theta}}_m},\bm{\hat{\mu}}_{m+1}}_{app,i}$ {\color{blue}via}~(\ref{theapproximatedgradientinalgorithmdesignthetabar}) and updates $\theta_{i,m+1}$ {\color{blue}via}~(\ref{theupdateoftruepolicyparameterstheta})\;
    Each agent $i$ updates $\{\breve{\theta}^i_{j,m+1}\}_{j\in\mathcal{N}}$ {\color{blue}via}~(\ref{thekeyupdateofpolicyparametertheta-1})\;
}
\textbf{Output:} $\{\bm{\theta}_m\}_{m=1}^{M+1}$ and $\{\bm{\mu}_m\}_{m=1}^{M+1}$.
\caption{The DSPD Algorithm}
\end{algorithm}

\section{Convergence results}\label{thesectionofanalysis}
{\color{blue}Before proceeding with the analysis, we introduce a set of standard assumptions~\cite{LuAAAI2021,YingNIPS2024,QuCLDC2020,Zhouzhaoyi2023,Zhangrunyu2022} as follows.}
\begin{assumption}\label{theassumptionofreward}
In CMARL problem (\ref{theproblemofCMARL}), there exists $R_f,R_g>0$ such that $|f_i(s_i,a_i)|\leq R_f$ and $|g_i(s_i,a_i)|\leq R_g$ for all $i\in\mathcal{N}$ and $(s_i,a_i)\in\mathcal{S}_i\times\mathcal{A}_i$.
\end{assumption}
\begin{assumption}\label{theassumptionofdistributionofs}
For any joint policy parameter $\bm{\theta}$, $d^{\bm{\pi_{\theta}}}_{\bm{\rho}}(\bm{s})$ and $\xi^{\bm{\pi_{\theta}}}_{\bm{\rho}}(\bm{s},\bm{a})$ satisfy  $\inf_{\bm{\theta}}\min_{\bm{s}\in\bm{\mathcal{S}}}d^{\bm{\pi_{\theta}}}_{\bm{\rho}}(\bm{s})>0$ and $\inf_{\bm{\theta}}\min_{(\bm{s},\bm{a})\in\bm{\mathcal{S}}\times\bm{\mathcal{A}}}\xi^{\bm{\pi_{\theta}}}_{\bm{\rho}}(\bm{s},\bm{a})>0$.
\end{assumption}
\par
Assumption~\ref{theassumptionofdistributionofs} indicates that the coupled policies explore all the states and actions with some positive probability, {\color{blue}which is a common assumption in the related literature~\cite{Zhouzhaoyi2023,Zhangrunyu2022}.}
\begin{assumption}\label{theassumptionofpolicy}
For any joint policy $\bm{\pi_\theta}$ and agent $i\in\mathcal{N}$, the local policy is differentiable with respect to $\{\theta_j\}_{j\in\mathcal{N}_i^{E,\kappa_p}}$.
For any $j\in\mathcal{N}_i^{E,\kappa_p}$, the gradient $\nabla_{\theta_j}\log\pi_i(a_i|s_{\mathcal{N}_i^{E,\kappa_p}},\theta_i,\theta_{\mathcal{N}_{i,-i}^{E,\kappa_p}})$ exists, satisfies $\|\nabla_{\theta_j}\log\pi_i(a_i|s_{\mathcal{N}_i^{E,\kappa_p}},\theta_i,\theta_{\mathcal{N}_{i,-i}^{E,\kappa_p}})\|\leq B$ for some $B>0$, and is $L$-Lipschitz continuous with respect to $\theta_j$.
\end{assumption}
\par
Assumption~\ref{theassumptionofpolicy} implies that the gradients of the log of the policy functions have bounded norms and exhibit Lipschitz continuity.
{\color{blue}Specifically, this assumption is satisfied by common policy parameterizations such as softmax or log-linear policies.}
\begin{assumption}\label{theassumptiontime-varyingnetworks}
The time-varying learning networks $\{\mathcal{G}^{L}_{m}\}_{m\geq1}$ are uniformly strongly connected.
\end{assumption}
\begin{assumption}\label{theassumptionoflearningrate}
In Algorithm~\ref{distributedscalableprimaldualAlgorithm}, the learning rates $\eta_{\theta,m}$ and $\eta_{\mu,m}$ are set such that $\eta_{\mu,m},\eta_{\theta,m}\sim{\color{blue}\mathcal{O}\big(\frac{1}{\sqrt{m}}\big)}$.
\end{assumption}
\par
{\color{blue}Assumptions~\ref{theassumptiontime-varyingnetworks}-\ref{theassumptionoflearningrate} are standard conditions on time-varying networks and learning rates essential for guaranteeing convergence.}
\subsection{Main results}
{\color{blue}Three main theoretical contributions of the proposed DSPD algorithm are establishes as follows.
Theorem~\ref{thelemmaofapproximatedpolicygradient} bounds the approximation error of the approximated coupled policy gradient, involving a novel joint function of both $\kappa$ and $\kappa_p$.
Theorem~\ref{Convergencetheoreminparameterestimation} establishes the convergence
of the distributed parameter estimates via a carefully ordered interleaved
analysis.
Theorem~\ref{thetheoremconvergenceofpolicygradient} establishes the
overall $\epsilon$-FOSP convergence of the proposed algorithm.}
\begin{theorem}[Approximated policy gradient error]\label{thelemmaofapproximatedpolicygradient}
Suppose Assumptions~\ref{theassumptionofFOSP} and~\ref{theassumptionofreward}-\ref{theassumptionofdistributionofs} hold.
For any joint policy $\bm{\pi_\theta}$, Lagrangian multiplier $\bm{\mu}$, and agent $i\in\mathcal{N}$, the error between the approximated policy gradient $g^{\bm{\pi_\theta},\bm{\mu}}_{app,i}$ and the exact gradient $\nabla_{\theta_i}\mathcal{L}(\bm{\theta},\bm{\mu})$ satisfies
\begin{align}\label{theerroroftruncatedpolicygradientapproximate}
&\|g^{\bm{\pi}_{\bm{\theta}},\bm{\mu}}_{app,i}-\nabla_{\theta_{i}}\mathcal{L}(\bm{\theta},\bm{\mu})\|\notag\\
\leq&\frac{2(R_{f}+\mu_{max}R_{g})BN}{(1-\gamma)^{2}}\gamma^{h(\kappa,\kappa_{p})+1},
\end{align}
where $h(\kappa,\kappa_{p})$ is defined as
\begin{align}\label{thefunctionofhkappap}
h(\kappa,\kappa_{p})=\left\{
\begin{array}{ll}
\frac{\kappa+1}{\kappa_{p}}-1,~\mathrm{if}~\mathrm{mod}(\kappa+1,\kappa_{p})=0\\
\lfloor\frac{\kappa+1}{\kappa_{p}}\rfloor,~~~\mathrm{otherwise}.
\end{array}\right.
\end{align}
\end{theorem}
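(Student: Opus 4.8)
The plan is to exploit that the approximated gradient $g^{\bm{\pi}_{\bm{\theta}},\bm{\mu}}_{app,i}$ in (\ref{thepolicygradientapproxiamtionbynewfunction}) and the exact gradient $\nabla_{\theta_{i}}\mathcal{L}(\bm{\theta},\bm{\mu})$ in (\ref{thecoupledpolicygradienttheorem}) share the \emph{identical} score factor $\sum_{j\in\mathcal{N}^{E,\kappa_{p}}_{i}}\nabla_{\theta_{i}}\log\pi_{j}$ and differ only through the $Q$-function. First I would write the difference as $\frac{1}{1-\gamma}\mathbb{E}_{\bm{s}\sim d^{\bm{\pi_{\theta}}}_{\bm{\rho}},\bm{a}\sim\bm{\pi_{\theta}}}[(\widehat{Q^{\bm{\pi_{\theta}}}_{i}}-Q^{\bm{\pi_{\theta}}})\sum_{j\in\mathcal{N}^{E,\kappa_{p}}_{i}}\nabla_{\theta_{i}}\log\pi_{j}]$, and then use the decomposition (\ref{thedecomposeofQfunction}) together with (\ref{theaction-averagedQfunctionofagenti}) to identify $Q^{\bm{\pi_{\theta}}}-\widehat{Q^{\bm{\pi_{\theta}}}_{i}}=\frac{1}{N}\sum_{l\notin\mathcal{N}^{E,\kappa+2\kappa_{p}}_{i}}Q^{\bm{\pi_{\theta}}}_{l}$, i.e.\ the discounted reward contributed by agents \emph{outside} the $(\kappa+2\kappa_{p})$-hop neighborhood. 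A direct triangle-inequality bound on this difference yields no decay at all (it leaves the full $\mathcal{O}(1/(1-\gamma))$ magnitude), so the whole point is to exploit the zero-mean property of the score function.

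Next, for a fixed far agent $l$ (with $d^{E}(i,l)>\kappa+2\kappa_{p}$) and a fixed neighbor $j\in\mathcal{N}^{E,\kappa_{p}}_{i}$, I would split the local Lagrangian $Q$-function along the time index as $Q^{\bm{\pi_{\theta}}}_{l}=Q^{<}_{l}+Q^{\geq}_{l}$ at an influence horizon $t^{\ast}(j,l)$ to be determined. The key structural claim is that the per-step reward $f_{l}+\mu_{l}g_{l}$ evaluated at any time $t<t^{\ast}(j,l)$ is conditionally independent of the initial action $a_{j}$ given the initial global state $\bm{s}$ and the remaining initial actions $\{a_{m}\}_{m\neq j}$. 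Because under the factorized policy $\bm{\pi_{\theta}}$ the actions are conditionally independent given $\bm{s}$, and because $\mathbb{E}_{a_{j}\sim\pi_{j}}[\nabla_{\theta_{i}}\log\pi_{j}(a_{j}|\cdot)]=0$, conditioning on $(\bm{s},\{a_{m}\}_{m\neq j})$ renders $Q^{<}_{l}$ a constant and forces $\mathbb{E}[Q^{<}_{l}\,\nabla_{\theta_{i}}\log\pi_{j}]=0$. Hence only the tail $Q^{\geq}_{l}$ survives, and the error reduces to controlling $\frac{1}{(1-\gamma)N}\sum_{l,j}\mathbb{E}[Q^{\geq}_{l}\nabla_{\theta_{i}}\log\pi_{j}]$.

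The crux, and the step I expect to be the main obstacle, is to pin down the horizon $t^{\ast}(j,l)$ by tracking how a perturbation of $a_{j}$ propagates through the coupled dynamics. Here I would argue that in one time step the perturbation spreads by one hop through the one-hop transition kernel $\mathcal{P}_{i}(\cdot|s_{\mathcal{N}^{E}_{i}},a_{i})$ and by an additional $\kappa_{p}$ hops through the coupled policies (a perturbed state at a node changes the actions of all its $\kappa_{p}$-hop neighbors), so that after $t$ steps the affected region has radius at most $t\kappa_{p}$; consequently the reward of agent $l$ at time $t$ is unaffected whenever $t\kappa_{p}<d^{E}(j,l)$, giving $t^{\ast}(j,l)=\lceil d^{E}(j,l)/\kappa_{p}\rceil$. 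Establishing this propagation bound rigorously, by induction on $t$ while carefully distinguishing the state- and action-radii and using $\kappa_{p}\geq1$, is the delicate part. Granting it, Assumptions~\ref{theassumptionofFOSP} and~\ref{theassumptionofreward} bound each per-step reward by $R_{f}+\mu_{max}R_{g}$, so $|Q^{\geq}_{l}|\leq(R_{f}+\mu_{max}R_{g})\sum_{t\geq t^{\ast}(j,l)}\gamma^{t}=\frac{R_{f}+\mu_{max}R_{g}}{1-\gamma}\gamma^{t^{\ast}(j,l)}$, while $\|\nabla_{\theta_{i}}\log\pi_{j}\|\leq B$ by the uniform score bound (Assumption~\ref{theassumptionofpolicy}).

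Finally I would convert the geometric decay into the claimed exponent. For $j\in\mathcal{N}^{E,\kappa_{p}}_{i}$ and $l\notin\mathcal{N}^{E,\kappa+2\kappa_{p}}_{i}$, the triangle inequality gives $d^{E}(j,l)\geq d^{E}(i,l)-d^{E}(i,j)\geq(\kappa+2\kappa_{p}+1)-\kappa_{p}=\kappa+\kappa_{p}+1$, whence $t^{\ast}(j,l)=\lceil d^{E}(j,l)/\kappa_{p}\rceil\geq\lceil(\kappa+1)/\kappa_{p}\rceil+1\geq h(\kappa,\kappa_{p})+1$, where the case distinction $\mathrm{mod}(\kappa+1,\kappa_{p})=0$ versus $\neq0$ in (\ref{thefunctionofhkappap}) is precisely what identifies $\lceil(\kappa+1)/\kappa_{p}\rceil$ with $h(\kappa,\kappa_{p})+1$. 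Summing the resulting per-pair bound $\frac{(R_{f}+\mu_{max}R_{g})B}{1-\gamma}\gamma^{h(\kappa,\kappa_{p})+1}$ over the at most $N$ indices $l$ and the $|\mathcal{N}^{E,\kappa_{p}}_{i}|\leq N$ indices $j$, and absorbing the prefactor $1/((1-\gamma)N)$, produces the stated bound $\frac{2(R_{f}+\mu_{max}R_{g})BN}{(1-\gamma)^{2}}\gamma^{h(\kappa,\kappa_{p})+1}$, with the constant $2$ arising from the loose accounting of the geometric tail and the ceiling.
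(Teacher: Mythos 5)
Your proposal is correct and reaches the stated bound (in fact with exponent $h(\kappa,\kappa_{p})+2$ and a constant of $1$ rather than $2$, since $d^{E}(j,l)\geq\kappa+\kappa_{p}+1$ gives $t^{\ast}\geq\lceil(\kappa+1)/\kappa_{p}\rceil+1=h(\kappa,\kappa_{p})+2$), but it takes a genuinely different route from the paper. The paper never splits the far agents' $Q$-functions in time. Instead it (i) proves an exponential decay property for each local Lagrangian $Q$-function under perturbations of the state--actions outside the $\kappa$-hop neighborhood (Lemma~\ref{thelemmaofexponentialdecayproperty}, resting on the same finite-speed-of-influence fact you use, asserted there without the induction you flag as delicate); (ii) defines truncated $Q$-functions $Q^{\bm{\pi_{\theta}}}_{tru,l}$ as conditional expectations under the visitation measure, which is where Assumption~\ref{theassumptionofdistributionofs} enters; (iii) proves the exact identity $g^{\bm{\pi}_{\bm{\theta}},\bm{\mu}}_{app,i}=g^{\bm{\pi}_{\bm{\theta}},\bm{\mu}}_{tru,i}$ (Lemma~\ref{thelemmaoftruncatederror}), using the zero-mean score to annihilate the far agents' truncated terms and the tower property to recover the near agents' exact terms; and (iv) concludes by a triangle inequality on $\frac{1}{N}\sum_{l}(Q^{\bm{\pi_{\theta}}}_{tru,l}-Q^{\bm{\pi_{\theta}}}_{l})$, so the error budget is charged to the truncation error of \emph{every} agent's $Q$-function. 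Your argument charges the error only to the time-tails of the \emph{far} agents' $Q$-functions, uses the zero-mean score directly on the head $Q^{<}_{l}$ rather than on a truncated surrogate, and dispenses with the conditional-expectation machinery and with Assumption~\ref{theassumptionofdistributionofs} altogether. Both approaches hinge on the same two structural facts---the zero-mean score and the $\kappa_{p}$-hops-per-step propagation of influence through the coupled dynamics and policies---but deploy them differently; yours is more elementary and slightly sharper, while the paper's truncated-$Q$ construction is reused elsewhere (it underlies the sampling estimator in the algorithm). One caution: your prose accounting of the propagation rate (``one hop through the transition kernel and an additional $\kappa_{p}$ hops through the policies'') would naively give radius $t(1+\kappa_{p})$; the correct induction---the state update uses only the agent's own action, so the state radius grows by $\max(1,\kappa_{p})=\kappa_{p}$ per step and the action radius is the state radius plus $\kappa_{p}$---is what yields $t\kappa_{p}$, and that induction must be written out for the argument to stand.
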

\par
The detailed analysis of Theorem~\ref{thelemmaofapproximatedpolicygradient} is provided in Section~\ref{AnalysisofTheoremthelemmaofapproximatedpolicygradient}.
Theorem~\ref{thelemmaofapproximatedpolicygradient} establishes an upper bound on the approximation error of the coupled policy gradient. {\color{blue}Compared to existing works~\cite{QuCLDC2020,YingNIPS2024} where the approximation error depends only on $\kappa$, the error bound here involves the joint function $h(\kappa,\kappa_p)$ of both $\kappa$ and $\kappa_p$, reflecting the additional approximation complexity introduced by the coupled policy structure.}
\par
For the constraints $\{c_{i}\}_{i\in\mathcal{N}}$ in problem (\ref{theproblemofCMARL}), we define $c_{max}=\max_{i\in\mathcal{N}}|c_{i}|$ and have the following theorem.
\begin{theorem}[Convergence of parameter estimations]\label{Convergencetheoreminparameterestimation}
Suppose Assumptions~\ref{theassumptionofFOSP}-\ref{theassumptionoflearningrate} hold.
In Algorithm~\ref{distributedscalableprimaldualAlgorithm}, for any $i,j\in\mathcal{N}$, there exists $\tilde{\mu}_{max}>\mu_{max}$, $L^{\mu}_{1}=\frac{R_{g}+c_{max}}{(1-\gamma^{1/2})N}$, and  $L^{\theta}_{1}=\frac{BN(R_{f}+\tilde{\mu}_{max}R_{g})}{(1-\gamma)(1-\gamma^{1/2})}$, such that
\par
(i) $\|\hat{h}^{\bm{\hat{\pi}}_{\bm{\hat{\theta}}_{m}},\bm{\hat{\mu}}_{m}}_{app,i}(k)\|\leq L^{\mu}_{1}$ for all $k\in\{1,\cdots,K_{\mu}\}$ and $\lim_{m\rightarrow\infty}\hat{\mu}^{i}_{j,m}=\mu_{j,m}$ with rate {\color{blue}$\mathcal{O}\big(\frac{1}{\sqrt{m}}\big)$};
\par
(ii) $\|\hat{g}^{\bm{\hat{\pi}}_{\bm{\hat{\theta}}_{m}},\bm{\hat{\mu}}_{m+1}}_{app,i}(k)\|\leq L^{\theta}_{1}$ for all $k\in\{1,\cdots,K_{\theta}\}$ and $\lim_{m\rightarrow\infty}\hat{\theta}^{i}_{j,m}=\theta_{j,m}$ with rate {\color{blue}$\mathcal{O}\big(\frac{1}{\sqrt{m}}\big)$}.
\end{theorem}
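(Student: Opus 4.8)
The plan is to treat the two claims in parallel, each splitting into a \emph{boundedness} part and a \emph{consensus-tracking} part, while respecting the logical dependence that (ii) relies on the uniform bound on the multiplier estimates furnished by (i). I would first establish the two gradient bounds $L^{\mu}_{1}$ and $L^{\theta}_{1}$, since these are purely algebraic consequences of the reward and policy assumptions and require none of the consensus machinery; I would then use these bounds to control the per-iteration increments of $\mu_{j,m}$ and $\theta_{j,m}$, and finally feed those increments into a push-sum tracking argument to obtain the $\mathcal{O}(1/m)$ rates.

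For the boundedness in (i), I would bound $\hat{h}_{app,i}(k)$ directly from (\ref{theapproximatedgradientinalgorithmdesignmu}): using $|g_{i,t}|\le R_{g}$ from Assumption~\ref{theassumptionofreward}, $|c_{i}|\le c_{max}$, and the geometric sum $\sum_{t\ge0}\gamma^{t/2}=1/(1-\gamma^{1/2})$, one gets $|\hat{h}_{app,i}(k)|\le \frac{1}{N}\big(\frac{R_{g}}{1-\gamma^{1/2}}+c_{max}\big)\le L^{\mu}_{1}$, where the last step uses $1-\gamma^{1/2}\le1$; the averaged quantity $\bar{h}_{app,i}$ then inherits the same bound. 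For (ii) I would bound $\hat{Q}_{i,T_{2}(k)}$ in (\ref{theestimateofwildehatQi}) using $|\mathcal{N}^{E,\kappa+2\kappa_{p}}_{i}|\le N$, $|f_{j}|\le R_{f}$, $|g_{j}|\le R_{g}$, and $|\hat{\mu}^{i}_{j,m+1}|\le\tilde{\mu}_{max}$, and bound $\|\sum_{j\in\mathcal{N}^{E,\kappa_{p}}_{i}}\nabla_{\theta_{i}}\log\pi_{j}\|\le NB$ via Assumption~\ref{theassumptionofpolicy}; multiplying by $1/(1-\gamma)$ yields $\|\hat{g}_{app,i}(k)\|\le L^{\theta}_{1}$. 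The existence of the uniform bound $\tilde{\mu}_{max}>\mu_{max}$ is itself a by-product of (i): since $\mu_{j,m}\in[0,\mu_{max}]$ and $\hat{\mu}^{i}_{j,m}$ tracks $\mu_{j,m}$ with an error that is $\mathcal{O}(1/m)$ and hence uniformly bounded, the estimates never leave a fixed interval $[0,\tilde{\mu}_{max}]$.

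For the consensus-tracking part I would exploit the invariant $(1/N)\sum_{i}\breve{\mu}^{i}_{j,m}=\mu_{j,m}$ (and its $\theta$-analogue), which one verifies by summing (\ref{thekeyupdateofpolicyparametermu-1}) over $i$ and using the column-stochasticity of $W^{L}_{m}$ together with $\sum_{i\in\mathcal{N}^{out}_{j,m}}w^{L}_{ij,m}=1$. The push-sum updates (\ref{thekeyupdateofpolicyparameterp})--(\ref{thekeyupdateofpolicyparametermu-3}) then compute each agent's debiased estimate of this running average. Invoking the standard push-sum result of~\cite{NedicTAC2015} under uniform strong connectivity (Assumption~\ref{theassumptiontime-varyingnetworks})---which guarantees both a uniform lower bound on the scaling weights $p_{i,m+1}$ and geometric contraction of the consensus error at some rate $\lambda\in(0,1)$---I would obtain
\[
|\hat{\mu}^{i}_{j,m}-\mu_{j,m}|\le C\lambda^{m}+C\sum_{s=1}^{m}\lambda^{m-s}|\mu_{j,s+1}-\mu_{j,s}|.
\]
Since the projection in (\ref{theupdateoftruepolicyparametersmu}) is nonexpansive, $|\mu_{j,s+1}-\mu_{j,s}|\le\eta_{\mu,s}L^{\mu}_{1}=\mathcal{O}(1/s)$ by Assumption~\ref{theassumptionoflearningrate}, and the same reasoning applied to (\ref{theupdateoftruepolicyparameterstheta}) gives $\|\theta_{j,s+1}-\theta_{j,s}\|=\mathcal{O}(1/s)$.

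The main obstacle, and the crux of the argument, is converting the convolution $\sum_{s=1}^{m}\lambda^{m-s}/s$ into the claimed $\mathcal{O}(1/m)$ rate. I would do this by splitting the sum at $s=\lfloor m/2\rfloor$: for $s\le m/2$ the geometric weight is at most $\lambda^{m/2}$ and the harmonic tail contributes only $\mathcal{O}(\lambda^{m/2}\log m)=o(1/m)$, while for $s>m/2$ one has $1/s\le 2/m$ and $\sum_{s}\lambda^{m-s}\le 1/(1-\lambda)$, so this block is $\mathcal{O}(1/m)$; adding the vanishing transient $C\lambda^{m}$ gives the stated rate, and the identical scheme handles $\hat{\theta}^{i}_{j,m}$. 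The only genuine care needed beyond this standard lemma is the bookkeeping that part~(i) must be completed before part~(ii), because the bound $L^{\theta}_{1}$ and the very execution of the policy step depend on the boundedness of the multiplier estimates established in~(i).
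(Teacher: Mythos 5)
Your proposal is correct and follows essentially the same route as the paper: the same algebraic bounds for $L^{\mu}_{1}$ and $L^{\theta}_{1}$, the same invariant $(1/N)\sum_{i}\breve{\mu}^{i}_{j,m}=\mu_{j,m}$ preserved by column-stochasticity, the same push-sum contraction yielding a geometric convolution against the $\mathcal{O}(1/m)$ increments from the nonexpansive projections, and the same observation that the uniform bound $\tilde{\mu}_{max}$ from part (i) is needed before part (ii). Your explicit split of the convolution $\sum_{s}\lambda^{m-s}/s$ at $s=\lfloor m/2\rfloor$ is in fact a more self-contained justification of the $\mathcal{O}(1/m)$ rate than the paper's appeal to Lemma 3.1 of Ram et al., which only delivers convergence to zero.
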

\par
The proof of Theorem~\ref{Convergencetheoreminparameterestimation} can be found in Section~\ref{proofofTheoremConvergencetheoreminparameterestimation}.
Theorem~\ref{Convergencetheoreminparameterestimation} demonstrates that all parameter estimates converge to their true values, ensuring the accuracy of the policies executed by the agents and providing reliable gradient information for the updates in~(\ref{theupdateoftruepolicyparametersmu}) and~(\ref{theupdateoftruepolicyparameterstheta}). {\color{blue}The convergence of both estimates is established through a carefully ordered interleaved analysis that resolves the circular dependency between the policy parameter and Lagrange multiplier estimation processes.}
\begin{theorem}[$\epsilon$-first-order stationary convergence]\label{thetheoremconvergenceofpolicygradient}
Suppose Assumptions~\ref{theassumptionofFOSP}-\ref{theassumptionoflearningrate} hold.
For every $\epsilon>0$ and $\delta>0$, let $\{(\bm{\theta}_{m},\bm{\mu}_{m})\}_{m=1}^{M}$ be the sequence generated by Algorithm~\ref{distributedscalableprimaldualAlgorithm} with {\color{blue}$M=\tilde{\mathcal{O}}(1/\epsilon^{2})$}, {\color{blue}$\eta_{\theta,m}=1/(2\sqrt{m}+L_{\theta\theta})$},  {\color{blue}$\eta_{\mu,m}=1/2\sqrt{m}$}, and $K_{\theta}=K_{\mu}=\log{(2/\delta)}/2\epsilon^{2}$, where $L_{\theta\theta}=LN^{\frac{3}{2}}(R_{f}+\tilde{\mu}_{max}R_{g})/(1-\gamma)^{2}+(1+\gamma)B^{2}N^{3}(R_{f}+\tilde{\mu}_{max}R_{g})/(1-\gamma)^{3}$ is Lipschitz constant.
Then, with probability $1-\delta$, there exists
$m^{*}\in\{1,\cdots,M\}$ such that
\begin{align}\label{theresultoftheorem3}
\mathcal{E}(\bm{\theta}_{m^{*}},\bm{\mu}_{m^{*}+1})=\mathcal{O}(\epsilon)+\mathcal{O}\big(\epsilon_{0}(\kappa,\kappa_{p})\big),
\end{align}
where $\epsilon_{0}(\kappa,\kappa_{p})=2(R_{f}+\tilde{\mu}_{max}R_{g})BN^{\frac{3}{2}}\gamma^{h(\kappa,\kappa_{p})+1}/(1-\gamma)^{2}$.
\end{theorem}
\par
The detailed analysis of Theorem~\ref{thetheoremconvergenceofpolicygradient} is provided in Section~\ref{AnalysisofTheoremthetheoremconvergenceofpolicygradient}.
Theorem~\ref{thetheoremconvergenceofpolicygradient} establishes  that, with high probability, our Algorithm~\ref{distributedscalableprimaldualAlgorithm} achieves an $\epsilon$-FOSP with an approximation error of $\mathcal{O}\big(\epsilon_{0}(\kappa,\kappa_{p})\big)$.
{\color{blue}This error term
reveals the following trade-offs: (i) increasing $\kappa$ leads to a larger $h(\kappa,\kappa_p)$, reducing the approximation error at the cost of increased communication overhead; (ii) decreasing $\kappa_p$ also leads to a larger $h(\kappa,\kappa_p)$, reducing the approximation error but limiting the ability of agents to capture inter-agent dependencies. Therefore, $\kappa$ and $\kappa_p$ should be chosen to balance communication overhead and approximation accuracy in practice.}

\section{Simulation}\label{SectionSimulations}
{\color{blue}In this section, we study the empirical performance of the proposed Algorithm~\ref{distributedscalableprimaldualAlgorithm} on a wireless access-control network with safety constraints~\cite{YingNIPS2024}.

\subsection{Wireless access-control network environment}
We consider a cooperative task involving $\mathcal{N}=\{1,\cdots,25\}$ transmitting
agents within a $5\times5$ wireless access-control network, which
comprises $16$ access points $Y=\{1,\cdots,16\}$, as illustrated in Fig.~\ref{fig:wireless_env}.
\begin{figure}[htbp]
    \centering
    \includegraphics[width=0.3\textwidth]{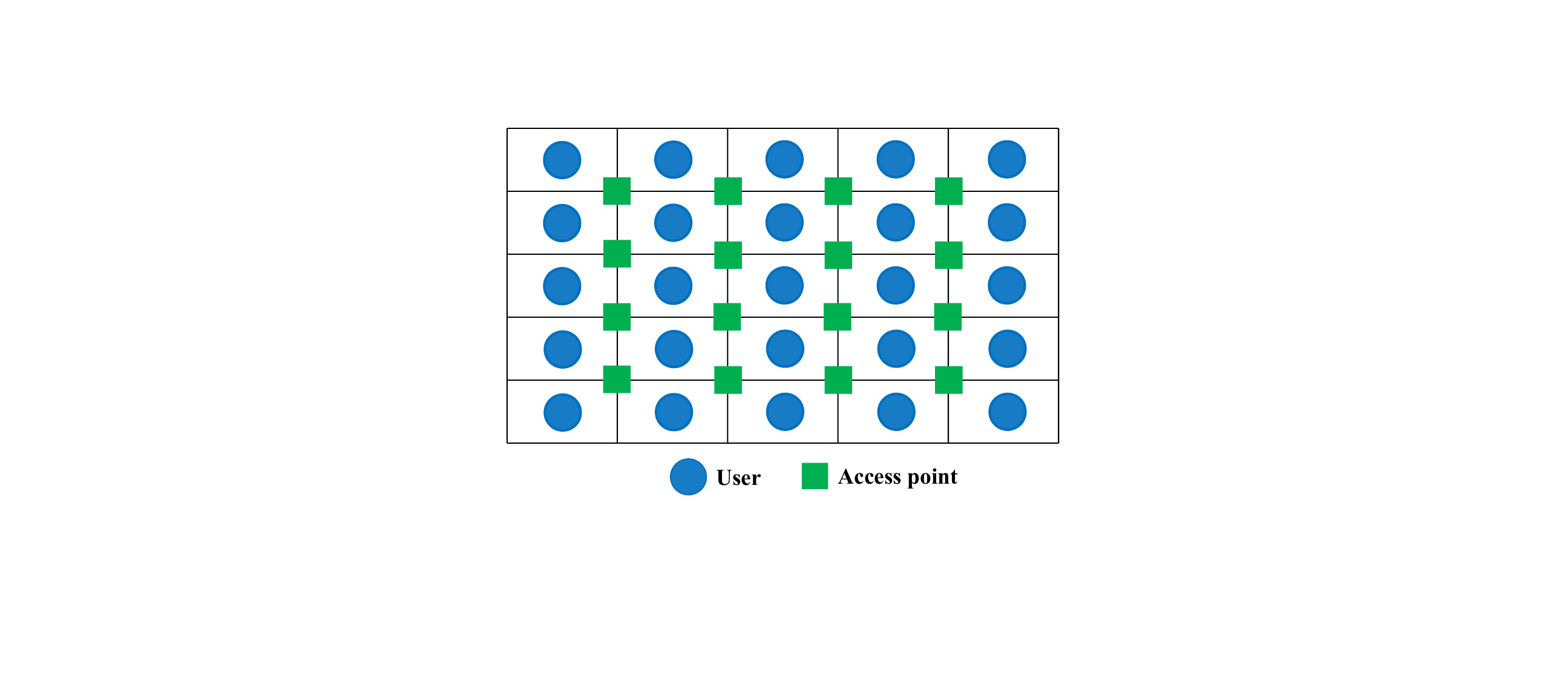}
    \caption{{\color{blue}Wireless access-control network.}}
    \label{fig:wireless_env}
\end{figure}
\par
Each agent $i\in\mathcal{N}$ is connected to a set $Y_i\subset Y$ of access points located at the corner of the block it resides in, and two agents are direct neighbors if they share a common access point, which naturally induces the environmental network
$\mathcal{G}^E$.
In every time slot, agent $i$ receives a new packet by deadline $d^{s}_{i}>0$ with probability $p^{new}_i\in(0,1)$, and can choose to send the earliest packet in its queue to an access
point $y\in Y_i$ or do nothing.
Agent $i$ receives a unit objective reward if and only if access point $y$ does not receive transmissions from other agents and successfully processes the packet, which occurs with probability $q_y\in(0,1)$.
Otherwise, the objective reward is zero.
In this environment, the state of each agent $i$ is a $d^{s}_i$-dimensional binary vector $s_i\in\mathcal{S}_{i}=\{0,1\}^{d^{s}_i}$, where the $k$-th entry takes the value $1$ when agent $i$ currently has a packet with $k$ time steps remaining until the deadline.
The action $a_i\in\mathcal{A}_i=Y_i\cup\{\mathrm{null}\}$ of agent $i$ specifies whether to send the earliest queued packet to an access point $y\in Y_i$ or to do nothing
(i.e., $a_i=\mathrm{null}$).
The safety constraint is a cumulative transmission budget with per-step constraint reward
$g_i(s_i,a_i)=-\mathds{1}_{\{a_i\neq\mathrm{null}\}}$.
Since the local transition dynamics and reward of each agent depend on the states and actions of its neighboring agents, the
problem exhibits strongly coupled state-action dynamics.

\subsection{Parameters setting}
The coupled policy $\pi_{i}(\cdot|s_{\mathcal{N}^{E,\kappa_{p}}_{i}}, \theta_{i}, \theta_{\mathcal{N}^{E,\kappa_{p}}_{i,-i}})$ of agent $i$ in the proposed Algorithm~\ref{distributedscalableprimaldualAlgorithm} adopts the following softmax form to capture the inter-agent dependencies:
\begin{align}
&\pi_{i}(a_{i}|s_{\mathcal{N}^{E,\kappa_{p}}_{i}}, \theta_{i}, \theta_{\mathcal{N}^{E,\kappa_{p}}_{i,-i}})\notag\\
=&\frac{\exp{(0.9*\theta_{i,s_{i},a_{i}}+\frac{0.1}{|\mathcal{N}^{E,\kappa_{p}}_{i}|-1}\sum_{j\in\mathcal{N}^{\kappa_{p}}_{i,-i}}\theta_{j,s_{i},a_{i}})}}{\sum_{a'_{i}}\exp{(0.9*\theta_{i,s_{i},a'_{i}}+\frac{0.1}{|\mathcal{N}^{E,\kappa_{p}}_{i}|-1}\sum_{j\in\mathcal{N}^{\kappa_{p}}_{i,-i}}\theta_{j,s_{i},a'_{i}})}},\label{policyexpress}
\end{align}
where $\theta_{i,s_{i},a_{i}}\in\mathbb{R}$ is the policy parameter of agent $i$ corresponding to local state-action $(s_{i},a_{i})$, and $\theta_{i}$ is the aggregation of $\theta_{i,s_{i},a_{i}}$ across all state-action pair $(s_{i},a_{i})\in\mathcal{S}_{i}\times\mathcal{A}_{i}$.
The parameter setting in Algorithm~\ref{distributedscalableprimaldualAlgorithm} and environment are configured as follows: $p^{new}_{i}=0.5$, $q_{y}=0.8$, $d^{s}_{i}=2$, $\gamma=0.9$, $c_{i}=-3.56$ for all $i\in\mathcal{N}$, $\kappa=\kappa_p=1$, $K_\mu=4$, $K_\theta=1$, and $\mu_{max}=50$.
Since the transmission success of each agent is directly determined by whether its immediate neighboring agents transmit simultaneously, setting $\kappa_p=1$ is sufficient to capture the inter-agent dependencies arising from the coupled state-action dynamics in this environment.
The learning networks among agents switch between
different topologies $\{0\rightarrow1, 2\rightarrow3,\cdots, 22\rightarrow23, 24\rightarrow0\}$ and $\{1\rightarrow2, 3\rightarrow4,\cdots, 23\rightarrow24\}$.
Further implementation details and hyperparameter
settings are available in our open-source code at \url{https://github.com/EricDMW/DSPD_New.git}.

\subsection{Simulation results}
To verify the effectiveness of our Algorithm~\ref{distributedscalableprimaldualAlgorithm}, we compare it with the scalable primal-dual actor-critic (SPDAC) algorithm~\cite{YingNIPS2024} that focuses on parameter-independent policies, and the MAPPO-Lagrangian (MAPPO-L) algorithm~\cite{MAPPO-L2021}. All algorithms are trained
under 16 different random seeds, and the results are shown in
Figs.~\ref{fig:wireless_return}-\ref{fig:wireless_estimation}.

\begin{figure}[htbp]
    \centering
    \subfigure[Objective return]{\includegraphics[width=0.24\textwidth]
    {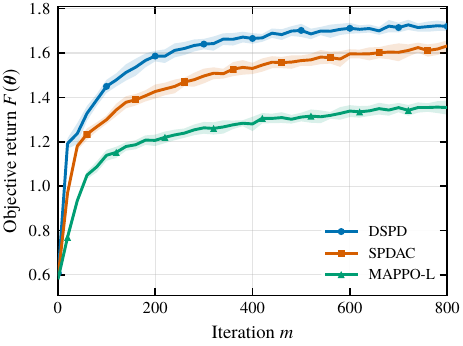}\label{fig:return_objective}}
    \subfigure[Average constraint return]{\includegraphics[width=0.24\textwidth]
    {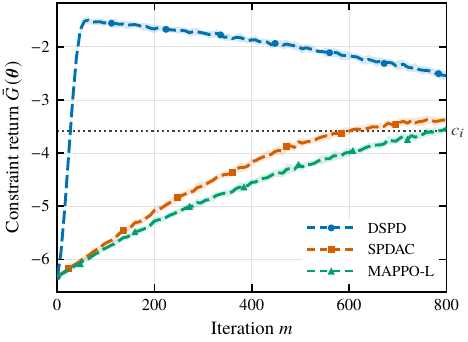}\label{fig:return_constraint}}
    \caption{{\color{blue}The evolution of
 the objective return and average constraint return of all algorithms.}}
    \label{fig:wireless_return}
\end{figure}

\par
Fig.~\ref{fig:return_objective} illustrates the evolution of the objective return
$F(\bm{\theta})$ of the all algorithms, where the solid lines represent the mean values across different random seeds and the shaded areas denote the variance.
As observed, our Algorithm~\ref{distributedscalableprimaldualAlgorithm} achieves the highest objective return, clearly separated from SPDAC and MAPPO-L with non-overlapping bands.
This result aligns with our hypothesis that a coupled policy framework is capable of achieving superior learning performance compared to an independent policy structure, as it more effectively captures the inter-agent dependencies.

\par
The constraint performances of all algorithms are presented in
Fig.~\ref{fig:return_constraint}, where all algorithms are initialized over the transmission budget and start infeasible.
The proposed DSPD algorithm restores feasibility first
and achieves with the largest safety margin above constraint threshold value, demonstrating its superior
constraint satisfaction performance.

\begin{figure}[htbp]
    \centering
    \subfigure[$\frac{1}{N^{2}}\sum_{i,j=1}^{N}\|\hat{\theta}^{i}_{j,m}-
    \theta_{j,m}\|$]{\includegraphics[width=0.24\textwidth]
    {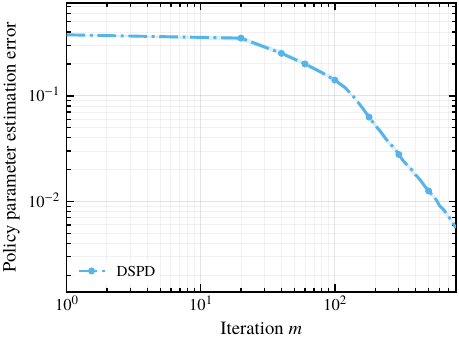}\label{fig:wireless_estimation_theta}}
    \subfigure[$\frac{1}{N^{2}}\sum_{i,j=1}^{N}|\hat{\mu}^{i}_{j,m}-
    \mu_{j,m}|$]{\includegraphics[width=0.24\textwidth]
    {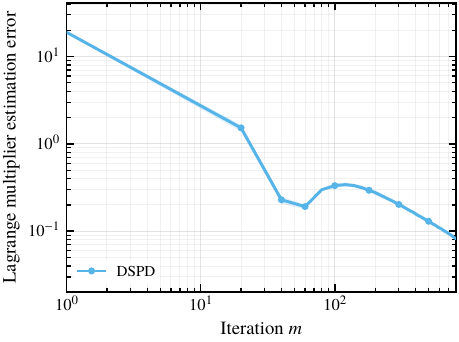}\label{fig:wireless_estimation_mu}}
    \caption{{\color{blue}The evolution of the estimation errors for policy parameters and Lagrange multipliers.}}
    \label{fig:wireless_estimation}
\end{figure}
\par
To further validate the accuracy of the proposed DSPD algorithm in parameter estimation,
Figs.~\ref{fig:wireless_estimation_theta}-\ref{fig:wireless_estimation_mu} present the evolution of the estimation errors for policy parameters and Lagrange
multipliers throughout the learning process.
As depicted, both
$\frac{1}{N^2}\sum_{i,j=1}^N\|\hat{\theta}^i_{j,m}-\theta_{j,m}\|$ and
$\frac{1}{N^2}\sum_{i,j=1}^N|\hat{\mu}^i_{j,m}-\mu_{j,m}|$ asymptotically converge
to zero, which aligns with the theoretical result established in
Theorem~\ref{Convergencetheoreminparameterestimation}.
}

{\color{blue}\subsection{Ablation experiment}
Note that in practice, we adopt the diminishing learning rate
$\eta_{\theta,m}\sim\mathcal{O}(\frac{\eta_{\theta}}{m})$ as required by Assumption~\ref{theassumptionoflearningrate}, where $\eta_\theta$ denotes the initial learning rate.
Fig.~\ref{fig:lr_ablation} presents the objective return $F(\bm{\theta})$ and
average constraint return $\bar{G}(\bm{\theta})$ under different initial learning
rates $\eta_\theta\in\{0.005, 0.02, 0.05, 0.15, 0.30\}$.
Both too small an initial rate (e.g., $\eta_\theta=0.005$) that converges slowly
to a lower plateau and too large an initial rate (e.g., $\eta_\theta=0.30$) that
leads to instability result in degraded objective performance.
An intermediate rate $\eta_\theta\in\{0.02, 0.05, 0.15\}$ achieves a good balance between convergence speed, objective performance, and constraint satisfaction.


\begin{figure}[htbp]
    \centering
    \subfigure[Objective return]{\includegraphics[width=0.24\textwidth]
    {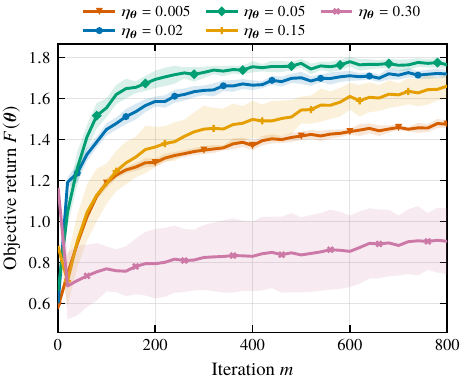}\label{fig:lr_ablation1}}
    \subfigure[Average constraint return]{\includegraphics[width=0.24\textwidth]
    {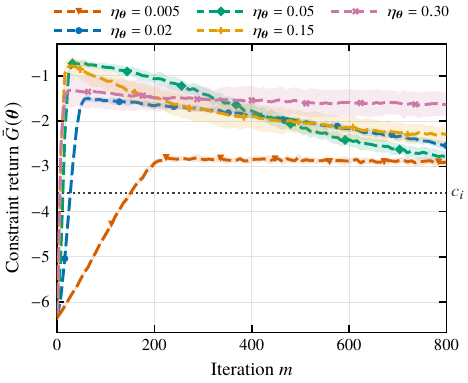}\label{fig:lr_ablation2}}
    \caption{{\color{blue}The evolution of
 the objective return and average constraint return of DSPD algorithm under different learning rates.}}
    \label{fig:lr_ablation}
\end{figure}
}


\section{Conclusions}\label{SectionVConclusions}
In this paper, we study the CMARL problem under coupled policies in a distributed setting without global environmental information. We propose a DSPD algorithm that operates solely based on a {\color{blue}prescribed} local neighborhood information, and establish theoretical guarantees showing that the proposed algorithm achieves an $\epsilon$-FOSP with an approximation error of $\mathcal{O}(\gamma^{\frac{\kappa+1}{\kappa_p}})$.
{\color{blue}Simulation results on a wireless access-control network validate the effectiveness of the proposed algorithm.}
\par
{\color{blue}In future research, several important directions remain open: (i) improving the convergence rate by designing more efficient learning rate schedules that simultaneously handle the coupled policy structure and distributed estimation
errors, with the goal of achieving a convergence rate comparable to that of existing independent policy-based algorithms; (ii) incorporating formal privacy mechanisms such as differential privacy into the time-varying learning network to enforce
rigorous privacy guarantees on transmitted parameters; (iii) extending the framework to handle truly lossy communication by replacing the uniformly strongly connected assumption with a probabilistic connectivity condition, and
(iv) extending the proposed framework to more complex settings
such as heterogeneous autonomous swarm environments with large-scale agents.}



\section{Analysis of Main Results}\label{AppendixProofofMainResults}
\subsection{Analysis of Theorem~\ref{thelemmaofapproximatedpolicygradient}}\label{AnalysisofTheoremthelemmaofapproximatedpolicygradient}
To facilitate a clear presentation of the proof for Theorem~\ref{thelemmaofapproximatedpolicygradient}, we introduce some definitions of auxiliary functions and the corresponding lemmas.
\par
Recalling the definitions of $Q^{\bm{\pi_{\theta}}}_{i}(\bm{s},\bm{a};\bm{\mu})$ in (\ref{thedefinitionoflocalQfunction}) and $h(\kappa,\kappa_{p})$ in (\ref{thefunctionofhkappap}), we establish the exponential decay property of CMARL under coupled policies as follows.
\begin{lemma}\label{thelemmaofexponentialdecayproperty}
The CMARL problem under coupled policies satisfies the $(\vartheta,\varrho)$-exponential decay property, i.e., for any joint policy $\bm{\pi_{\theta}}$, Lagrangian multiplier $\bm{\mu}$, agent $i\in\mathcal{N}$, $\kappa\geq0$,  $s_{\mathcal{N}^{E,\kappa}_{i}}\in\mathcal{S}_{\mathcal{N}^{E,\kappa}_{i}}$, $a_{\mathcal{N}^{E,\kappa}_{i}}\in\mathcal{A}_{\mathcal{N}^{E,\kappa}_{i}}$,
$s_{\mathcal{N}^{E,\kappa}_{-i}},s'_{\mathcal{N}^{E,\kappa}_{-i}}\in\mathcal{S}_{\mathcal{N}^{E,\kappa}_{-i}}$, and
$a_{\mathcal{N}^{E,\kappa}_{-i}},a'_{\mathcal{N}^{E,\kappa}_{-i}}\in\mathcal{A}_{\mathcal{N}^{E,\kappa}_{-i}}$,
the local Lagrangian $Q$-function $Q^{\bm{\pi_{\theta}}}_{i}(\bm{s},\bm{a};\bm{\mu})$ satisfies
\begin{align}\label{thesdefinitionofdecayproperty}
\Big|&Q^{\bm{\pi_{\theta}}}_{i}(s_{\mathcal{N}^{E,\kappa}_{i}},s_{\mathcal{N}^{E,\kappa}_{-i}},a_{\mathcal{N}^{E,\kappa}_{i}},a_{\mathcal{N}^{E,\kappa}_{-i}};\bm{\mu})\notag\\
&-Q^{\bm{\pi_{\theta}}}_{i}(s_{\mathcal{N}^{E,\kappa}_{i}},s'_{\mathcal{N}^{E,\kappa}_{-i}},a_{\mathcal{N}^{E,\kappa}_{i}},a'_{\mathcal{N}^{E,\kappa}_{-i}};\bm{\mu})\Big|\leq \vartheta\varrho^{\kappa+1},
\end{align}
where $\vartheta=\frac{2(R_{f}+\mu_{max}R_{g})}{(1-\gamma)\gamma^{\kappa-h(\kappa,\kappa_{p})}}$ and $\varrho=\gamma$.
\end{lemma}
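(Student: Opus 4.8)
The plan is to turn the claimed bound into a statement about how slowly the influence of the perturbed out-of-neighborhood data propagates to agent $i$'s own state--action pair, and then to control the tail of the discounted reward sum. First I would use that the local Lagrangian $Q$-function in \eqref{thedefinitionoflocalQfunction} accumulates only agent $i$'s own reward $r_{i,t}:=f_i(s_{i,t},a_{i,t})+\mu_i g_i(s_{i,t},a_{i,t})$. Conditioning the two $Q$-values on their respective initial data and applying the triangle inequality termwise in $t$ gives
\begin{align*}
&\Big|Q^{\bm{\pi_{\theta}}}_{i}(s_{\mathcal{N}^{E,\kappa}_{i}},s_{\mathcal{N}^{E,\kappa}_{-i}},a_{\mathcal{N}^{E,\kappa}_{i}},a_{\mathcal{N}^{E,\kappa}_{-i}};\bm{\mu})\\
&\quad-Q^{\bm{\pi_{\theta}}}_{i}(s_{\mathcal{N}^{E,\kappa}_{i}},s'_{\mathcal{N}^{E,\kappa}_{-i}},a_{\mathcal{N}^{E,\kappa}_{i}},a'_{\mathcal{N}^{E,\kappa}_{-i}};\bm{\mu})\Big|\\
&\le\sum_{t=0}^{\infty}\gamma^{t}\big|\mathbb{E}[r_{i,t}]-\mathbb{E}'[r_{i,t}]\big|,
\end{align*}
where $\mathbb{E}$ and $\mathbb{E}'$ denote the trajectory expectations under the two initial configurations, which agree on $(s_{\mathcal{N}^{E,\kappa}_{i}},a_{\mathcal{N}^{E,\kappa}_{i}})$ and differ only on the $\mathcal{N}^{E,\kappa}_{-i}$ block. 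Each summand now depends on the two configurations only through the marginal law of the local pair $(s_{i,t},a_{i,t})$.

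The crux is an influence-cone (coupling) argument that kills the early summands. I would place both trajectories on a common probability space, feeding every agent the same transition and action-sampling noise in the two runs, and prove by induction on $t$ that the marginal law of $(s_{i,t},a_{i,t})$ is determined by the initial data restricted to the $t\kappa_p$-hop neighborhood $\mathcal{N}^{E,t\kappa_p}_{i}$. The inductive step combines the two coupling mechanisms of the model: the state transition $\mathcal{P}_i(\cdot\,|\,s_{\mathcal{N}^{E}_i},a_i)$ reaches one hop, while the coupled policy $\pi_i(\cdot\,|\,s_{\mathcal{N}^{E,\kappa_p}_i},\theta_i,\theta_{\mathcal{N}^{E,\kappa_p}_{i,-i}})$ reaches $\kappa_p\ge1$ hops, so one additional backward time step enlarges the relevant region by at most $\kappa_p$ hops. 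Consequently, whenever $t\kappa_p\le\kappa$, equivalently $t\le\lfloor\kappa/\kappa_p\rfloor=h(\kappa,\kappa_p)$ in the sense of \eqref{thefunctionofhkappap}, the required data lies entirely inside the common $\kappa$-hop block, the two marginals coincide under the coupling, and the corresponding summand vanishes.

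For the remaining indices $t\ge h(\kappa,\kappa_p)+1$ I would bound each summand crudely: by Assumption~\ref{theassumptionofreward} together with $\mu_i\le\mu_{max}$ (since $\bm{\mu}\in\bm{\mathcal{U}}$), one has $|r_{i,t}|\le R_f+\mu_{max}R_g$, hence $|\mathbb{E}[r_{i,t}]-\mathbb{E}'[r_{i,t}]|\le 2(R_f+\mu_{max}R_g)$. Summing the geometric tail then yields
\begin{align*}
\sum_{t=h(\kappa,\kappa_p)+1}^{\infty}\gamma^{t}\,2(R_f+\mu_{max}R_g)=\frac{2(R_f+\mu_{max}R_g)}{1-\gamma}\,\gamma^{h(\kappa,\kappa_p)+1},
\end{align*}
which I would rewrite as $\vartheta\varrho^{\kappa+1}$ with $\varrho=\gamma$ and $\vartheta=\frac{2(R_f+\mu_{max}R_g)}{(1-\gamma)\gamma^{\kappa-h(\kappa,\kappa_p)}}$, matching the statement exactly. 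I expect the main obstacle to be the inductive hop-count itself: unlike the independent-policy setting, the bookkeeping must simultaneously track the one-hop state dependence and the $\kappa_p$-hop policy dependence, account for the fact that the initial actions are conditioned rather than policy-generated, and show that these combine to a net propagation speed of $\kappa_p$ hops per step, which is precisely what produces the threshold $h(\kappa,\kappa_p)$ in \eqref{thefunctionofhkappap} instead of a looser constant.
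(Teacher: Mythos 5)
Your proof follows essentially the same route as the paper's: the same termwise decomposition of the $Q$-difference, the same observation that the marginal law of $(s_{i,t},a_{i,t})$ is unaffected by the out-of-neighborhood perturbation for $t\le h(\kappa,\kappa_{p})$, and the same geometric tail bound using $|r_{i,t}|\le R_{f}+\mu_{max}R_{g}$. The only difference is that you spell out, via an influence-cone induction with propagation speed $\kappa_{p}$ hops per step, the key step that the paper merely asserts (namely $\xi_{1,t}=\xi_{2,t}$ for $t\le h(\kappa,\kappa_{p})$), so your write-up is, if anything, more complete.
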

\par
The detailed proof of Lemma~\ref{thelemmaofexponentialdecayproperty} is provided in
Section~\ref{ProofofLemmathelemmaofexponentialdecayproperty}.
Lemma~\ref{thelemmaofexponentialdecayproperty} shows that each agent $i$ can
approximate its local Lagrangian $Q$-function $Q^{\bm{\pi_{\theta}}}_{i}(\bm{s},
\bm{a};\bm{\mu})$ using only the state-action pairs $(s_{\mathcal{N}^{E,\kappa}_{i}},
a_{\mathcal{N}^{E,\kappa}_{i}})$ from its $\kappa$-hop neighbors, and the approximation
error decays exponentially as $\kappa$ increases. {\color{blue}While this result shares
a similar spirit with Lemma~3 in~\cite{QuCLDC2020}, the coupled policy structure
introduces a key difference: under independent policies in~\cite{QuCLDC2020}, the
decay exponent is $\kappa+1$, whereas under coupled policies, the policy coupling
accelerates the propagation of out-of-neighborhood differences through the
$\kappa_p$-hop neighborhood, yielding a tighter decay exponent $h(\kappa,\kappa_p)+1$.}
\par
Recalling the definition of $d^{\bm{\pi_{\theta}}}_{\bm{\rho}}(\bm{s})$ in (\ref{Thediscountedstatevisitationdistribution}),
we define $\xi^{\bm{\pi_{\theta}}}_{\bm{\rho}}(\bm{s},\bm{a})$ as the discounted state-action visitation distribution of $(\bm{s},\bm{a})\in\bm{\mathcal{S}}\times\bm{\mathcal{A}}$, which is given by
\begin{align}\label{thestationarydistributionofsa}
\xi^{\bm{\pi_{\theta}}}_{\bm{\rho}}(\bm{s},\bm{a})=d^{\bm{\pi_{\theta}}}_{\bm{\rho}}(\bm{s})\bm{\pi_{\theta}}(\bm{a}|\bm{s}).
\end{align}
Inspired by the exponential decay property in Lemma~\ref{thelemmaofexponentialdecayproperty} and the definition of $\xi^{\bm{\pi_{\theta}}}_{\bm{\rho}}(\bm{s},\bm{a})$ in (\ref{thestationarydistributionofsa}), we design a
class of truncated Lagrangian $Q$-functions as
\begin{align}\label{thetruncatedQfunctions}
&Q^{\bm{\pi_{\theta}}}_{tru,i}(s_{\mathcal{N}^{E,\kappa}_{i}},a_{\mathcal{N}^{E,\kappa}_{i}};\bm{\mu})\notag\\
=&\sum_{s_{\mathcal{N}^{E,\kappa}_{-i}},a_{\mathcal{N}^{E,\kappa}_{-i}}}\xi^{\bm{\pi_{\theta}}}_{\bm{\rho}}(s_{\mathcal{N}^{E,\kappa}_{-i}},a_{\mathcal{N}^{E,\kappa}_{-i}}|s_{\mathcal{N}^{E,\kappa}_{i}},a_{\mathcal{N}^{E,\kappa}_{i}})\notag\\
&\times Q^{\bm{\pi_{\theta}}}_{i}(s_{\mathcal{N}^{E,\kappa}_{i}},s_{\mathcal{N}^{E,\kappa}_{-i}},a_{\mathcal{N}^{E,\kappa}_{i}},a_{\mathcal{N}^{E,\kappa}_{-i}};\bm{\mu}),
\end{align}
where  $\xi^{\bm{\pi_{\theta}}}_{\bm{\rho}}(s_{\mathcal{N}^{E,\kappa}_{-i}},a_{\mathcal{N}^{E,\kappa}_{-i}}|s_{\mathcal{N}^{E,\kappa}_{i}},a_{\mathcal{N}^{E,\kappa}_{i}})$ denotes the weight coefficient and is represented as
\begin{align}\label{thestationarydistributionof-s-a}
&\xi^{\bm{\pi_{\theta}}}_{\bm{\rho}}(s_{\mathcal{N}^{E,\kappa}_{-i}},a_{\mathcal{N}^{E,\kappa}_{-i}}|s_{\mathcal{N}^{E,\kappa}_{i}},a_{\mathcal{N}^{E,\kappa}_{i}})\notag\\
=&\frac{\xi^{\bm{\pi_{\theta}}}_{\bm{\rho}}(s_{\mathcal{N}^{E,\kappa}_{i}},s_{\mathcal{N}^{E,\kappa}_{-i}},a_{\mathcal{N}^{E,\kappa}_{i}},a_{\mathcal{N}^{E,\kappa}_{-i}})}{\sum_{s'_{\mathcal{N}^{E,\kappa}_{-i}},a'_{\mathcal{N}^{E,\kappa}_{-i}}}\xi^{\bm{\pi_{\theta}}}_{\bm{\rho}}(s_{\mathcal{N}^{E,\kappa}_{i}},s'_{\mathcal{N}^{E,\kappa}_{-i}},a_{\mathcal{N}^{E,\kappa}_{i}},a'_{\mathcal{N}^{E,\kappa}_{-i}})}.
\end{align}
By definition of $\xi^{\bm{\pi_{\theta}}}_{\bm{\rho}}(s_{\mathcal{N}^{E,\kappa}_{-i}},a_{\mathcal{N}^{E,\kappa}_{-i}}|s_{\mathcal{N}^{E,\kappa}_{i}},a_{\mathcal{N}^{E,\kappa}_{i}})$ in (\ref{thestationarydistributionof-s-a}) and Assumption~\ref{theassumptionofdistributionofs}, it is obvious that the weight coefficient $\xi^{\bm{\pi_{\theta}}}_{\bm{\rho}}(s_{\mathcal{N}^{E,\kappa}_{-i}},a_{\mathcal{N}^{E,\kappa}_{-i}}|s_{\mathcal{N}^{E,\kappa}_{i}},a_{\mathcal{N}^{E,\kappa}_{i}})$ is non-negative and satisfies
\begin{align}\label{thesumofweight}
\sum_{s'_{\mathcal{N}^{E,\kappa}_{-i}},a'_{\mathcal{N}^{E,\kappa}_{-i}}}\!\!\!\!\xi^{\bm{\pi_{\theta}}}_{\bm{\rho}}(s'_{\mathcal{N}^{E,\kappa}_{-i}},a'_{\mathcal{N}^{E,\kappa}_{-i}}|s_{\mathcal{N}^{E,\kappa}_{i}},a_{\mathcal{N}^{E,\kappa}_{i}})=1.
\end{align}
Based on Lemma~\ref{thelemmaofexponentialdecayproperty} and (\ref{thesumofweight}), we further obtain
\begin{align}\label{thepropertyoftruncatedlagrangianQfunction}
&\Big|Q^{\bm{\pi_{\theta}}}_{tru,i}(s_{\mathcal{N}^{E,\kappa}_{i}},a_{\mathcal{N}^{E,\kappa}_{i}};\bm{\mu})-Q^{\bm{\pi_{\theta}}}_{i}(\bm{s},\bm{a};\bm{\mu})\Big|\notag\\
\leq&\frac{2(R_{f}+\mu_{max}R_{g})}{1-\gamma}\gamma^{h(\kappa,\kappa_{p})+1}.
\end{align}
\par
By using the truncated Lagrangian $Q$-function $Q^{\bm{\pi_{\theta}}}_{tru,i}(s_{\mathcal{N}^{E,\kappa}_{i}},a_{\mathcal{N}^{E,\kappa}_{i}};\bm{\mu})$
in (\ref{thetruncatedQfunctions}), we design a truncated policy gradient $g^{\bm{\pi}_{\bm{\theta}},\bm{\mu}}_{tru,i}$ of agent $i$ as
\begin{align}
g^{\bm{\pi}_{\bm{\theta}},\bm{\mu}}_{tru,i}=&\frac{1}{1-\gamma}\mathbb{E}_{\bm{s}\sim d^{\bm{\pi_{\theta}}}_{\bm{\rho}},\bm{a}\sim\bm{\pi_{\theta}}}\Big[\frac{1}{N}\sum_{i=1}^{N}Q^{\bm{\pi_{\theta}}}_{tru,i}(s_{\mathcal{N}^{E,\kappa}_{i}},a_{\mathcal{N}^{E,\kappa}_{i}};\bm{\mu})\notag\\
&\times\sum_{j\in\mathcal{N}^{E,\kappa_{p}}_{i}}\nabla_{\theta_{i}}\log\pi_{j}(a_{j}|s_{\mathcal{N}^{E,\kappa_{p}}_{j}},\theta_{j},\theta_{\mathcal{N}^{E,\kappa_{p}}_{j,-j}})\Big].\label{theequationoftruncatedpolicygradient}
\end{align}
Note that (\ref{theequationoftruncatedpolicygradient})
is constructed to estimate $\nabla_{\theta_{i}}\mathcal{L}(\bm{\theta},\bm{\mu})$ by substituting the global Lagrangian $Q$-function $Q^{\bm{\pi_{\theta}}}(\bm{s},\bm{a};\bm{\mu})$ with $(1/N)\sum_{i=1}^{N}Q^{\bm{\pi_{\theta}}}_{tru,i}(s_{\mathcal{N}^{E,\kappa}_{i}},a_{\mathcal{N}^{E,\kappa}_{i}};\bm{\mu})$.
Based on the definition of $Q^{\bm{\pi_{\theta}}}_{tru,i}(s_{\mathcal{N}^{E,\kappa}_{i}}, a_{\mathcal{N}^{E,\kappa}_{i}}; \bm{\mu})$ in (\ref{thetruncatedQfunctions}),
we can derive the following lemma.
\begin{lemma}\label{thelemmaoftruncatederror}
In the CMARL problem with the coupled policies, for any agent $i$, joint policy $\bm{\pi}_{\bm{\theta}}$ and Lagrangian multiplier $\bm{\mu}$, we have
\begin{align}\label{therelationshipbetweentwofunctions}
g^{\bm{\pi}_{\bm{\theta}},\bm{\mu}}_{app,i}=g^{\bm{\pi}_{\bm{\theta}},\bm{\mu}}_{tru,i}.
\end{align}
\end{lemma}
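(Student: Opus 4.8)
The plan is to rewrite both $g^{\bm{\pi}_{\bm{\theta}},\bm{\mu}}_{app,i}$ and $g^{\bm{\pi}_{\bm{\theta}},\bm{\mu}}_{tru,i}$ as expectations of a weighted sum of the \emph{local} Lagrangian $Q$-functions $\{Q^{\bm{\pi_{\theta}}}_{l}\}$ paired with the common score $\psi_{i}:=\sum_{j\in\mathcal{N}^{E,\kappa_{p}}_{i}}\nabla_{\theta_{i}}\log\pi_{j}$, and then to match the two sums term by term. First I would use the reward-additivity in (\ref{theaction-averagedQfunctionofagenti}) to write $\widehat{Q^{\bm{\pi_{\theta}}}_{i}}=\tfrac{1}{N}\sum_{l\in\mathcal{N}^{E,\kappa+2\kappa_{p}}_{i}}Q^{\bm{\pi_{\theta}}}_{l}$, and then read off from (\ref{thetruncatedQfunctions})--(\ref{thestationarydistributionof-s-a}) that $Q^{\bm{\pi_{\theta}}}_{tru,l}$ is exactly the conditional expectation of $Q^{\bm{\pi_{\theta}}}_{l}$ given $(s_{\mathcal{N}^{E,\kappa}_{l}},a_{\mathcal{N}^{E,\kappa}_{l}})$ under $\xi^{\bm{\pi_{\theta}}}_{\bm{\rho}}$. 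With these two observations, the claim $g^{\bm{\pi}_{\bm{\theta}},\bm{\mu}}_{app,i}=g^{\bm{\pi}_{\bm{\theta}},\bm{\mu}}_{tru,i}$ reduces to the identity $\sum_{l\in\mathcal{N}^{E,\kappa+2\kappa_{p}}_{i}}\mathbb{E}_{\xi^{\bm{\pi_{\theta}}}_{\bm{\rho}}}[(Q^{\bm{\pi_{\theta}}}_{l}-Q^{\bm{\pi_{\theta}}}_{tru,l})\psi_{i}]=0$, together with the vanishing of all truncated terms carried by agents outside the aggregation neighborhood.

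The two workhorses would be the zero-mean property of the score and the tower property of conditional expectation. For the former, since $\bm{\pi_{\theta}}(\bm{a}|\bm{s})=\prod_{m}\pi_{m}$ factorizes across agents and $\sum_{a_{j}}\pi_{j}\nabla_{\theta_{i}}\log\pi_{j}=\nabla_{\theta_{i}}\sum_{a_{j}}\pi_{j}=0$, any factor of the integrand not depending on $a_{j}$ is annihilated when paired with $\nabla_{\theta_{i}}\log\pi_{j}$ and integrated over $a_{j}$. Because $Q^{\bm{\pi_{\theta}}}_{tru,l}$ depends on the actions only through $a_{\mathcal{N}^{E,\kappa}_{l}}$, this shows $\mathbb{E}_{\xi^{\bm{\pi_{\theta}}}_{\bm{\rho}}}[Q^{\bm{\pi_{\theta}}}_{tru,l}\nabla_{\theta_{i}}\log\pi_{j}]=0$ whenever $j\notin\mathcal{N}^{E,\kappa}_{l}$; summing over $j\in\mathcal{N}^{E,\kappa_{p}}_{i}$ and using the triangle inequality for the graph distance $d(i,l)\le d(i,j)+d(j,l)$ collapses the all-$N$ sum defining $g^{\bm{\pi}_{\bm{\theta}},\bm{\mu}}_{tru,i}$ onto $l\in\mathcal{N}^{E,\kappa+\kappa_{p}}_{i}\subseteq\mathcal{N}^{E,\kappa+2\kappa_{p}}_{i}$, so no surviving truncated term falls outside the aggregation neighborhood used by $g^{\bm{\pi}_{\bm{\theta}},\bm{\mu}}_{app,i}$. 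For the surviving agents I would then invoke the defining relation $\mathbb{E}_{\xi^{\bm{\pi_{\theta}}}_{\bm{\rho}}}[Q^{\bm{\pi_{\theta}}}_{tru,l}\phi]=\mathbb{E}_{\xi^{\bm{\pi_{\theta}}}_{\bm{\rho}}}[Q^{\bm{\pi_{\theta}}}_{l}\phi]$, valid for every $\phi$ measurable with respect to $(s_{\mathcal{N}^{E,\kappa}_{l}},a_{\mathcal{N}^{E,\kappa}_{l}})$, with $\phi=\nabla_{\theta_{i}}\log\pi_{j}$, to turn each truncated term back into the corresponding untruncated term appearing in $g^{\bm{\pi}_{\bm{\theta}},\bm{\mu}}_{app,i}$.

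The delicate point, and the step I expect to be the main obstacle, is precisely the measurability required in the last conversion. Under coupled policies the factor $\nabla_{\theta_{i}}\log\pi_{j}$ depends not only on $a_{j}$ but on the entire $\kappa_{p}$-hop state neighborhood $s_{\mathcal{N}^{E,\kappa_{p}}_{j}}$, so it is $(s_{\mathcal{N}^{E,\kappa}_{l}},a_{\mathcal{N}^{E,\kappa}_{l}})$-measurable only when $\mathcal{N}^{E,\kappa_{p}}_{j}\subseteq\mathcal{N}^{E,\kappa}_{l}$, i.e. when $d(j,l)\le\kappa-\kappa_{p}$, whereas the truncated score can be nonzero out to $d(j,l)\le\kappa$. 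The $2\kappa_{p}$ slack deliberately built into the aggregation radius $\kappa+2\kappa_{p}$ is what is meant to absorb this gap, so the heart of the argument is the bookkeeping showing that the ``annulus'' contributions left over from the non-measurable pairs cancel exactly against the extra shell $\mathcal{N}^{E,\kappa+2\kappa_{p}}_{i}\setminus\mathcal{N}^{E,\kappa+\kappa_{p}}_{i}$ that $g^{\bm{\pi}_{\bm{\theta}},\bm{\mu}}_{app,i}$ retains but $g^{\bm{\pi}_{\bm{\theta}},\bm{\mu}}_{tru,i}$ drops.

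Concretely, I would organize the remaining work as a double sum over $(l,j)$ grouped by the graph distance $d(j,l)$, and verify shell by shell that the $a_{j}$-dependent part of $\sum_{l}Q^{\bm{\pi_{\theta}}}_{l}$ detected by $\nabla_{\theta_{i}}\log\pi_{j}$ agrees with that of $\sum_{l}Q^{\bm{\pi_{\theta}}}_{tru,l}$. The conditional-expectation representation of $Q^{\bm{\pi_{\theta}}}_{tru,l}$ through the weight coefficients $\xi^{\bm{\pi_{\theta}}}_{\bm{\rho}}(s_{\mathcal{N}^{E,\kappa}_{-l}},a_{\mathcal{N}^{E,\kappa}_{-l}}\mid s_{\mathcal{N}^{E,\kappa}_{l}},a_{\mathcal{N}^{E,\kappa}_{l}})$ in (\ref{thestationarydistributionof-s-a}), combined with the factorized, zero-mean structure of the score, are the tools I would lean on to make each distance shell match; establishing this exact cancellation, rather than any single inequality, is where I anticipate the real effort lies.
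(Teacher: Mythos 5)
Your overall route is the same as the paper's: rename the dummy index in (\ref{theequationoftruncatedpolicygradient}) to $l$, split $\sum_{l=1}^{N}Q^{\bm{\pi_{\theta}}}_{tru,l}$ into the agents inside and outside $\mathcal{N}^{E,\kappa+2\kappa_{p}}_{i}$, annihilate the outside terms with $\sum_{a_{j}}\nabla_{\theta_{i}}\pi_{j}=0$ (this is the paper's (\ref{thetruncatederror3-1})), and convert each surviving $Q^{\bm{\pi_{\theta}}}_{tru,l}$ back into $Q^{\bm{\pi_{\theta}}}_{l}$ through the conditional-expectation representation (\ref{thetruncatedQfunctions})--(\ref{thestationarydistributionof-s-a}), finally identifying $\widehat{Q^{\bm{\pi_{\theta}}}_{i}}=\frac{1}{N}\sum_{l\in\mathcal{N}^{E,\kappa+2\kappa_{p}}_{i}}Q^{\bm{\pi_{\theta}}}_{l}$. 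These are exactly the ingredients of the paper's proof.

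As a proof, however, your write-up is incomplete, and the gap sits exactly where you place it. The conversion $\mathbb{E}_{\xi^{\bm{\pi_{\theta}}}_{\bm{\rho}}}[Q^{\bm{\pi_{\theta}}}_{tru,l}\,\phi]=\mathbb{E}_{\xi^{\bm{\pi_{\theta}}}_{\bm{\rho}}}[Q^{\bm{\pi_{\theta}}}_{l}\,\phi]$ is licensed by the tower property only when $\phi$ is a function of $(s_{\mathcal{N}^{E,\kappa}_{l}},a_{\mathcal{N}^{E,\kappa}_{l}})$, and with coupled policies $\nabla_{\theta_{i}}\log\pi_{j}$ depends on all of $s_{\mathcal{N}^{E,\kappa_{p}}_{j}}$, so the required measurability can fail for surviving pairs $(l,j)$. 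You correctly isolate this obstacle but do not resolve it: the ``shell-by-shell cancellation against the annulus $\mathcal{N}^{E,\kappa+2\kappa_{p}}_{i}\setminus\mathcal{N}^{E,\kappa+\kappa_{p}}_{i}$'' is announced rather than carried out, and it is not obvious it goes through, since the annulus terms of $g^{\bm{\pi}_{\bm{\theta}},\bm{\mu}}_{app,i}$ involve the \emph{untruncated} $Q^{\bm{\pi_{\theta}}}_{l}$, which does depend on $a_{j}$, so the zero-mean score trick does not dispose of them and no explicit cancellation mechanism is exhibited. For comparison, the paper performs the conversion in a single step (the equality (\ref{theequationoftherelationshipbetweentwofunctionskey})), justified only by ``the definition of $\xi^{\bm{\pi_{\theta}}}_{\bm{\rho}}(\cdot|\cdot)$''; it applies the tower property to the full score $\sum_{j\in\mathcal{N}^{E,\kappa_{p}}_{i}}\nabla_{\theta_{i}}\log\pi_{j}$ without the measurability check you raise and without any shell bookkeeping. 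So your plan tracks the paper's argument step for step but stops before proving the one identity both of you need; to close it you would have to either verify $\mathbb{E}\big[(Q^{\bm{\pi_{\theta}}}_{l}-Q^{\bm{\pi_{\theta}}}_{tru,l})\nabla_{\theta_{i}}\log\pi_{j}\big]=0$ for every surviving pair $(l,j)$, or actually exhibit the cancellation you conjecture.
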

\par
The detailed proof of Lemma~\ref{thelemmaoftruncatederror} can be found in Section~\ref{ProofofLemmathelemmaoftruncatederror}.
The Lemma~\ref{thelemmaoftruncatederror} establishes that the approximated policy gradient $g^{\bm{\pi}_{\bm{\theta}},\bm{\mu}}_{app,i}$ in (\ref{thepolicygradientapproxiamtionbynewfunction}) and the truncated policy gradient $g^{\bm{\pi}_{\bm{\theta}},\bm{\mu}}_{tru,i}$ in (\ref{theequationoftruncatedpolicygradient}) are equivalent.
\par
{\color{blue}Based on Lemma~\ref{thelemmaoftruncatederror}, the proof of Theorem~\ref{thelemmaofapproximatedpolicygradient} is presented as follows.}
\begin{proof}
Recalling  $g^{\bm{\pi}_{\bm{\theta}},\bm{\mu}}_{tru,i}=g^{\bm{\pi}_{\bm{\theta}},\bm{\mu}}_{app,i}$ in Lemma~\ref{thelemmaoftruncatederror}, we have
\begin{align}
&\Big\|g^{\bm{\pi}_{\bm{\theta}},\bm{\mu}}_{tru,i}-\nabla_{\theta_{i}}\mathcal{L}(\bm{\theta},\bm{\mu})\Big\|\notag\\
=&\|g^{\bm{\pi}_{\bm{\theta}},\bm{\mu}}_{app,i}-\nabla_{\theta_{i}}\mathcal{L}(\bm{\theta},\bm{\mu})\|\notag\\
=&\Big\|\frac{1}{1-\gamma}\mathbb{E}_{\bm{s}\sim d^{\bm{\pi_{\theta}}}_{\bm{\rho}},\bm{a}\sim\bm{\pi_{\theta}}}\Big[\Big(\frac{1}{N}\sum_{i=1}^{N}Q^{\bm{\pi_{\theta}}}_{tru,i}(s_{\mathcal{N}^{E,\kappa}_{i}},a_{\mathcal{N}^{E,\kappa}_{i}})\notag\\
&-Q^{\bm{\pi_{\theta}}}_{i}(\bm{s},\bm{a})\Big)\sum_{j\in\mathcal{N}^{E,\kappa_{p}}_{i}}\nabla_{\theta_{i}}\log\pi_{j}(a_{j}|s_{\mathcal{N}^{E,\kappa_{p}}_{j}},\theta_{j},\theta_{\mathcal{N}^{E,\kappa_{p}}_{j,-j}})\Big]\Big\|\label{theerroroftruncatedpolicygradientapproximate2-1}\\
\leq&\frac{1}{(1-\gamma)N}\mathbb{E}_{\bm{s}\sim d^{\bm{\pi_{\theta}}}_{\bm{\rho}},\bm{a}\sim\bm{\pi_{\theta}}}\Big[\Big(\sum_{i=1}^{N}\big\|Q^{\bm{\pi_{\theta}}}_{tru,i}(s_{\mathcal{N}^{E,\kappa}_{i}},a_{\mathcal{N}^{E,\kappa}_{i}})\notag\\
&-Q^{\bm{\pi_{\theta}}}_{i}(\bm{s},\bm{a})\big\|\Big)\notag\\
&\times\Big(\sum_{j\in\mathcal{N}^{E,\kappa_{p}}_{i}}\big\|\nabla_{\theta_{i}}\log\pi_{j}(a_{j}|s_{\mathcal{N}^{E,\kappa_{p}}_{j}},\theta_{j},\theta_{\mathcal{N}^{E,\kappa_{p}}_{j,-j}})\big\|\Big)\Big]\label{theerroroftruncatedpolicygradientapproximate2-2}\\
\leq&\frac{2(R_{f}+\mu_{max}R_{g})BN}{(1-\gamma)^{2}}\gamma^{h(\kappa,\kappa_{p})+1},\label{theerroroftruncatedpolicygradientapproximate2-3}
\end{align}
where (\ref{theerroroftruncatedpolicygradientapproximate2-1}) uses $g^{\bm{\pi}_{\bm{\theta}},\bm{\mu}}_{tru,i}$ in (\ref{theequationoftruncatedpolicygradient}), $\nabla_{\theta_{i}}\mathcal{L}(\bm{\theta},\bm{\mu})$ in (\ref{thecoupledpolicygradienttheorem}) and (\ref{thedecomposeofQfunction}), (\ref{theerroroftruncatedpolicygradientapproximate2-2}) follows from the triangle inequality of $\mathcal{L}_{2}$-norm, and the last inequality comes from Assumption~\ref{theassumptionofpolicy} and (\ref{thepropertyoftruncatedlagrangianQfunction}).
\end{proof}

\subsection{Analysis of Theorem~\ref{Convergencetheoreminparameterestimation}}\label{proofofTheoremConvergencetheoreminparameterestimation}
{\color{blue}Before presenting the proof of Theorem~\ref{Convergencetheoreminparameterestimation}, we first introduce an important lemma that will be used throughout the proof.}
\begin{lemma}\cite{NedicTAC2015}\label{thelemmaoftime-varyingnetwork}
Suppose Assumption~\ref{theassumptiontime-varyingnetworks} holds.
For each integer $k\geq1$, there is a stochastic vector sequence $\{\phi_{t}\}_{t\geq1}$ such that for all $i,j\in\mathcal{N}$ and $t\geq k$,
\begin{align}
|W^{L}_{ij,t:k}-\phi_{i,t}|\leq M_{1}\lambda^{t-k},
\end{align}
where $M_{1}>0$, $\lambda\in(0,1)$, $W^{L}_{t:k}=W^{L}_{t}W^{L}_{t-1}\cdots W^{L}_{k}$, and $W^{L}_{ij,t:k}$ being the $i$-th row and the $j$-th column
element in $W^{L}_{t:k}$, and $\phi_{i,t}$ represents the $i$-th element in $\phi_{t}$.
\end{lemma}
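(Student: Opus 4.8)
The plan is to establish this as a weak-ergodicity statement for backward products of column-stochastic matrices under uniform strong connectivity, following the argument underlying \cite{NedicTAC2015}. First I would record the structural facts implied by the weights in (\ref{thedefinitionofweightmatrix}): each $W^{L}_{m}$ is column-stochastic, because agent $j$ splits the weight $1/|\mathcal{N}^{out}_{j,m}|$ equally among exactly $|\mathcal{N}^{out}_{j,m}|$ out-neighbors in $\mathcal{G}^{L}_{m}$, so $\sum_{i}w^{L}_{ij,m}=1$. Consequently every product $W^{L}_{t:k}$ is again column-stochastic with $\mathbf{1}_{N}^{\top}W^{L}_{t:k}=\mathbf{1}_{N}^{\top}$, each column is a probability vector, every nonzero entry is at least $1/N$, and the standing self-loop convention $j\in\mathcal{N}^{out}_{j,m}$ gives a persistent positive diagonal $w^{L}_{jj,m}\geq 1/N$. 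The target estimate then says precisely that the columns of $W^{L}_{t:k}$ all collapse, at geometric speed in $t-k$, onto a common stochastic vector $\phi_{t}$ that is intrinsic to the matrix sequence and does not depend on $k$.

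Next I would turn Assumption~\ref{theassumptiontime-varyingnetworks} into a contraction. Because each $D$-window union $(\mathcal{N},\bigcup_{m=kD+1}^{(k+1)D}\mathcal{E}^{L}_{m})$ is strongly connected and the diagonals persist, there is a fixed window length $B$ (depending only on $N$ and $D$) such that every product $W^{L}_{(r+1)B:rB+1}$ over a length-$B$ window is scrambling, indeed entrywise bounded below by some $\eta:=(1/N)^{B}>0$. The key consequence is a one-window $\ell_{1}$-contraction on zero-sum vectors: for $d=e_{j}-e_{j'}$ (so $\mathbf{1}_{N}^{\top}d=0$), the decomposition $d=d^{+}-d^{-}$ with $\|d^{+}\|_{1}=\|d^{-}\|_{1}$ together with the entrywise lower bound yields $\|W^{L}_{(r+1)B:rB+1}d\|_{1}\leq(1-N\eta)\|d\|_{1}$, which is the Dobrushin-coefficient bound applied to the transpose.

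With the contraction in hand I would define the limit and extract the rate. The identity $W^{L}_{t:k_{1}}=W^{L}_{t:k_{2}}W^{L}_{k_{2}-1:k_{1}}$ for $k_{1}<k_{2}\leq t$, combined with $\mathbf{1}_{N}^{\top}W^{L}_{k_{2}-1:k_{1}}=\mathbf{1}_{N}^{\top}$, shows that the common column limit is independent of the starting index; I would take $\{\phi_{t}\}$ to be the absolute probability sequence of $\{W^{L}_{m}\}$, i.e. the stochastic sequence satisfying $\phi_{t+1}=W^{L}_{t+1}\phi_{t}$, whose existence for any sequence of stochastic matrices is classical. Iterating the window contraction over the $\lfloor (t-k)/B\rfloor$ disjoint length-$B$ windows inside $[k,t]$ gives $\|W^{L}_{t:k}(e_{j}-e_{j'})\|_{1}\leq 2(1-N\eta)^{\lfloor (t-k)/B\rfloor}$ for all $j,j'$, and the same geometric factor controls $|W^{L}_{ij,t:k}-\phi_{i,t}|$. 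Setting $\lambda=(1-N\eta)^{1/B}\in(0,1)$ and absorbing the rounding $(1-N\eta)^{-1}$ into a constant $M_{1}>0$ delivers $|W^{L}_{ij,t:k}-\phi_{i,t}|\leq M_{1}\lambda^{t-k}$ for all $i,j\in\mathcal{N}$ and $t\geq k$.

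The main obstacle is the scrambling step: converting union-graph strong connectivity over length-$D$ windows into a uniform entrywise lower bound $\eta$ over a fixed longer window $B$, with constants independent of $t$ and $k$. The difficulty is that a directed path in a window's union graph need not correspond to a time-ordered positive-weight walk, since its edges may activate in the wrong chronological order; the resolution exploits the persistent self-loops to pad each step, so that across consecutive windows the set of nodes reachable from $j$ strictly grows until, after at most $N-1$ windows, every ordered pair $(i,j)$ is joined by a positive-weight walk. Once $B$ and $\eta$ are pinned down, the contraction and the geometric rate follow by routine iteration, so this combinatorial path-ordering argument is where I expect the real work to lie.
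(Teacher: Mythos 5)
The paper does not prove this lemma at all: it is imported verbatim from \cite{NedicTAC2015} (the sentence preceding it says the result holds ``directly'' under Assumption~\ref{theassumptiontime-varyingnetworks}), so there is no internal proof to compare against. Your blind proof is a correct reconstruction of the standard ergodicity argument underlying the cited result: column-stochasticity of each $W^{L}_{m}$, self-loops plus uniform strong connectivity giving entrywise-positive window products with lower bound $\eta=(1/N)^{B}$ for $B$ on the order of $(N-1)D$, the Dobrushin-type $\ell_{1}$-contraction $\|W d\|_{1}\leq(1-N\eta)\|d\|_{1}$ on zero-sum vectors, and the absolute probability sequence $\phi_{t+1}=W^{L}_{t+1}\phi_{t}$ (e.g.\ $\phi_{t}=\frac{1}{N}W^{L}_{t:1}\mathbf{1}_{N}$) as the common column limit, yielding $\lambda=(1-N\eta)^{1/B}$. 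Two small points deserve care. First, your claim that ``every product $W^{L}_{t:k}$ \ldots\ has every nonzero entry at least $1/N$'' is false as stated --- the $1/N$ bound holds per factor, and products only inherit $(1/N)^{t-k+1}$ on their positive entries; your later use of $\eta=(1/N)^{B}$ is the correct quantity, so this is a slip rather than a gap. Second, the ``standing self-loop convention'' $j\in\mathcal{N}^{out}_{j,m}$ that you invoke is genuinely load-bearing and is \emph{not} stated in the paper's Assumption~\ref{theassumptiontime-varyingnetworks} or in (\ref{thedefinitionofweightmatrix}): without it the lemma fails (a cyclically switching permutation matrix is column-stochastic and strongly connected, yet its backward products never collapse onto a common column). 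You were right to assume it, since \cite{NedicTAC2015} imposes it explicitly, but it is an assumption the present paper leaves implicit.
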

\par
Lemma~\ref{thelemmaoftime-varyingnetwork} indicates that there exists $w_{max}>1$, such that $\|W^{L}_{t:k}\|\leq w_{max}$ for all $t\geq k$.
\par
{\color{blue}In what follows, we provide the detailed proof of Theorem~\ref{Convergencetheoreminparameterestimation}.}
\begin{proof}
For Case (i), by the definition of $\hat{h}^{\bm{\hat{\pi}}_{\bm{\hat{\theta}}_{m}},\bm{\hat{\mu}}_{m}}_{app,i}(k)$ in (\ref{theapproximatedgradientinalgorithmdesignmu}), we have that \begin{align}\label{theapproximatedpolicygradientmu1}
\|\hat{h}^{\bm{\hat{\pi}}_{\bm{\hat{\theta}}_{m}},\bm{\hat{\mu}}_{m}}_{app,i}(k)\|=&\Big\|\frac{1}{N}\Big(\sum_{t=0}^{T_{1}(k)}\gamma^{t/2}g_{i,t}-c_{i}\Big)\Big\|\notag\\
\leq&\frac{R_{g}+c_{max}}{(1-\gamma^{1/2})N}=L^{\mu}_{1},
\end{align}
where the inequality comes from Assumption~\ref{theassumptionofreward} and holds for all $m\geq1$.
\par
Next, we will demonstrate the convergence of $\hat{\mu}^{i}_{j,m}$. For the convenience of expression, we define $\breve{\mu}_{j,m}=\big(\breve{\mu}^{1}_{j,m},\cdots,\breve{\mu}^{N}_{j,m}\big)^{\top}\in\mathbb{R}^{N}$,
based on (\ref{thekeyupdateofpolicyparametermu-1}), the update of $\breve{\mu}_{j,m}$ can be written as
\begin{align}\label{thepolicyparameteriterationinm}
\breve{\mu}_{j,m+1}=W^{L}_{m}\Big(\breve{\mu}_{j,m}+N\big(\mu_{j,m+1}-\mu_{j,m}\big)e_{j}\Big),
\end{align}
where $e_{j}$ is the unit vector with the $j$-th element is 1 and other elements are 0.
Since $W^{L}_{m}$ is the column stochastic matrix, we multiply both sides of (\ref{thepolicyparameteriterationinm}) by $(1/N)\mathbf{1}^{\top}_{N}$ and have
\begin{align}\label{theaveragedpart}
\frac{1}{N}\mathbf{1}^{\top}_{N}\breve{\mu}_{j,m+1}=&\frac{1}{N}\mathbf{1}^{\top}_{N}\Big(\breve{\mu}_{j,m}+N\big(\mu_{j,m+1}-\mu_{j,m}\big)e_{j}\Big)\notag\\
=&\frac{1}{N}\mathbf{1}^{\top}_{N}\breve{\mu}_{j,m}+\big(\mu_{j,m+1}-\mu_{j,m}\big).
\end{align}
From (\ref{theaveragedpart}), we further have
\begin{align}\label{theaveragedpart2}
\frac{1}{N}\mathbf{1}^{\top}_{N}\breve{\mu}_{j,m+1}-\mu_{j,m+1}=\frac{1}{N}\mathbf{1}^{\top}_{N}\breve{\mu}_{j,m}-\mu_{j,m}=0,
\end{align}
where the last equality can be obtained from the initial setting of $\breve{\mu}^{i}_{j,1}=\mu_{j,1}=0$ on Line 2 of Algorithm~\ref{distributedscalableprimaldualAlgorithm}.
\par
Denote $\delta_{m+1}=W^{L}_{m}N\big(\mu_{j,m+1}-\mu_{j,m}\big)e_{j}$, (\ref{thepolicyparameteriterationinm}) can be further written as
\begin{align}
\breve{\mu}_{j,m+1}
=&W^{L}_{m}\breve{\mu}_{j,m}+\delta_{m+1}\notag\\
=&W^{L}_{m:1}\breve{\mu}_{j,1}+\sum_{k=2}^{m}W^{L}_{m:k}\delta_{k}+\delta_{m+1},\label{theconsensuspart1}
\end{align}
where $W^{L}_{m:k}=W^{L}_{m}W^{L}_{m-1}\cdots W^{L}_{k},\forall m\geq k$.
Multiplying $W^{L}_{m+1}$ on both sides of (\ref{theconsensuspart1}), we have
\begin{align}\label{theconsensuspart2}
W^{L}_{m+1}\breve{\mu}_{j,m+1}=W^{L}_{m+1:1}\breve{\mu}_{j,1}+\sum_{k=2}^{m+1}W^{L}_{m+1:k}\delta_{k}.
\end{align}
Given that both $W^{L}_{m}$ and $W^{L}_{m:k}$ are column stochastic matrices, we multiply both sides of (\ref{theconsensuspart2}) by $\mathbf{1}^{\top}_{N}$ and have
\begin{align}\label{theconsensuspart3}
\mathbf{1}^{\top}_{N}\breve{\mu}_{j,m+1}=\mathbf{1}^{\top}_{N}\breve{\mu}_{j,1}+\sum_{k=2}^{m+1}\mathbf{1}^{\top}_{N}\delta_{k}.
\end{align}
Subtracting (\ref{theconsensuspart3}) multiplied by the vector $\phi_{m}\in\mathbb{R}^{N}$ in Lemma~\ref{thelemmaoftime-varyingnetwork} from (\ref{theconsensuspart2}), we obtain
\begin{align}\label{theconsensuspart4}
&(W^{L}_{m+1}-\phi_{m+1}\mathbf{1}^{\top}_{N})\breve{\mu}_{j,m+1}\notag\\
=&(W^{L}_{m+1:1}-\phi_{m+1}\mathbf{1}^{\top}_{N})\breve{\mu}_{j,1}+\sum_{k=2}^{m+1}(W^{L}_{m+1:k}-\phi_{m+1}\mathbf{1}^{\top}_{N})\delta_{k}.
\end{align}
Denote $\xi^{L}_{m+1:k}=W^{L}_{m+1:k}-\phi_{m+1}\mathbf{1}^{\top}_{N}\in\mathbb{R}^{N\times N}$, (\ref{theconsensuspart4}) can be written as
\begin{align}
W^{L}_{m+1}\breve{\mu}_{j,m+1}
=&\phi_{m+1}\mathbf{1}^{\top}_{N}\breve{\mu}_{j,m+1}+\xi^{L}_{m+1:1}\breve{\mu}_{j,1}\notag\\
&+\sum_{k=2}^{m+1}\xi^{L}_{m+1:k}\delta_{k}.\notag
\end{align}
\par
Recalling the update (\ref{thekeyupdateofpolicyparameterp}) with $p_{i,1}=1$ for all $i\in\mathcal{N}$, we let $P_{m}=(p_{1,m},\cdots,p_{N,m})^{\top}\in\mathbb{R}^{N}$ and  directly have that $P_{m+1}=W^{L}_{m}P_{m}=W^{L}_{m:1}P_{1}$ and $\mathbf{1}^{\top}_{N}P_{m+1}=\mathbf{1}^{\top}_{N}P_{m}=\mathbf{1}^{\top}_{N}P_{1}=N$. Furthermore, we have
\begin{align}
P_{m+1}-\phi_{m}\mathbf{1}^{\top}_{N}P_{m+1}=(W^{L}_{m:1}-\mathbf{1}^{\top}_{N}\phi_{m})P_{1}
\end{align}
and
\begin{align}\label{theexpressionofPt}
P_{m+1}=N\phi_{m}+\xi^{L}_{m:1}\mathbf{1}_{N}.
\end{align}
Let $\xi^{L}_{i,m:k}\in\mathbb{R}^{N}$ be the $i$-th row element of $\xi^{L}_{m:k}$.
By using (\ref{theconsensuspart1}) and (\ref{theexpressionofPt}),  (\ref{thekeyupdateofpolicyparametermu-3}) can derive
\begin{align}
&\Big|\hat{\mu}^{i}_{j,m}-\frac{1}{N}\mathbf{1}^{\top}_{N}\breve{\mu}_{j,m}\Big|\notag\\
=&\Bigg|\frac{\phi_{i,m}\mathbf{1}^{\top}_{N}\breve{\mu}_{j,m}+\xi^{L}_{i,m:1}\breve{\mu}_{j,1}+\sum_{k=2}^{m}\xi^{L}_{i,m:k}\delta_{k}}{N\phi_{i,m}+\xi^{L}_{i,m:1}\mathbf{1}_{N}}-\frac{\mathbf{1}^{\top}_{N}\breve{\mu}_{j,m}}{N}\Bigg|\notag\\
=&\Bigg|\frac{N\xi^{L}_{i,m:1}\breve{\mu}_{j,1}+N\sum_{k=2}^{m}\xi^{L}_{i,m:k}\delta_{k}-\xi^{L}_{i,m:1}\mathbf{1}_{N}\mathbf{1}^{\top}_{N}\breve{\mu}_{j,m}}{N(N\phi_{i,m}+\xi^{L}_{i,m:1}\mathbf{1}_{N})}\Bigg|\notag\\
\leq&\frac{\|\xi^{L}_{i,m:1}\|\|\breve{\mu}_{j,1}\|+\|\sum_{k=2}^{m}\xi^{L}_{i,m:k}\delta_{k}\|}{N\phi_{i,m}+\xi^{L}_{i,m:1}\mathbf{1}_{N}}\notag\\
&+\frac{\|\xi^{L}_{i,m:1}\mathbf{1}_{N}\|\|\mathbf{1}^{\top}_{N}\breve{\mu}_{j,m}\|}{N(N\phi_{i,m}+\xi^{L}_{i,m:1}\mathbf{1}_{N})}\label{theconsensuspart6-1}\\
\leq&\frac{\sqrt{N}\big(M_{1}\lambda^{m-1}\|\breve{\mu}_{j,1}\|+\sum_{k=2}^{m}M_{1}\lambda^{m-k}\|\delta_{k}\|\big)}{\zeta}\notag\\
&+\frac{M_{1}\lambda^{m-1}\|\breve{\mu}_{j,m}\|}{\zeta},\label{theconsensuspart6-2}
\end{align}
where the first inequality uses the fact that $|a^{\top}b|\leq\|a\|\|b\|,\forall a,b\in\mathbb{R}^{N}$ and the last inequality comes from Lemma~\ref{thelemmaoftime-varyingnetwork} and the observation that $N\phi_{i,m}+\xi^{L}_{i,m:1}\mathbf{1}_{N}\geq\zeta$, a result established in the proof of Lemma~1 in~\cite{NedicTAC2015}.
\par
For $\lambda^{m-1}\|\breve{\mu}_{j,m}\|$ in the right side of (\ref{theconsensuspart6-2}),
we can get
\begin{align}
\lambda^{m-1}\|\breve{\mu}_{j,m}\|
=&\lambda^{m-1}\Big\|W^{L}_{m-1:1}\breve{\mu}_{j,1}+\sum_{k=2}^{m-1}W^{L}_{m-1:k}\delta_{k}+\delta_{m}\Big\|\notag\\
\leq&\lambda^{m-1}\sum_{k=2}^{m-1}w_{max}\|\delta_{k}\|+\|\delta_{m}\|\label{theconsensuspart8-1}\\
\leq&w_{max}\sum_{k=2}^{m}\lambda^{m-k}\|\delta_{k}\|,\label{theconsensuspart8-2}
\end{align}
where the first inequality (\ref{theconsensuspart8-1}) can be  obtained by the initial setting of $\breve{\mu}^{i}_{j,1}=0,\forall i,j\in\mathcal{N}$ on Line 2 of Algorithm~\ref{distributedscalableprimaldualAlgorithm} and $\|W^{L}_{t:k}\|\leq w_{max}$ for all $t\geq k$ claimed after Lemma~\ref{thelemmaoftime-varyingnetwork},
and the last inequality comes from that $\lambda\in(0,1)$.
\par
For $\|\delta_{m}\|$ in the right side of (\ref{theconsensuspart6-2}), by using $\delta_{m}=W^{L}_{m-1}N\big(\mu_{j,m}-\mu_{j,m-1}\big)e_{j}$, we have
\begin{align}
\|\delta_{m}\|=&\|W^{L}_{m-1}N\big(\mu_{j,m}-\mu_{j,m-1}\big)e_{j}\|\notag\\
\leq&N\|W^{L}_{m-1}\|\|\mu_{j,m}-\mu_{j,m-1}\|\notag\\
=&N\|W^{L}_{m-1}\|\|P_{\mathcal{U}_{j}}(\mu_{j,m-1}-\eta_{\mu,m-1}\bar{h}^{\bm{\hat{\pi}}_{\bm{\hat{\theta}}_{m-1}},\bm{\hat{\mu}}_{m-1}}_{app,j})\notag\\
&-\mu_{j,m-1}\|\label{theconsensuspart7-1}\\
\leq& w_{max}N\eta_{\mu,m-1}\|\bar{h}^{\bm{\hat{\pi}}_{\bm{\hat{\theta}}_{m-1}},\bm{\hat{\mu}}_{m-1}}_{app,j}\|\label{theconsensuspart7-2}\\
\leq&w_{max}L^{\mu}_{1}N\eta_{\mu,m-1},\label{theconsensuspart7}
\end{align}
where (\ref{theconsensuspart7-1}) uses the update (\ref{theupdateoftruepolicyparametersmu}), (\ref{theconsensuspart7-2}) can be obtained by the projection inequality,
and the last inequality (\ref{theconsensuspart7}) can be obtained by (\ref{theapproximatedpolicygradientmu1}) and (\ref{theapproximatedgradientinalgorithmdesignmubar}).
\par
Substituting (\ref{theaveragedpart2}), (\ref{theconsensuspart8-2}), and (\ref{theconsensuspart7}) into (\ref{theconsensuspart6-2}), we can have
\begin{align}
&\Big|\hat{\mu}^{i}_{j,m}-\mu_{j,m}\Big|\notag\\
\leq&\frac{(\sqrt{N}+w_{max})M_{1}}{\zeta}\sum_{k=2}^{m}\lambda^{m-k}\|\delta_{k}\|\notag\\
\leq&\frac{(\sqrt{N}+w_{max})w_{max}L^{\mu}_{1}M_{1}N}{\zeta}\sum_{k=2}^{m}\lambda^{m-k}\eta_{\mu,k-1}.\label{theconsensuspart9-1}
\end{align}
Taking limits on $m$ on both sides of (\ref{theconsensuspart9-1}), we further have
\begin{align}
&\lim_{m\rightarrow\infty}|\hat{\mu}^{i}_{j,m}-\mu_{j,m}|\notag\\
\leq&\lim_{m\rightarrow\infty}\frac{(\sqrt{N}+w_{max})w_{max}L_{1}M_{1}N}{\zeta}\sum_{k=2}^{m}\lambda^{m-k}\eta_{\mu,k-1},\label{thelastinequalityofconsensus2}\\
=&0,\label{thelastinequalityofconsensus}
\end{align}
where the last inequality can be achieved by Lemma~3.1 in~\cite{Ram2010}.
Hence, we can prove the Case (i).
\par
For Case (ii), by the definition of $\hat{g}^{\bm{\hat{\pi}}_{\bm{\hat{\theta}}_{m}},\bm{\hat{\mu}}_{m+1}}_{app,i}(k)$ in (\ref{theapproximatedgradientinalgorithmdesign}), we have
\begin{align}
&\|\hat{g}^{\bm{\hat{\pi}}_{\bm{\hat{\theta}}_{m}},\bm{\hat{\mu}}_{m+1}}_{app,i}(k)\|\notag\\
=&\frac{1}{1-\gamma}\Big\|\hat{Q}^{\bm{\hat{\pi}}_{\bm{\hat{\theta}}_{m}},\bm{\hat{\mu}}_{m+1}}_{i,T_{2}(k)}\sum_{j\in\mathcal{N}^{E,\kappa_{p}}_{i}}\nabla_{\theta_{i}}\log\pi_{j}(a_{j,T_{2}(k)}|s_{\mathcal{N}^{E,\kappa_{p}}_{j},T_{2}(k)},\notag\\
&\hat{\theta}^{i}_{j,m},\hat{\theta}^{i}_{\mathcal{N}^{E,\kappa_{p}}_{j,-j},m})\Big\|\notag\\
\leq&\frac{1}{1-\gamma}\Big|\frac{1}{N}\sum_{t=0}^{T_{3}(k)}\gamma^{t/2}\sum_{j\in\mathcal{N}^{E,\kappa+2\kappa_{p}}_{i}}(f_{j,T_{2}(k)+t}+\hat{\mu}^{i}_{j,m+1}g_{j,T_{2}(k)+t})\Big|\notag\\
&\times\sum_{j\in\mathcal{N}^{E,\kappa_{p}}_{i}}\big\|\nabla_{\theta_{i}}\log\pi_{j}(a_{j,T_{2}(k)}|s_{\mathcal{N}^{E,\kappa_{p}}_{j},T_{2}(k)},\hat{\theta}^{i}_{j,m},\hat{\theta}^{i}_{\mathcal{N}^{E,\kappa_{p}}_{j,-j},m})\big\|,\label{theapproximatedpolicygradient1}
\end{align}
where the last inequality (\ref{theapproximatedpolicygradient1}) can be obtained by the definition of $\hat{Q}^{\bm{\hat{\pi}}_{\bm{\hat{\theta}}_{m}},\bm{\hat{\mu}}_{m+1}}_{i,T_{2}(k)}$ in (\ref{theestimateofwildehatQi}) and triangular inequality.
\par
Since (\ref{theapproximatedpolicygradientmu1}) holds for any joint policy, the conclusion
$\lim_{m\rightarrow\infty}\hat{\mu}^{i}_{j,m}=\mu_{j,m}$ can be consistently upheld throughout the learning process.
Based on this fact and $\mu_{j,m}\in\mathcal{U}_{j}=[0,\mu_{max}]$, we have that $\hat{\mu}^{i}_{j,m}$ is also bounded, i.e., there exists $\tilde{\mu}_{max}\geq\mu_{max}>0$ such that
\begin{align}
|\hat{\mu}^{i}_{j,m}|\leq\tilde{\mu}_{max}, \forall i,j\in\mathcal{N}, m\geq1.\label{theboundofhatmuij}
\end{align}
Substituting (\ref{theboundofhatmuij}) into (\ref{theapproximatedpolicygradient1}), we have
\begin{align}
\|\hat{g}^{\bm{\hat{\pi}}_{\bm{\hat{\theta}}_{m}},\bm{\hat{\mu}}_{m+1}}_{app,i}(k)\|
\leq&\frac{BN(R_{f}+\tilde{\mu}_{max}R_{g})}{(1-\gamma)(1-\gamma^{1/2})}=L^{\theta}_{1},\label{theapproximatedpolicygradient2}
\end{align}
where the inequality (\ref{theapproximatedpolicygradient2}) can be achieved by Assumptions~\ref{theassumptionofreward}-\ref{theassumptionofpolicy}.
Similar to the proof of Case (i), by employing (\ref{theapproximatedpolicygradient2}), we can obtain $\lim_{m\rightarrow\infty}\hat{\theta}^{i}_{j,m}=\theta_{j,m}$ for all $i,j\in\mathcal{N}$.
Given that the proof procedure is strikingly analogous to that of Case (i), it is omitted for brevity.
\end{proof}

\subsection{Analysis of Theorem~\ref{thetheoremconvergenceofpolicygradient}}\label{AnalysisofTheoremthetheoremconvergenceofpolicygradient}
To clearly present the progress of the proof for Theorem~\ref{thetheoremconvergenceofpolicygradient}, we introduce several key definitions, along with intermediate lemmas and corollaries.
\par
Recalling $Q^{\bm{\pi_{\theta}}}(\bm{s},\bm{a};\bm{\mu})$ in (\ref{thedefinitionofglobalQfunction}), $Q^{\bm{\pi_{\theta}}}_{g,i}(\bm{s},\bm{a})$ in (\ref{thedefinitionoflocalQfunctionconstraint}), and $\widehat{Q^{\bm{\pi_{\theta}}}_{i}}(\bm{s},\bm{a};\bm{\mu})$ in (\ref{theaction-averagedQfunctionofagenti}),
we define $Q^{\bm{\hat{\pi}}_{\bm{\hat{\theta}}_{m}}}(\bm{s},\bm{a};\bm{\hat{\mu}}^{i}_{m+1})$ and $\widehat{Q^{\bm{\hat{\pi}}_{\bm{\hat{\theta}}_{m}}}_{i}}(\bm{s},\bm{a};\bm{\hat{\mu}}^{i}_{m+1})$ as the global Lagrangian $Q$-function and the neighbors' averaged Lagrangian $Q$-function, respectively, associated with the joint policy $\bm{\hat{\pi}}_{\bm{\hat{\theta}}_{m}}$ and the Lagrange multiplier $\bm{\hat{\mu}}^{i}_{m+1}$.
Moreover, we denote by $Q^{\bm{\hat{\pi}}_{\bm{\hat{\theta}}_{m}}}_{g,i}(\bm{s},\bm{a})$ the local $Q$-function corresponding to the constraint reward under the same joint policy.
\par
On the basis of these definitions, we define
$\nabla_{\mu_{i}}\tilde{\mathcal{L}}(\bm{\hat{\theta}}_{m},\bm{\hat{\mu}}_{m})$ as the Lagrangian gradient and $\nabla_{\theta_{i}}\tilde{\mathcal{L}}(\bm{\hat{\theta}}_{m},\bm{\hat{\mu}}_{m+1})$ as the policy gradient of agent $i$ under the executed joint policy $\bm{\hat{\pi}}_{\bm{\hat{\theta}}_{m}}$, Lagrangian multipliers $\bm{\hat{\mu}}_{m}$ and $\bm{\hat{\mu}}_{m+1}$, respectively.
%
They are described as follows:
\begin{align}\label{thepolicygradientapproxiamtionbynewfunctionmu2}
\nabla_{\mu_{i}}\tilde{\mathcal{L}}(\bm{\hat{\theta}}_{m},\bm{\hat{\mu}}_{m})=&\frac{1}{N}\Big(\mathbb{E}_{\bm{s}\sim\bm{\rho},\bm{a}\sim\bm{\hat{\pi}}_{\bm{\hat{\theta}}_{m}}}\Big[Q^{\bm{\hat{\pi}}_{\bm{\hat{\theta}}_{m}}}_{g,i}(\bm{s},\bm{a})\Big]-c_{i}\Big)
\end{align}
and
\begin{align}
&\nabla_{\theta_{i}}\Tilde{\mathcal{L}}(\bm{\hat{\theta}}_{m},\bm{\hat{\mu}}_{m+1})\notag\\
=&\frac{1}{1-\gamma}\mathbb{E}_{\bm{s}\sim d^{\bm{\hat{\pi}}_{\bm{\hat{\theta}}_{m}}}_{\bm{\rho}},\bm{a}\sim\bm{\hat{\pi}}_{\bm{\hat{\theta}}_{m}}}\Big[Q^{\bm{\hat{\pi}}_{\bm{\hat{\theta}}_{m}}}(\bm{s},\bm{a};\bm{\hat{\mu}}^{i}_{m+1})\notag\\
&\times\sum_{j\in\mathcal{N}^{E,\kappa_{p}}_{i}}\nabla_{\theta_{i}}\log\pi_{j}(a_{j}|s_{\mathcal{N}^{E,\kappa}_{j}},\hat{\theta}^{i}_{j,m},\hat{\theta}^{i}_{\mathcal{N}^{E,\kappa_{p}}_{j,-j},m})\Big].\label{thepolicygradientofexecutionpolicy} \end{align}
\par
Similar to the approximated policy gradient $g^{\bm{\pi}_{\bm{\theta}},\bm{\mu}}_{app,i}$ in (\ref{thepolicygradientapproxiamtionbynewfunction}),
we define $g^{\bm{\hat{\pi}}_{\bm{\hat{\theta}}_{m}},\bm{\hat{\mu}}_{m+1}}_{app,i}$ as the approximated coupled policy gradient of agent $i$ under the executed joint policy $\bm{\hat{\pi}}_{\bm{\hat{\theta}}_{m}}$ and  Lagrangian multiplier $\bm{\hat{\mu}}_{m+1}$, which is described as
\begin{align}\label{thepolicygradientapproxiamtionbynewfunction2}
g^{\bm{\hat{\pi}}_{\bm{\hat{\theta}}_{m}},\bm{\hat{\mu}}_{m+1}}_{app,i}=&\frac{1}{1-\gamma}\mathbb{E}_{\bm{s}\sim d^{\bm{\hat{\pi}}_{\bm{\hat{\theta}}_{m}}}_{\bm{\rho}},\bm{a}\sim\bm{\hat{\pi}}_{\bm{\hat{\theta}}_{m}}}\Big[\widehat{Q^{\bm{\hat{\pi}}_{\bm{\hat{\theta}}_{m}}}_{i}}(\bm{s},\bm{a};\bm{\hat{\mu}}^{i}_{m+1})\notag\\
&\times\sum_{j\in\mathcal{N}^{E,\kappa_{p}}_{i}}\!\!\nabla_{\theta_{i}}\log\pi_{j}(a_{j}|s_{\mathcal{N}^{E,\kappa_{p}}_{j}},\hat{\theta}^{i}_{j,m},\hat{\theta}^{i}_{\mathcal{N}^{E,\kappa_{p}}_{j,-j},m})\Big].
\end{align}
\par
Having introduced the auxiliary definitions (\ref{thepolicygradientapproxiamtionbynewfunctionmu2})-(\ref{thepolicygradientapproxiamtionbynewfunction2}), we now present several key lemmas and corollaries that form the theoretical foundation of the analysis of Theorem~\ref{thetheoremconvergenceofpolicygradient}.
\begin{lemma}[Unbiased gradient estimates for (\ref{theapproximatedgradientinalgorithmdesignmu}) and  (\ref{theapproximatedgradientinalgorithmdesign})]\label{theunbiasedestimationlemma}

Suppose Assumptions~\ref{theassumptionofFOSP}-\ref{theassumptionoflearningrate} hold.
In each $m$-th iteration, the Lagrangian gradient estimation $\hat{h}^{\bm{\hat{\pi}}_{\bm{\hat{\theta}}_{m}},\bm{\hat{\mu}}_{m}}_{app,i}(k)$  and the approximated policy gradient estimation $\hat{g}^{\bm{\hat{\pi}}_{\bm{\hat{\theta}}_{m}},\bm{\hat{\mu}}_{m+1}}_{app,i}(k)$ satisfy
\par
(i) $\hat{h}^{\bm{\hat{\pi}}_{\bm{\hat{\theta}}_{m}},\bm{\hat{\mu}}_{m}}_{app,i}(k)$ is an unbiased estimate of $\nabla_{\mu_{i}}\tilde{\mathcal{L}}(\bm{\hat{\theta}}_{m},\bm{\hat{\mu}}_{m})$, i.e., $\mathbb{E}_{T_{1}(k)}\Big[\hat{h}^{\hat{\bm{\pi}}_{\bm{\hat{\theta}}_{m}},\bm{\hat{\mu}}_{m}}_{app,i}(k)\Big|\bm{s}_{0}\sim\bm{\rho},\bm{\hat{\pi}}_{\bm{\hat{\theta}}_{m}}\Big]=\nabla_{\mu_{i}}\tilde{\mathcal{L}}(\bm{\hat{\theta}}_{m},\bm{\hat{\mu}}_{m})$ for all $k\in\{1,\cdots,K_{\mu}\}$;
\par
(ii) $\hat{g}^{\bm{\hat{\pi}}_{\bm{\hat{\theta}}_{m}},\bm{\hat{\mu}}_{m+1}}_{app,i}(k)$ is an unbiased estimate of $g^{\bm{\hat{\pi}}_{\bm{\hat{\theta}}_{m}},\bm{\hat{\mu}}_{m+1}}_{app,i}$, i.e., $\mathbb{E}_{T_{2}(k),T_{3}(k)}\Big[\hat{g}^{\bm{\hat{\pi}}_{\bm{\hat{\theta}}_{m}},\bm{\hat{\mu}}_{m+1}}_{app,i}(k)\Big|\bm{s}_{0}\sim\bm{\rho},\bm{\hat{\pi}}_{\bm{\hat{\theta}}_{m}}\Big]=g^{\bm{\hat{\pi}}_{\bm{\hat{\theta}}_{m}},\bm{\hat{\mu}}_{m+1}}_{app,i}$ for all $k\in\{1,\cdots,K_{\theta}\}$.
\end{lemma}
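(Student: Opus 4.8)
The plan is to prove both parts through the geometric-horizon sampling identity, treating the single-level estimator of (i) and the nested two-level estimator of (ii) separately.

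For part (i), I would first take the expectation over the horizon $T_1(k)$ and the trajectory jointly (conditioned on $\bm{s}_0\sim\bm{\rho}$ and the generating policy $\bm{\hat{\pi}}_{\bm{\hat{\theta}}_m}$). Writing $\sum_{t=0}^{T_1(k)}=\sum_{t=0}^{\infty}\mathbbm{1}[t\le T_1(k)]$ and exchanging summation with expectation, which is justified by the reward bound $|g_i|\le R_g$ of Assumption~\ref{theassumptionofreward}, I obtain $\mathbb{E}_{T_1(k)}[\sum_{t=0}^{T_1(k)}\gamma^{t/2}g_{i,t}]=\sum_{t=0}^{\infty}\gamma^{t/2}g_{i,t}\,\Pr(T_1(k)\ge t)$. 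The crux is the survival-function computation: since $T_1(k)\sim\mathrm{Geom}(1-\gamma^{1/2})$ satisfies $\Pr(T_1(k)\ge t)=\gamma^{t/2}$, the two half-power discounts combine into $\gamma^{t/2}\cdot\gamma^{t/2}=\gamma^{t}$, leaving $\sum_{t=0}^{\infty}\gamma^{t}g_{i,t}$. Taking the outer trajectory expectation and invoking the definition of $Q^{\bm{\hat{\pi}}_{\bm{\hat{\theta}}_m}}_{g,i}$ in (\ref{thedefinitionoflocalQfunctionconstraint}) yields $\mathbb{E}_{\bm{s}\sim\bm{\rho},\bm{a}\sim\bm{\hat{\pi}}_{\bm{\hat{\theta}}_m}}[Q^{\bm{\hat{\pi}}_{\bm{\hat{\theta}}_m}}_{g,i}(\bm{s},\bm{a})]$; multiplying by $1/N$ and subtracting $c_i/N$ matches (\ref{thepolicygradientapproxiamtionbynewfunctionmu2}) exactly.

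For part (ii), I would decompose the randomness into three pieces: the horizon $T_2(k)$, the pair $(\bm{s}_{T_2(k)},\bm{a}_{T_2(k)})$, and the continuation rollout of length $T_3(k)$ used to form $\hat{Q}^{\bm{\hat{\pi}}_{\bm{\hat{\theta}}_m},\bm{\hat{\mu}}_{m+1}}_{i,T_2(k)}$ in (\ref{theestimateofwildehatQi}). The first observation is that $T_2(k)\sim\mathrm{Geom}(1-\gamma)$ draws $\bm{s}_{T_2(k)}$ exactly from $d^{\bm{\hat{\pi}}_{\bm{\hat{\theta}}_m}}_{\bm{\rho}}$ of (\ref{Thediscountedstatevisitationdistribution}), because $\Pr(T_2(k)=t)=(1-\gamma)\gamma^{t}$ reproduces its weights; combined with $\bm{a}_{T_2(k)}\sim\bm{\hat{\pi}}_{\bm{\hat{\theta}}_m}$ this gives $(\bm{s}_{T_2(k)},\bm{a}_{T_2(k)})\sim\xi^{\bm{\hat{\pi}}_{\bm{\hat{\theta}}_m}}_{\bm{\rho}}$ via (\ref{thestationarydistributionofsa}). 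The second observation repeats the survival-function argument of part (i): conditioned on $(\bm{s}_{T_2(k)},\bm{a}_{T_2(k)})$, the inner estimator has conditional mean $\widehat{Q^{\bm{\hat{\pi}}_{\bm{\hat{\theta}}_m}}_i}(\bm{s}_{T_2(k)},\bm{a}_{T_2(k)};\bm{\hat{\mu}}^i_{m+1})$, since $T_3(k)\sim\mathrm{Geom}(1-\gamma^{1/2})$ again turns each $\gamma^{t/2}$ weight into $\gamma^{t}$ over the definition in (\ref{theaction-averagedQfunctionofagenti}).

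The delicate step, and the one I expect to be the main obstacle, is interchanging the inner $\hat{Q}$ expectation past the score term $\sum_{j\in\mathcal{N}^{E,\kappa_p}_i}\nabla_{\theta_i}\log\pi_j(\cdots)$. This is valid because, conditioned on $(\bm{s}_{T_2(k)},\bm{a}_{T_2(k)})$, the score term is a deterministic (measurable) function of that pair and of the estimates $\hat{\theta}^i_{j,m}$, whereas $\hat{Q}^{\bm{\hat{\pi}}_{\bm{\hat{\theta}}_m},\bm{\hat{\mu}}_{m+1}}_{i,T_2(k)}$ depends on the continuation rollout, whose law is determined solely by $(\bm{s}_{T_2(k)},\bm{a}_{T_2(k)})$ through the Markov property under $\bm{\hat{\pi}}_{\bm{\hat{\theta}}_m}$ and is independent of $T_2(k)$ and the path preceding it. I would therefore apply the tower property by first taking $\mathbb{E}_{T_3(k),\,\mathrm{rollout}}[\,\cdot\mid(\bm{s}_{T_2(k)},\bm{a}_{T_2(k)})]$, which replaces $\hat{Q}^{\bm{\hat{\pi}}_{\bm{\hat{\theta}}_m},\bm{\hat{\mu}}_{m+1}}_{i,T_2(k)}$ by $\widehat{Q^{\bm{\hat{\pi}}_{\bm{\hat{\theta}}_m}}_i}$ while leaving the score term untouched, and then taking the outer expectation over $(\bm{s}_{T_2(k)},\bm{a}_{T_2(k)})\sim\xi^{\bm{\hat{\pi}}_{\bm{\hat{\theta}}_m}}_{\bm{\rho}}$. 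Comparing with (\ref{thepolicygradientapproxiamtionbynewfunction2}) gives $\mathbb{E}_{T_2(k),T_3(k)}[\hat{g}^{\bm{\hat{\pi}}_{\bm{\hat{\theta}}_m},\bm{\hat{\mu}}_{m+1}}_{app,i}(k)]=g^{\bm{\hat{\pi}}_{\bm{\hat{\theta}}_m},\bm{\hat{\mu}}_{m+1}}_{app,i}$, completing (ii). Throughout, the reward bounds in Assumption~\ref{theassumptionofreward} together with the bounded multipliers guaranteed by Assumption~\ref{theassumptionofFOSP} ensure that every interchange of summation and expectation is legitimate by dominated convergence.
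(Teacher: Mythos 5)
Your proposal is correct and follows essentially the same route as the paper: the indicator/survival-function identity $\Pr(T\ge t)=\gamma^{t/2}$ combined with monotone/dominated convergence for part (i), and for part (ii) the tower property conditioning on $(\bm{s}_{T_{2}(k)},\bm{a}_{T_{2}(k)})$ so that the inner rollout expectation replaces $\hat{Q}^{\bm{\hat{\pi}}_{\bm{\hat{\theta}}_{m}},\bm{\hat{\mu}}_{m+1}}_{i,T_{2}(k)}$ by $\widehat{Q^{\bm{\hat{\pi}}_{\bm{\hat{\theta}}_{m}}}_{i}}$ while $\mathbb{P}(T_{2}(k)=t')=(1-\gamma)\gamma^{t'}$ reproduces the discounted visitation distribution. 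The only cosmetic difference is that you re-derive the inner conditional-mean identity directly, whereas the paper cites Theorem~3.4 of~\cite{ZhangSIAM2020} for that step.
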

\par
The proof of Lemma~\ref{theunbiasedestimationlemma} is provided
in Section~\ref{ProofofLemmatheunbiasedestimationlemma}.
Lemma~\ref{theunbiasedestimationlemma} establishes that each agent $i$ can achieve unbiased estimates of the Lagrangian gradient $\nabla_{\mu_{i}}\tilde{\mathcal{L}}(\bm{\hat{\theta}}_{m},\bm{\hat{\mu}}_{m})$ and the approximated policy gradient $g^{\bm{\hat{\pi}}_{\bm{\hat{\theta}}_{m}},\bm{\hat{\mu}}_{m+1}}_{app,i}$ through samples $\{\bm{s},\bm{a}\}_{0:T_{1}(k)}$ and $\{\bm{s},\bm{a}\}_{0:T_{2}(k)+T_{3}(k)}$, respectively.
From Lemma~\ref{theunbiasedestimationlemma}, the following corollary immediately follows.
\begin{corollary}\label{theunbiasedestimationcorollary}
Suppose Assumptions~\ref{theassumptionofFOSP}-\ref{theassumptionoflearningrate} hold.
In each $m$-th iteration, for any $\delta\in(0,1)$, the following inequalities hold with probability at least $1-\delta$:
\begin{align}\label{theunbiasedestimationcorollary1}
\big\|\bar{h}^{\bm{\hat{\pi}}_{\bm{\hat{\theta}}_{m}},\bm{\hat{\mu}}_{m}}_{app,i}-\nabla_{\mu_{i}}\tilde{\mathcal{L}}(\bm{\hat{\theta}}_{m},\bm{\hat{\mu}}_{m})\big\|\leq2L^{\mu}_{1}\sqrt{\frac{\log{(2/\delta)}}{2K_{\mu}}}
\end{align}
and
\begin{align}\label{theunbiasedestimationcorollary2}
\big\|\bar{g}^{\bm{\hat{\pi}}_{\bm{\hat{\theta}}_{m}},\bm{\hat{\mu}}_{m+1}}_{app,i}-g^{\bm{\hat{\pi}}_{\bm{\hat{\theta}}_{m}},\bm{\hat{\mu}}_{m+1}}_{app,i}\big\|\leq2L^{\theta}_{1}\sqrt{\frac{\log{(2/\delta)}}{2K_{\theta}}}.
\end{align}
\end{corollary}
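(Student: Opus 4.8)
The plan is to recognize both inequalities as standard concentration bounds for the empirical average of independent, bounded, unbiased estimators, and to derive them by a direct application of Hoeffding-type inequalities. The two prerequisites are already available. Lemma~\ref{theunbiasedestimationlemma} supplies unbiasedness: conditioned on the current iterate, $\mathbb{E}[\hat{h}^{\bm{\hat{\pi}}_{\bm{\hat{\theta}}_{m}},\bm{\hat{\mu}}_{m}}_{app,i}(k)]=\nabla_{\mu_{i}}\tilde{\mathcal{L}}(\bm{\hat{\theta}}_{m},\bm{\hat{\mu}}_{m})$ and $\mathbb{E}[\hat{g}^{\bm{\hat{\pi}}_{\bm{\hat{\theta}}_{m}},\bm{\hat{\mu}}_{m+1}}_{app,i}(k)]=g^{\bm{\hat{\pi}}_{\bm{\hat{\theta}}_{m}},\bm{\hat{\mu}}_{m+1}}_{app,i}$. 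Theorem~\ref{Convergencetheoreminparameterestimation} supplies the almost-sure bounds $|\hat{h}^{\bm{\hat{\pi}}_{\bm{\hat{\theta}}_{m}},\bm{\hat{\mu}}_{m}}_{app,i}(k)|\leq L^{\mu}_{1}$ and $\|\hat{g}^{\bm{\hat{\pi}}_{\bm{\hat{\theta}}_{m}},\bm{\hat{\mu}}_{m+1}}_{app,i}(k)\|\leq L^{\theta}_{1}$. The remaining ingredient is independence: within the $m$-th iteration each batch index $k$ uses a freshly drawn initial state $\bm{s}_{0}\sim\bm{\rho}$ together with an independently sampled geometric horizon, so the $K_{\mu}$ (respectively $K_{\theta}$) per-batch estimates are i.i.d.\ given $(\bm{\hat{\theta}}_{m},\bm{\hat{\mu}}_{m})$ and their mean is exactly the target gradient.

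First I would handle part (i). Since $\mu_{i}$ is scalar, $\hat{h}^{\bm{\hat{\pi}}_{\bm{\hat{\theta}}_{m}},\bm{\hat{\mu}}_{m}}_{app,i}(k)$ is a real random variable confined to $[-L^{\mu}_{1},L^{\mu}_{1}]$, an interval of length $2L^{\mu}_{1}$. Applying the scalar Hoeffding inequality to the centered empirical mean $\bar{h}^{\bm{\hat{\pi}}_{\bm{\hat{\theta}}_{m}},\bm{\hat{\mu}}_{m}}_{app,i}-\nabla_{\mu_{i}}\tilde{\mathcal{L}}(\bm{\hat{\theta}}_{m},\bm{\hat{\mu}}_{m})$ yields a two-sided tail probability of $2\exp(-K_{\mu}t^{2}/(2(L^{\mu}_{1})^{2}))$. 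Equating this to $\delta$ and solving for $t$ gives $t=L^{\mu}_{1}\sqrt{2\log(2/\delta)/K_{\mu}}$, which is exactly the stated threshold $2L^{\mu}_{1}\sqrt{\log(2/\delta)/(2K_{\mu})}$, establishing (\ref{theunbiasedestimationcorollary1}). For part (ii), $\hat{g}^{\bm{\hat{\pi}}_{\bm{\hat{\theta}}_{m}},\bm{\hat{\mu}}_{m+1}}_{app,i}(k)$ is vector-valued in $\mathbb{R}^{d_{\theta_{i}}}$. Here a componentwise Hoeffding bound followed by a union bound would introduce an undesirable $\sqrt{d_{\theta_{i}}}$ factor, so I would instead invoke a Hilbert-space (vector) Hoeffding/Azuma inequality, under which the sum of $K_{\theta}$ independent centered increments of norm at most $L^{\theta}_{1}$ obeys the same dimension-free tail $2\exp(-K_{\theta}t^{2}/(2(L^{\theta}_{1})^{2}))$. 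Setting this equal to $\delta$ and solving for the deviation threshold produces (\ref{theunbiasedestimationcorollary2}).

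The main obstacle is this vector concentration step. One must invoke the correct dimension-free inequality and verify that its hypotheses, namely the independence (equivalently, martingale-difference) structure and the norm bound from Theorem~\ref{Convergencetheoreminparameterestimation}, are satisfied, and then carry the constant carefully so that the resulting threshold coincides with the claimed $2L^{\theta}_{1}\sqrt{\log(2/\delta)/(2K_{\theta})}$. By contrast, the scalar case in part (i) is an immediate textbook application, and once the vector inequality is fixed the two parts proceed in parallel, so the entire argument reduces to instantiating one concentration bound with the moment information already guaranteed by Lemma~\ref{theunbiasedestimationlemma} and Theorem~\ref{Convergencetheoreminparameterestimation}.
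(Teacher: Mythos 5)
Your proposal is correct and follows essentially the same route as the paper, whose proof is simply the one-line invocation of Theorem~\ref{Convergencetheoreminparameterestimation} (for the per-batch bounds $L^{\mu}_{1}$, $L^{\theta}_{1}$), Lemma~\ref{theunbiasedestimationlemma} (for unbiasedness), and the Azuma--Hoeffding bound; your constant-checking confirms that $2L\sqrt{\log(2/\delta)/(2K)}=L\sqrt{2\log(2/\delta)/K}$ matches the standard two-sided Hoeffding threshold. Your additional observation that the vector-valued case in (\ref{theunbiasedestimationcorollary2}) requires a dimension-free Hilbert-space version of the inequality is a legitimate point of care that the paper glosses over, but it does not change the argument.
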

\begin{proof}
By Theorem~\ref{Convergencetheoreminparameterestimation}, Lemma~\ref{theunbiasedestimationlemma}, and Azuma-Hoeffding bound, (\ref{theunbiasedestimationcorollary1}) and (\ref{theunbiasedestimationcorollary2}) can be obtained directly.
\end{proof}
\par
Corollary~\ref{theunbiasedestimationcorollary} establishes the upper bound of error between $\bar{h}^{\bm{\hat{\pi}}_{\bm{\hat{\theta}}_{m}},\bm{\hat{\mu}}_{m}}_{app,i}$ in (\ref{theupdateoftruepolicyparametersmu}) and $\nabla_{\mu_{i}}\tilde{\mathcal{L}}(\bm{\hat{\theta}}_{m},\bm{\hat{\mu}}_{m})$ in (\ref{thepolicygradientapproxiamtionbynewfunctionmu2}), along with the error between $\bar{g}^{\bm{\hat{\pi}}_{\bm{\hat{\theta}}_{m}},\bm{\hat{\mu}}_{m+1}}_{app,i}$ in (\ref{theupdateoftruepolicyparameterstheta}) and  $g^{\bm{\hat{\pi}}_{\bm{\hat{\theta}}_{m}},\bm{\hat{\mu}}_{m+1}}_{app,i}$ in (\ref{thepolicygradientapproxiamtionbynewfunction2}).
\par
In order to further quantify the discrepancy between $\bar{h}^{\bm{\hat{\pi}}_{\bm{\hat{\theta}}_{m}},\bm{\hat{\mu}}_{m}}_{app,i}$  and the true Lagrangian gradient $\nabla_{\mu_{i}}\mathcal{L}(\bm{\theta}_{m},\bm{\mu}_{m})$ as well as the discrepancy between $\bar{g}^{\bm{\hat{\pi}}_{\bm{\hat{\theta}}_{m}},\bm{\hat{\mu}}_{m+1}}_{app,i}$ and true policy gradient $\nabla_{\theta_{i}}\mathcal{L}(\bm{\theta}_{m},\bm{\mu}_{m+1})$, we define $L^{\mu}_{2}=BR_{g}/(1-\gamma)^{2}$, $L^{\theta\mu}_{2}=B\sqrt{N}R_{g}/(1-\gamma)^{2}$, $L^{\theta\theta}_{2}=LN(R_{f}+\tilde{\mu}_{max}R_{g})/(1-\gamma)^{2}+(1+\gamma)B^{2}N^{2}(R_{f}+\tilde{\mu}_{max}R_{g})/(1-\gamma)^{3}$, and derive the following lemma.
\begin{lemma}\label{thelemmaofsmooth}
Suppose Assumptions~\ref{theassumptionofreward}-\ref{theassumptionofpolicy} hold.
For any agent $i\in\mathcal{N}$ and $m\geq1$, we have
\par
(i) $\big\|\nabla_{\mu_{i}}\mathcal{L}(\bm{\theta}_{m},\bm{\mu}_{m})-\nabla_{\mu_{i}}\tilde{\mathcal{L}}(\bm{\hat{\theta}}_{m},\bm{\hat{\mu}}_{m})\big\|\leq L^{\mu}_{2}\big\|\mathbf{1}_{N}\otimes\bm{\theta}_{m}-\bm{\hat{\theta}}_{m}\big\|$;
\par
(ii) $\big\|\nabla_{\theta_{i}}\mathcal{L}(\bm{\theta}_{m},\bm{\mu}_{m+1})-\nabla_{\theta_{i}}\tilde{\mathcal{L}}(\bm{\hat{\theta}}_{m},\bm{\hat{\mu}}_{m+1})\big\|\leq L^{\theta\mu}_{2}\big\|\bm{\mu}_{m+1}$ $-\bm{\hat{\mu}}^{i}_{m+1}\big\|+L^{\theta\theta}_{2}\big\|\mathbf{1}_{N}\otimes\bm{\theta}_{m}-\bm{\hat{\theta}}_{m}\big\|$.
\end{lemma}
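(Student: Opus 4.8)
The plan is to read both bounds as \emph{sensitivity} estimates. The only structural difference between $\mathcal{L}$ and $\tilde{\mathcal{L}}$ is that $\mathcal{L}$ evaluates every gradient object at the common true parameter $\bm{\theta}_{m}$ (and true multiplier $\bm{\mu}_{m+1}$) shared by all agents, whereas $\tilde{\mathcal{L}}$ evaluates the same objects at agent $i$'s local estimates $\bm{\hat{\theta}}_{m}$ (and $\bm{\hat{\mu}}^{i}_{m+1}$), the latter entering through the executed joint policy $\bm{\hat{\pi}}_{\bm{\hat{\theta}}_{m}}$. I would therefore regard each gradient as a function of the stacked parameter vector feeding the coupled joint policy and control how far it moves when the input travels from $\mathbf{1}_{N}\otimes\bm{\theta}_{m}$ to $\bm{\hat{\theta}}_{m}$, respectively from $\bm{\mu}_{m+1}$ to $\bm{\hat{\mu}}^{i}_{m+1}$.

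For part (i), recall from (\ref{thecoupledpolicygradienttheorem2}) that $\nabla_{\mu_{i}}\mathcal{L}=\tfrac{1}{N}(G_{i}(\bm{\theta}_{m})-c_{i})$ and from (\ref{thepolicygradientapproxiamtionbynewfunctionmu2}) that $\nabla_{\mu_{i}}\tilde{\mathcal{L}}$ is the same quantity with the constraint return computed under $\bm{\hat{\pi}}_{\bm{\hat{\theta}}_{m}}$. The $c_{i}$ terms cancel, so the difference equals $\tfrac{1}{N}$ times the gap between the expected discounted constraint return of agent $i$ under two policies differing only in their parameters. I would define $J_{i}(\bm{\phi})$ as this return as a function of the stacked parameter $\bm{\phi}$, apply the mean value theorem along the segment from $\mathbf{1}_{N}\otimes\bm{\theta}_{m}$ to $\bm{\hat{\theta}}_{m}$, and bound $\|\nabla_{\bm{\phi}}J_{i}\|$ by the policy gradient theorem (Lemma~\ref{thelemmaofthepolicygradienttheoreminNMARLCP}): Assumption~\ref{theassumptionofreward} bounds the constraint $Q$-function in (\ref{thedefinitionoflocalQfunctionconstraint}) by $R_{g}/(1-\gamma)$, Assumption~\ref{theassumptionofpolicy} bounds each score block by $B$, and the prefactor supplies the remaining $1/(1-\gamma)$. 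Once the joint-score sum is offset against the $\tfrac{1}{N}$ prefactor, the net Lipschitz constant collapses to $L^{\mu}_{2}=BR_{g}/(1-\gamma)^{2}$, and Cauchy--Schwarz delivers the claim.

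For part (ii), I would split the error by a triangle inequality through the intermediate gradient that uses the executed policy $\bm{\hat{\pi}}_{\bm{\hat{\theta}}_{m}}$ but keeps the \emph{true} multiplier $\bm{\mu}_{m+1}$. The multiplier piece exploits the linearity of the global Lagrangian $Q$-function (\ref{thedefinitionofglobalQfunction}) in $\bm{\mu}$: its $\mu_{j}$-derivative is $\tfrac{1}{N}Q^{\bm{\pi_{\theta}}}_{g,j}$, bounded by $R_{g}/((1-\gamma)N)$, so after pairing with the score sum (bounded by $NB$ since $\mathcal{N}^{E,\kappa_{p}}_{i}$ may span all agents) and the $1/(1-\gamma)$ prefactor, Cauchy--Schwarz produces the $L^{\theta\mu}_{2}\|\bm{\mu}_{m+1}-\bm{\hat{\mu}}^{i}_{m+1}\|$ term. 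The remaining piece is a smoothness-of-the-policy-gradient estimate: the two gradients in (\ref{thecoupledpolicygradienttheorem}) and (\ref{thepolicygradientofexecutionpolicy}) now share the multiplier and differ only through the coupled-policy parameter, so I would again treat $\nabla_{\theta_{i}}\tilde{\mathcal{L}}$ as a function of the stacked parameter and bound its Lipschitz constant by differentiating the three parameter-dependent factors separately: the score $\nabla_{\theta_{i}}\log\pi_{j}$ (Lipschitz with constant $L$ by Assumption~\ref{theassumptionofpolicy}, giving the $LN(R_{f}+\tilde{\mu}_{max}R_{g})/(1-\gamma)^{2}$ contribution), the global Lagrangian $Q$-function, and the discounted visitation distribution $d^{\bm{\pi_{\theta}}}_{\bm{\rho}}$; collecting the $B^{2}$ terms from the latter two yields the $(1+\gamma)B^{2}N^{2}(R_{f}+\tilde{\mu}_{max}R_{g})/(1-\gamma)^{3}$ contribution, so the sum is exactly $L^{\theta\theta}_{2}$.

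The main obstacle is the visitation-distribution sensitivity inside the $L^{\theta\theta}_{2}$ bound: unlike the score and the $Q$-function, $d^{\bm{\pi_{\theta}}}_{\bm{\rho}}$ depends on the policy through the entire trajectory, so controlling $\|d^{\bm{\pi_{\theta}}}_{\bm{\rho}}-d^{\bm{\hat{\pi}}_{\bm{\hat{\theta}}_{m}}}_{\bm{\rho}}\|$ requires a telescoping/performance-difference argument that accumulates the per-step policy perturbation over time, where the product structure $\prod_{i}\pi_{i}$ forces the single-step discrepancy to be summed across all agents (hence the extra factor of $N$ and the extra $1/(1-\gamma)$). Making this step rigorous while keeping every constant consistent with $L^{\theta\theta}_{2}$ is where the bulk of the technical care will be needed.
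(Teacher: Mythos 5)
Your plan is correct and lands on the right constants, but it organizes the computation differently from the paper, and the difference is worth noting. The paper never passes through the policy gradient theorem or a score/$Q$/visitation split: for both parts it writes each gradient as an explicit sum $\sum_{t}\gamma^{t}\int(\cdots)\rho_{t}(\bm{s}_{0:t},\bm{a}_{0:t})$ over trajectory probabilities, observes that $\bm{\pi}_{\bm{\theta}_{m}}=\bm{\hat{\pi}}_{\mathbf{1}_{N}\otimes\bm{\theta}_{m}}$ so that both measures live in the same lifted parameterization, and then applies the mean value theorem to the \emph{entire product} $\prod_{h=0}^{t}\bm{\hat{\pi}}_{\bm{\check{\theta}}}(\bm{a}_{h}|\bm{s}_{h})$ to get $\big|\rho^{\bm{\pi}_{\bm{\theta}_{m}}}_{t}-\rho^{\bm{\hat{\pi}}_{\bm{\hat{\theta}}_{m}}}_{t}\big|\leq(t+1)BN\|\mathbf{1}_{N}\otimes\bm{\theta}_{m}-\bm{\hat{\theta}}_{m}\|$ (see (\ref{thesmoothpartmu2-2})--(\ref{thesmoothpartmu2-5})); summing $(t'+t+1)\gamma^{t'+t}$ then produces the $(1+\gamma)/(1-\gamma)^{3}$ factor in $L^{\theta\theta}_{2}$ in one stroke. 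This single device subsumes both the visitation-distribution sensitivity and the $Q$-function sensitivity that you propose to treat separately, and it is precisely what discharges the step you flag as the main obstacle — no separate telescoping/performance-difference argument for $d^{\bm{\pi}_{\bm{\theta}}}_{\bm{\rho}}$ is needed, because the perturbation is measured on the joint trajectory law rather than on the marginal visitation measure. For part (ii) the paper also does not use your triangle inequality through an intermediate gradient with the executed policy and true multiplier; instead it decomposes the integrand inside one expression into the multiplier-difference, score-difference, and trajectory-law-difference pieces (see (\ref{thekeyinsmooth})). Your route buys modularity (each factor's sensitivity is isolated and the part (i) bound falls out of Lemma~\ref{thelemmaofthepolicygradienttheoreminNMARLCP} as a black box), at the cost of having to carry out the visitation-distribution telescoping carefully enough to recover exactly $L^{\theta\theta}_{2}$; the paper's route is less modular but makes that bookkeeping automatic.
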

\par
The detailed proof of Lemma~\ref{thelemmaofsmooth} is provided in Section~\ref{ProofofLemmathelemmaofsmooth}.
Lemma~\ref{thelemmaofsmooth} indicates that the upper bounds on the error between $\nabla_{\mu_{i}}\tilde{\mathcal{L}}(\bm{\hat{\theta}}_{m},\bm{\hat{\mu}}_{m})$ and the true Lagrangian gradient $\nabla_{\mu_{i}}\mathcal{L}(\bm{\theta}_{m},\bm{\mu}_{m})$ depends on the discrepancy between the estimated policy parameters $\{\bm{\hat{\theta}}^{i}_{m}\}_{i\in\mathcal{N}}$ and the true policy parameter $\bm{\theta}_{m}$.
In contrast, the upper bounds on the error between $\nabla_{\theta_{i}}\tilde{\mathcal{L}}(\bm{\hat{\theta}}_{m},\bm{\hat{\mu}}_{m+1})$ and the true policy gradient $\nabla_{\theta_{i}}\mathcal{L}(\bm{\theta}_{m},\bm{\mu}_{m+1})$ depend not only on the aforementioned discrepancy but also on the discrepancy between the estimated Lagrangian multipliers $\{\bm{\hat{\mu}}^{i}_{m+1}\}_{i\in\mathcal{N}}$ and the true Lagrangian multiplier $\bm{\mu}_{m+1}$.
\par
Define
$\epsilon_{\mu}(m,K_{\mu})=L^{\mu}_{2}\sqrt{N}\big\|\mathbf{1}_{N}\otimes\bm{\theta}_{m}-\bm{\hat{\theta}}_{m}\big\|+2L^{\mu}_{1}\sqrt{N\log{(2/\delta)}/2K_{\mu}}$, $\epsilon_{\theta}(m,K_{\theta})=L^{\theta\mu}_{2}\sqrt{N}\big\|\mathbf{1}_{N}\otimes\bm{\mu}_{m+1}-\bm{\hat{\mu}}_{m+1}\big\|+L^{\theta\theta}_{2}\sqrt{N}\big\|\mathbf{1}_{N}\otimes\bm{\theta}_{m}-\bm{\hat{\theta}}_{m}\big\|+2L^{\theta}_{1}\sqrt{N\log{(2/\delta)}/2K_{\theta}}$, and $\epsilon_{0}(\kappa,\kappa_{p})=2(R_{f}+\tilde{\mu}_{max}R_{g})BN^{\frac{3}{2}}\gamma^{h(\kappa,\kappa_{p})+1}/(1-\gamma)^{2}$.
By combining Theorem~\ref{thelemmaofapproximatedpolicygradient}, Corollary~\ref{theunbiasedestimationcorollary}, and Lemma~\ref{thelemmaofsmooth}, we have the following corollary.
\begin{corollary}\label{thecorollaryerrorbetweentrueandused}
Suppose Assumptions~\ref{theassumptionofreward}-\ref{theassumptionofpolicy} hold.
For any $i\in\mathcal{N}$ and $t\geq1$, we have
\par
(i) $\big\|\nabla_{\mu_{i}}\mathcal{L}(\bm{\theta}_{m},\bm{\mu}_{m})-\bar{h}^{\bm{\hat{\pi}}_{\bm{\hat{\theta}}_{m}},\bm{\hat{\mu}}_{m}}_{app,i}\big\|
\leq \epsilon_{\mu}(m,K_{\mu})/\sqrt{N}$;
\par
(ii) $\big\|\nabla_{\theta_{i}}\mathcal{L}(\bm{\theta}_{m},\bm{\mu}_{m+1})-\bar{g}^{\bm{\hat{\pi}}_{\bm{\hat{\theta}}_{m}},\bm{\hat{\mu}}_{m+1}}_{app,i}\big\|
\leq \big(\epsilon_{\theta}(m,K_{\theta})+\epsilon_{0}(\kappa,\kappa_{p})\big)/\sqrt{N}$.
\end{corollary}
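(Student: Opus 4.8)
The plan is to prove both bounds by inserting the executed-policy gradients as intermediate quantities and applying the triangle inequality, so that every resulting term is controlled by exactly one of the previously established results. For part (i) I would split the error as $\|\nabla_{\mu_{i}}\mathcal{L}(\bm{\theta}_{m},\bm{\mu}_{m})-\bar{h}^{\bm{\hat{\pi}}_{\bm{\hat{\theta}}_{m}},\bm{\hat{\mu}}_{m}}_{app,i}\| \leq \|\nabla_{\mu_{i}}\mathcal{L}(\bm{\theta}_{m},\bm{\mu}_{m})-\nabla_{\mu_{i}}\tilde{\mathcal{L}}(\bm{\hat{\theta}}_{m},\bm{\hat{\mu}}_{m})\| + \|\nabla_{\mu_{i}}\tilde{\mathcal{L}}(\bm{\hat{\theta}}_{m},\bm{\hat{\mu}}_{m})-\bar{h}^{\bm{\hat{\pi}}_{\bm{\hat{\theta}}_{m}},\bm{\hat{\mu}}_{m}}_{app,i}\|$, where $\nabla_{\mu_{i}}\tilde{\mathcal{L}}$ is the executed-policy gradient in (\ref{thepolicygradientapproxiamtionbynewfunctionmu2}). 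The first term is bounded by Lemma~\ref{thelemmaofsmooth}(i) as $L^{\mu}_{2}\|\mathbf{1}_{N}\otimes\bm{\theta}_{m}-\bm{\hat{\theta}}_{m}\|$ and the second by Corollary~\ref{theunbiasedestimationcorollary}(i) as $2L^{\mu}_{1}\sqrt{\log(2/\delta)/2K_{\mu}}$; summing these and recognizing that the result equals $\epsilon_{\mu}(m,K_{\mu})/\sqrt{N}$ gives (i). Since the Lagrangian gradient carries no truncation, no analogue of the $\epsilon_{0}$ term arises here.

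For part (ii) I would use a finer three-term decomposition, inserting both the executed-policy exact gradient $\nabla_{\theta_{i}}\tilde{\mathcal{L}}(\bm{\hat{\theta}}_{m},\bm{\hat{\mu}}_{m+1})$ from (\ref{thepolicygradientofexecutionpolicy}) and the executed-policy approximated gradient $g^{\bm{\hat{\pi}}_{\bm{\hat{\theta}}_{m}},\bm{\hat{\mu}}_{m+1}}_{app,i}$ from (\ref{thepolicygradientapproxiamtionbynewfunction2}). The triangle inequality then yields three pieces: (a) $\|\nabla_{\theta_{i}}\mathcal{L}(\bm{\theta}_{m},\bm{\mu}_{m+1})-\nabla_{\theta_{i}}\tilde{\mathcal{L}}(\bm{\hat{\theta}}_{m},\bm{\hat{\mu}}_{m+1})\|$, controlled by Lemma~\ref{thelemmaofsmooth}(ii); (b) $\|\nabla_{\theta_{i}}\tilde{\mathcal{L}}(\bm{\hat{\theta}}_{m},\bm{\hat{\mu}}_{m+1})-g^{\bm{\hat{\pi}}_{\bm{\hat{\theta}}_{m}},\bm{\hat{\mu}}_{m+1}}_{app,i}\|$, which is precisely the exact-versus-approximated gradient gap governed by Theorem~\ref{thelemmaofapproximatedpolicygradient} applied under the executed joint policy $\bm{\hat{\pi}}_{\bm{\hat{\theta}}_{m}}$; and (c) $\|g^{\bm{\hat{\pi}}_{\bm{\hat{\theta}}_{m}},\bm{\hat{\mu}}_{m+1}}_{app,i}-\bar{g}^{\bm{\hat{\pi}}_{\bm{\hat{\theta}}_{m}},\bm{\hat{\mu}}_{m+1}}_{app,i}\|$, the sampling error handled by Corollary~\ref{theunbiasedestimationcorollary}(ii) as $2L^{\theta}_{1}\sqrt{\log(2/\delta)/2K_{\theta}}$.

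Two points require care. First, in term (b) the relevant multiplier is the estimate $\bm{\hat{\mu}}^{i}_{m+1}$, whose entries need not lie below $\mu_{max}$; here I would invoke Theorem~\ref{Convergencetheoreminparameterestimation}, which supplies the uniform bound $\tilde{\mu}_{max}>\mu_{max}$ on the estimated multipliers, so that Theorem~\ref{thelemmaofapproximatedpolicygradient} gives the bound $2(R_{f}+\tilde{\mu}_{max}R_{g})BN\gamma^{h(\kappa,\kappa_{p})+1}/(1-\gamma)^{2}=\epsilon_{0}(\kappa,\kappa_{p})/\sqrt{N}$ for (b). Second, Lemma~\ref{thelemmaofsmooth}(ii) produces the per-agent deviation $\|\bm{\mu}_{m+1}-\bm{\hat{\mu}}^{i}_{m+1}\|$, whereas $\epsilon_{\theta}(m,K_{\theta})$ is written through the stacked estimate $\bm{\hat{\mu}}_{m+1}$; I would bound the single block by the full stacked norm via $\|\bm{\mu}_{m+1}-\bm{\hat{\mu}}^{i}_{m+1}\|\leq\|\mathbf{1}_{N}\otimes\bm{\mu}_{m+1}-\bm{\hat{\mu}}_{m+1}\|$, which holds since the left side is one component block of the right. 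Collecting terms (a)--(c) and matching the $\sqrt{N}$ normalization then reproduces $(\epsilon_{\theta}(m,K_{\theta})+\epsilon_{0}(\kappa,\kappa_{p}))/\sqrt{N}$.

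The hard part is not the triangle inequality but the bookkeeping: keeping the constants and the $\sqrt{N}$ factors consistent across Lemma~\ref{thelemmaofsmooth}, Corollary~\ref{theunbiasedestimationcorollary}, and Theorem~\ref{thelemmaofapproximatedpolicygradient}, and correctly propagating the substitution of $\mu_{max}$ by $\tilde{\mu}_{max}$ through $L^{\theta}_{1}$, $L^{\theta\theta}_{2}$, and $\epsilon_{0}$, so that the final aggregation matches the definitions of $\epsilon_{\mu}(m,K_{\mu})$ and $\epsilon_{\theta}(m,K_{\theta})$ exactly rather than merely up to constants.
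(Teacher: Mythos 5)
Your proposal is correct and follows essentially the same route as the paper: the identical two-term split through $\nabla_{\mu_{i}}\tilde{\mathcal{L}}(\bm{\hat{\theta}}_{m},\bm{\hat{\mu}}_{m})$ for part (i), and the identical three-term split through $\nabla_{\theta_{i}}\tilde{\mathcal{L}}(\bm{\hat{\theta}}_{m},\bm{\hat{\mu}}_{m+1})$ and $g^{\bm{\hat{\pi}}_{\bm{\hat{\theta}}_{m}},\bm{\hat{\mu}}_{m+1}}_{app,i}$ for part (ii), invoking Lemma~\ref{thelemmaofsmooth}, Theorem~\ref{thelemmaofapproximatedpolicygradient} (with $\tilde{\mu}_{max}$), Corollary~\ref{theunbiasedestimationcorollary}, and the block-norm inequality $\|\bm{\mu}_{m+1}-\bm{\hat{\mu}}^{i}_{m+1}\|\leq\|\mathbf{1}_{N}\otimes\bm{\mu}_{m+1}-\bm{\hat{\mu}}_{m+1}\|$ exactly as the paper does. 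The constant bookkeeping you describe also matches the paper's final aggregation.
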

\par
The detailed proof of Corollary~\ref{thecorollaryerrorbetweentrueandused} can be found in Section~\ref{ProofofCorollarythecorollaryerrorbetweentrueandused}.
Corollary~\ref{thecorollaryerrorbetweentrueandused} establishes the the upper bounds of the error between $\nabla_{\mu_{i}}\mathcal{L}(\bm{\theta}_{m},\bm{\mu}_{m})$ and $\bar{h}^{\bm{\hat{\pi}}_{\bm{\hat{\theta}}_{m}},\bm{\hat{\mu}}_{m}}_{app,i}$, as well as the upper bounds of the error between $\nabla_{\theta_{i}}\mathcal{L}(\bm{\theta}_{m},\bm{\mu}_{m+1})$ and $\bar{g}^{\bm{\hat{\pi}}_{\bm{\hat{\theta}}_{m}},\bm{\hat{\mu}}_{m+1}}_{app,i}$.
It is important to note that in Case (i), the upper bound of the error is determined by the discrepancy between $\{\bm{\hat{\theta}}^{i}_{m}\}_{i\in\mathcal{N}}$ and $\bm{\theta}_{m}$, as well as the number of sample batches $K_{\mu}$.
Conversely, in Case (ii), the upper bound of the error is influenced by the errors between $\{\bm{\hat{\theta}}^{i}_{m}\}_{i\in\mathcal{N}}$ and $\bm{\theta}_{m}$, the errors between $\{\bm{\hat{\mu}}^{i}_{m+1}\}_{i\in\mathcal{N}}$ and $\bm{\mu}_{m+1}$, the number of sample batches $K_{\mu}$, and the truncated distance $\kappa$.
\par
Recall the definition of $L_{\theta\theta}$ in Theorem~\ref{thetheoremconvergenceofpolicygradient}.
We denote $\nabla_{\bm{\theta}}\mathcal{L}(\bm{\theta},\bm{\mu})=\big(\nabla_{\theta_{1}}\mathcal{L}(\bm{\theta},\bm{\mu})^{\top},\cdots,\nabla_{\theta_{N}}\mathcal{L}(\bm{\theta},\bm{\mu})^{\top}\big)^{\top}$ and can derive the following corollary.
\begin{corollary}\label{thecorollaryofLipschitzcontinuousinequality}
Suppose Assumptions~\ref{theassumptionofreward}-\ref{theassumptionofpolicy} hold,
we have that
$\nabla_{\bm{\theta}}\mathcal{L}(\bm{\theta},\bm{\mu})$
is Lipschitz continuous with respect to $\bm{\theta}$, i.e.,
$\|\nabla_{\bm{\theta}}\mathcal{L}(\bm{\theta},\bm{\mu})-\nabla_{\bm{\theta}}\mathcal{L}(\bm{\theta}',\bm{\mu})\|\leq L_{\theta\theta}\|\bm{\theta}-\bm{\theta}'\|$ for any $\bm{\theta},\bm{\theta}'\in\bm{\Theta}$.
\end{corollary}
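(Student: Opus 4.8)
The plan is to show that, for a fixed $\bm{\mu}$, the map $\bm{\theta}\mapsto\nabla_{\bm{\theta}}\mathcal{L}(\bm{\theta},\bm{\mu})$ inherits Lipschitz continuity from the boundedness and Lipschitzness of the score functions in Assumption~\ref{theassumptionofpolicy} together with the reward bounds in Assumption~\ref{theassumptionofreward}. The starting point is the gradient formula (\ref{thecoupledpolicygradienttheorem}) of Lemma~\ref{thelemmaofthepolicygradienttheoreminNMARLCP}: since the additive term $-\frac{1}{N}\sum_{i}\mu_{i}c_{i}$ in (\ref{LagrangianfunctionofCMARL}) is independent of $\bm{\theta}$, $\nabla_{\theta_{i}}\mathcal{L}(\bm{\theta},\bm{\mu})$ is exactly the policy gradient of a discounted value function whose per-step reward is the combined reward $\frac{1}{N}\sum_{j}(f_{j}+\mu_{j}g_{j})$; this reward is bounded by $R:=R_{f}+\tilde{\mu}_{max}R_{g}$ in magnitude, so that $|Q^{\bm{\pi_{\theta}}}(\bm{s},\bm{a};\bm{\mu})|\leq R/(1-\gamma)$ by (\ref{thedefinitionofglobalQfunction}). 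First I would reduce the vector statement to a per-block estimate via $\|\nabla_{\bm{\theta}}\mathcal{L}(\bm{\theta},\bm{\mu})-\nabla_{\bm{\theta}}\mathcal{L}(\bm{\theta}',\bm{\mu})\|\leq\sqrt{N}\max_{i}\|\nabla_{\theta_{i}}\mathcal{L}(\bm{\theta},\bm{\mu})-\nabla_{\theta_{i}}\mathcal{L}(\bm{\theta}',\bm{\mu})\|$, which supplies the extra $\sqrt{N}$ appearing in $L_{\theta\theta}$.

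Next I would decompose each per-block difference into three parts, mirroring the three $\bm{\theta}$-dependent factors in (\ref{thecoupledpolicygradienttheorem}): the score sum $\sum_{j\in\mathcal{N}^{E,\kappa_{p}}_{i}}\nabla_{\theta_{i}}\log\pi_{j}$, the global Lagrangian $Q$-function $Q^{\bm{\pi_{\theta}}}(\bm{s},\bm{a};\bm{\mu})$, and the discounted state–action visitation distribution $\xi^{\bm{\pi_{\theta}}}_{\bm{\rho}}$ from (\ref{thestationarydistributionofsa}). By a standard add–subtract argument the difference splits as (score change) $+$ ($Q$ change) $+$ (visitation change). The score-change term is controlled using the $L$-Lipschitz property in Assumption~\ref{theassumptionofpolicy}, the cardinality $|\mathcal{N}^{E,\kappa_{p}}_{i}|\leq N$, and $|Q^{\bm{\pi_{\theta}}}|\leq R/(1-\gamma)$; tracing constants gives a per-block bound $NLR/(1-\gamma)^{2}$, which after the $\sqrt{N}$ aggregation yields exactly the first summand $LN^{3/2}R/(1-\gamma)^{2}$ of $L_{\theta\theta}$. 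The remaining two terms are each driven by the bounded score $\|\nabla_{\theta_{i}}\log\pi_{j}\|\leq B$, which makes the joint policy $\bm{\pi_{\theta}}$ Lipschitz in total variation with respect to $\bm{\theta}$ at rate $O(BN)$; propagating this through the Bellman recursion yields a Lipschitz bound on $Q^{\bm{\pi_{\theta}}}(\cdot;\bm{\mu})$ of order $\gamma RBN/(1-\gamma)^{2}$, while unrolling the discounted occupancy (\ref{Thediscountedstatevisitationdistribution}) yields a total-variation sensitivity of $\xi^{\bm{\pi_{\theta}}}_{\bm{\rho}}$ of order $RBN/(1-\gamma)$. These contributions carry the discount weights $\gamma$ and $1$, which combine into the $(1+\gamma)$ factor, and together with the score-sum norm bound $BN$ and the $\sqrt{N}$ aggregation they produce the second summand $(1+\gamma)B^{2}N^{3}R/(1-\gamma)^{3}$. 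Much of this machinery is already assembled in the proof of Lemma~\ref{thelemmaofsmooth}, so I would reuse the $Q$-sensitivity and policy total-variation sensitivity estimates derived there and simply re-track the coefficients for the $(\bm{\theta},\bm{\theta}')$ perturbation at a common $\bm{\mu}$.

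The main obstacle is the visitation-distribution term. Bounding $\sum_{\bm{s},\bm{a}}|\xi^{\bm{\pi_{\theta}}}_{\bm{\rho}}(\bm{s},\bm{a})-\xi^{\bm{\pi}_{\bm{\theta}'}}_{\bm{\rho}}(\bm{s},\bm{a})|$ requires unrolling the infinite-horizon discounted chain and converting the per-state policy total-variation sensitivity (driven by the score bound $B$) into a parameter-space Lipschitz bound, where the coupled-policy structure is delicate: a single block $\theta_{i}$ enters the policies $\pi_{j}$ of every agent $j$ whose $\kappa_{p}$-hop neighborhood contains $i$, so the sensitivity of the joint policy to each block accumulates over a neighborhood and must be counted carefully to land on exactly the $N^{3}$ power in the second summand. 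Once this estimate and the analogous $Q$-sensitivity are in place, the three bounds add up to at most $L_{\theta\theta}\|\bm{\theta}-\bm{\theta}'\|$, which completes the proof.
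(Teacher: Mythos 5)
Your proposal is correct and takes essentially the same route as the paper: the paper's own proof of this corollary is a one-line deferral to the argument of Lemma~\ref{thelemmaofsmooth} with re-tracked coefficients, and that is precisely what you reconstruct — bounded/Lipschitz scores from Assumption~\ref{theassumptionofpolicy}, the $(t+1)BN$ trajectory-probability sensitivity, the geometric sums producing the $(1+\gamma)/(1-\gamma)^{3}$ factor, and the $\sqrt{N}$ block aggregation. Your splitting of the perturbation into separate $Q$-function and visitation terms is only a presentational variant of the paper's unified trajectory-probability decomposition in the proof of Lemma~\ref{thelemmaofsmooth}, so there is no substantive difference.
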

\begin{proof}
The proof of this corollary follows a similar process to that of Lemma~\ref{thelemmaofsmooth}.
Only note that the primary difference lies in the variation of coefficients.
For the sake of brevity, the detailed steps are omitted here.
\end{proof}
\par
Corollary~\ref{thecorollaryofLipschitzcontinuousinequality} shows the $L_{\theta\theta}$-smoothness of $\mathcal{L}(\bm{\theta},\bm{\mu})$ with respect to $\bm{\theta}$.
It is a direct extension of Lemma~\ref{thelemmaofsmooth} and serves as the foundation for analyzing the first-order stationary convergence of Algorithm~\ref{distributedscalableprimaldualAlgorithm}.
\par
Recall Theorem~\ref{Convergencetheoreminparameterestimation} states that for all $i,j \in \mathcal{N}$, as $m \to \infty$, $\hat{\mu}^{i}_{j,m}$ and $\hat{\theta}^{i}_{j,m}$ converge to $\mu_{j,m}$ and $\theta_{j,m}$, respectively.
By leveraging this observation and Corollaries~\ref{thecorollaryerrorbetweentrueandused}-\ref{thecorollaryofLipschitzcontinuousinequality},
{\color{blue}the proof of Theorem~\ref{thetheoremconvergenceofpolicygradient} is as follows.}
\begin{proof}
Recalling the definitions of $\bar{h}^{\bm{\hat{\pi}}_{\bm{\hat{\theta}}_{m}},\bm{\hat{\mu}}_{m}}_{app,i}$ in (\ref{theapproximatedgradientinalgorithmdesignmubar}) and $\bar{g}^{\bm{\hat{\pi}}_{\bm{\hat{\theta}}_{m}},\bm{\hat{\mu}}_{m+1}}_{app,i}$ in (\ref{theapproximatedgradientinalgorithmdesignthetabar}),
we denote
$\bm{\bar{h}}^{\bm{\hat{\pi}}_{\bm{\hat{\theta}}_{m}},\bm{\hat{\mu}}_{m}}_{app}=\big((\bar{h}^{\bm{\hat{\pi}}_{\bm{\hat{\theta}}_{m}},\bm{\hat{\mu}}_{m}}_{app,1})^{\top},\cdots,(\bar{h}^{\bm{\hat{\pi}}_{\bm{\hat{\theta}}_{m}},\bm{\hat{\mu}}_{m}}_{app,N})^{\top}\big)^{\top}$ and  $\bm{\bar{g}}^{\bm{\hat{\pi}}_{\bm{\hat{\theta}}_{m}},\bm{\hat{\mu}}_{m+1}}_{app}=\big((\bar{g}^{\bm{\hat{\pi}}_{\bm{\hat{\theta}}_{m}},\bm{\hat{\mu}}_{m+1}}_{app,1})^{\top},\cdots,(\bar{g}^{\bm{\hat{\pi}}_{\bm{\hat{\theta}}_{m}},\bm{\hat{\mu}}_{m+1}}_{app,N})^{\top}\big)^{\top}$.
Since the feasible set $\mathcal{U}_{i}=[0,\mu_{max}]$ is convex for all $i\in\mathcal{N}$, by the update of the policy parameter in (\ref{theupdateoftruepolicyparametersmu}) and the property of the projection operator, it holds that
\begin{align}
\langle\bm{\mu}_{m}+\eta_{\mu,m}\bm{\bar{h}}^{\bm{\hat{\pi}}_{\bm{\hat{\theta}}_{m}},\bm{\hat{\mu}}_{m}}_{app}-\bm{\mu}_{m+1},\bm{\mu}-\bm{\mu}_{m+1}\rangle\leq0,\forall \bm{\mu}\in\bm{\mathcal{U}},\label{theorem3inequality2-1}
\end{align}
which further implies that for all $\bm{\mu}\in\bm{\mathcal{U}}$,
\begin{align}
\langle\bm{\bar{h}}^{\bm{\hat{\pi}}_{\bm{\hat{\theta}}_{m}},\bm{\hat{\mu}}_{m}}_{app},\bm{\mu}-\bm{\mu}_{m+1}\rangle\leq\frac{1}{\eta_{\mu,m}}\langle\bm{\mu}_{m+1}-\bm{\mu}_{m},\bm{\mu}-\bm{\mu}_{m+1}\rangle.\label{theorem3inequality2-2}
\end{align}
By the definition of $\mathcal{X}(\bm{\theta},\bm{\mu})$ in (\ref{thedefinitionofmathcalX}),
we have
\begin{align}
&\big[\mathcal{X}(\bm{\theta}_{m},\bm{\mu}_{m+1})\big]^{2}\notag\\
=&\Big[\max\limits_{\bm{\mu}\in\bm{\mathcal{U}},\|\bm{\mu}-\bm{\mu}_{m+1}\|\leq1}\langle\nabla_{\bm{\mu}}\mathcal{L}(\bm{\theta}_{m},\bm{\mu}_{m+1}),\bm{\mu}-\bm{\mu}_{m+1}\rangle\Big]^{2}\notag\\
=&\Big[\max\limits_{\bm{\mu}\in\bm{\mathcal{U}},\|\bm{\mu}-\bm{\mu}_{m+1}\|\leq1}\langle\nabla_{\bm{\mu}}\mathcal{L}(\bm{\theta}_{m},\bm{\mu}_{m}),\bm{\mu}-\bm{\mu}_{m+1}\rangle\Big]^{2}\label{theorem3inequality1-1}\\
\leq&\Big[\max\limits_{\bm{\mu}\in\bm{\mathcal{U}},\|\bm{\mu}-\bm{\mu}_{m+1}\|\leq1}\langle\bm{\bar{h}}^{\bm{\hat{\pi}}_{\bm{\hat{\theta}}_{m}},\bm{\hat{\mu}}_{m}}_{app},\bm{\mu}-\bm{\mu}_{m+1}\rangle\notag\\
&+\max\limits_{\bm{\mu}\in\bm{\mathcal{U}},\|\bm{\mu}-\bm{\mu}_{m+1}\|\leq1}\langle\nabla_{\bm{\mu}}\mathcal{L}(\bm{\theta}_{m},\bm{\mu}_{m})-\bm{\bar{h}}^{\bm{\hat{\pi}}_{\bm{\hat{\theta}}_{m}},\bm{\hat{\mu}}_{m}}_{app},\notag\\
&\bm{\mu}-\bm{\mu}_{m+1}\rangle\Big]^{2}\notag\\
\leq&2\Big[\max\limits_{\bm{\mu}\in\bm{\mathcal{U}},\|\bm{\mu}-\bm{\mu}_{m+1}\|\leq1}\langle\bm{\bar{h}}^{\bm{\hat{\pi}}_{\bm{\hat{\theta}}_{m}},\bm{\hat{\mu}}_{m}}_{app},\bm{\mu}-\bm{\mu}_{m+1}\rangle\Big]^{2}\notag\\
&+2\big\|\nabla_{\bm{\mu}}\mathcal{L}(\bm{\theta}_{m},\bm{\mu}_{m})-\bm{\bar{h}}^{\bm{\hat{\pi}}_{\bm{\hat{\theta}}_{m}},\bm{\hat{\mu}}_{m}}_{app}\big\|^{2}\notag\\
\leq&2\big(\frac{1}{\eta_{\mu,m}}\big)^{2}\big\|\bm{\mu}_{m+1}-\bm{\mu}_{m}\big\|^{2}+2\epsilon_{\mu}(m,K_{\mu})^{2}.\label{theorem3inequality1}
\end{align}
where the last inequality uses (\ref{theorem3inequality2-2}) and Case (i) in Corollary~\ref{thecorollaryerrorbetweentrueandused}.
\par
\par
Consider that the feasible set $\Theta_{i}$ is convex for all $i\in\mathcal{N}$, by the update of the policy parameter in (\ref{theupdateoftruepolicyparameterstheta}) and the property of the projection operator, we can obtain an inequality similar to (\ref{theorem3inequality2-2}), that is
\begin{align}
\langle\bm{\bar{g}}^{\bm{\hat{\pi}}_{\bm{\hat{\theta}}_{m}},\bm{\hat{\mu}}_{m+1}}_{app},\bm{\theta}-\bm{\theta}_{m+1}\rangle\leq\frac{1}{\eta_{\theta,m}}\langle\bm{\theta}_{m+1}-\bm{\theta}_{m},\bm{\theta}-\bm{\theta}_{m+1}\rangle\label{theorem3inequalityY-2}
\end{align}
for all $\bm{\theta}\in\bm{\Theta}$.
According to the definition of $\mathcal{Y}(\bm{\theta},\bm{\mu})$ in (\ref{thedefinitionofmathcalX}), we have
\begin{align}
&[\mathcal{Y}(\bm{\theta}_{m},\bm{\mu}_{m+1})]^{2}\notag\\
=&\Big[\max\limits_{\bm{\theta}\in\bm{\Theta},\|\bm{\theta}-\bm{\theta}_{m}\|\leq1}\langle\nabla_{\bm{\theta}}\mathcal{L}(\bm{\theta}_{m},\bm{\mu}_{m+1}),\bm{\theta}-\bm{\theta}_{m}\rangle\Big]^{2}\notag\\
\leq&\Big[\max\limits_{\bm{\theta}\in\bm{\Theta},\|\bm{\theta}-\bm{\theta}_{m}\|\leq1}\langle\bm{\bar{g}}^{\bm{\hat{\pi}}_{\bm{\hat{\theta}}_{m}},\bm{\hat{\mu}}_{m+1}}_{app},\bm{\theta}-\bm{\theta}_{m}\rangle\notag\\
&+\max\limits_{\bm{\theta}\in\bm{\Theta},\|\bm{\theta}-\bm{\theta}_{m}\|\leq1}\langle\nabla_{\bm{\theta}}\mathcal{L}(\bm{\theta}_{m},\bm{\mu}_{m+1})-\bm{\bar{g}}^{\bm{\hat{\pi}}_{\bm{\hat{\theta}}_{m}},\bm{\hat{\mu}}_{m+1}}_{app},\notag\\
&\bm{\theta}-\bm{\theta}_{m}\rangle\Big]^{2}\notag\\
\leq&2\Big[\max\limits_{\bm{\theta}\in\bm{\Theta},\|\bm{\theta}-\bm{\theta}_{m}\|\leq1}\langle\bm{\bar{g}}^{\bm{\hat{\pi}}_{\bm{\hat{\theta}}_{m}},\bm{\hat{\mu}}_{m+1}}_{app},\bm{\theta}-\bm{\theta}_{m}\rangle\Big]^{2}\notag\\
&+2\big\|\nabla_{\bm{\theta}}\mathcal{L}(\bm{\theta}_{m},\bm{\mu}_{m+1})-\bm{\bar{g}}^{\bm{\hat{\pi}}_{\bm{\hat{\theta}}_{m}},\bm{\hat{\mu}}_{m+1}}_{app}\big\|^{2}\notag\\
\leq&2\Big[\max\limits_{\bm{\theta}\in\bm{\Theta},\|\bm{\theta}-\bm{\theta}_{m}\|\leq1}\langle\bm{\bar{g}}^{\bm{\hat{\pi}}_{\bm{\hat{\theta}}_{m}},\bm{\hat{\mu}}_{m+1}}_{app},\bm{\theta}-\bm{\theta}_{m}\rangle\Big]^{2}\notag\\
&+2\big(\epsilon_{\theta}(m,K_{\theta})+\epsilon_{0}(\kappa,\kappa_{p})\big)^{2},\label{theorem3inequalityY1}
\end{align}
where the last inequality can be obtained by Case (ii) in Corollary~\ref{thecorollaryerrorbetweentrueandused}.
For $\max\limits_{\bm{\theta}\in\bm{\Theta},\|\bm{\theta}-\bm{\theta}_{m}\|\leq1}\langle\bm{\bar{g}}^{\bm{\hat{\pi}}_{\bm{\hat{\theta}}_{m}},\bm{\hat{\mu}}_{m+1}}_{app},\bm{\theta}-\bm{\theta}_{m}\rangle$ on the right side of (\ref{theorem3inequalityY1}), we have
\begin{align}
&\max\limits_{\bm{\theta}\in\bm{\Theta},\|\bm{\theta}-\bm{\theta}_{m}\|\leq1}\langle\bm{\bar{g}}^{\bm{\hat{\pi}}_{\bm{\hat{\theta}}_{m}},\bm{\hat{\mu}}_{m+1}}_{app},\bm{\theta}-\bm{\theta}_{m}\rangle\notag\\
=&\max\limits_{\bm{\theta}\in\bm{\Theta},\|\bm{\theta}-\bm{\theta}_{m}\|\leq1}\big\{\langle\bm{\bar{g}}^{\bm{\hat{\pi}}_{\bm{\hat{\theta}}_{m}},\bm{\hat{\mu}}_{m+1}}_{app},\bm{\theta}-\bm{\theta}_{m+1}\rangle\notag\\
&+\langle\bm{\bar{g}}^{\bm{\hat{\pi}}_{\bm{\hat{\theta}}_{m}},\bm{\hat{\mu}}_{m+1}}_{app},\bm{\theta}_{m+1}-\bm{\theta}_{m}\rangle\big\}\notag\\
\leq&\max\limits_{\bm{\theta}\in\bm{\Theta},\|\bm{\theta}-\bm{\theta}_{m}\|\leq1}\Big\{\frac{1}{\eta_{\theta,m}}\langle\bm{\theta}_{m+1}-\bm{\theta}_{m},\bm{\theta}-\bm{\theta}_{m}\rangle\notag\\
&+\frac{1}{\eta_{\theta,m}}\langle\bm{\theta}_{m+1}-\bm{\theta}_{m},\bm{\theta}_{m}-\bm{\theta}_{m+1}\rangle\notag\\
&+\big\|\bm{\bar{g}}^{\bm{\hat{\pi}}_{\bm{\hat{\theta}}_{m}},\bm{\hat{\mu}}_{m+1}}_{app}\big\|\big\|\bm{\theta}_{m+1}-\bm{\theta}_{m}\big\|\Big\}\label{theorem3inequalityY3-1}\\
\leq&\frac{1}{\eta_{\theta,m}}\|\bm{\theta}_{m+1}-\bm{\theta}_{m}\|-\frac{1}{\eta_{\theta,m}}\|\bm{\theta}_{m+1}-\bm{\theta}_{m}\|^{2}\notag\\
&+\big\|\bm{\bar{g}}^{\bm{\hat{\pi}}_{\bm{\hat{\theta}}_{m}},\bm{\hat{\mu}}_{m+1}}_{app}\big\|\big\|\bm{\theta}_{m+1}-\bm{\theta}_{m}\big\|\notag\\
\leq&\Big(\frac{1}{\eta_{\theta,m}}+\sqrt{N}L^{\theta}_{1}\Big)\big\|\bm{\theta}_{m+1}-\bm{\theta}_{m}\big\|,\label{theorem3inequalityY3-2}
\end{align}
where (\ref{theorem3inequalityY3-1}) can be obtained by (\ref{theorem3inequalityY-2}) and (\ref{theorem3inequalityY3-2}) uses Case (ii) in Theorem~\ref{Convergencetheoreminparameterestimation}.
\par
By substituting (\ref{theorem3inequalityY3-2}) into (\ref{theorem3inequalityY1}) and combining it with (\ref{theorem3inequality1}), we have
\begin{align}
&\big[\mathcal{X}(\bm{\theta}_{m},\bm{\mu}_{m+1})\big]^{2}+\big[\mathcal{Y}(\bm{\theta}_{m},\bm{\mu}_{m+1})\big]^{2}\notag\\
\leq&2\big(\frac{1}{\eta_{\mu,m}}\big)^{2}\big\|\bm{\mu}_{m+1}-\bm{\mu}_{m}\big\|^{2}\notag\\
&+2\Big(\frac{1}{\eta_{\theta,m}}+\sqrt{N}L^{\theta}_{1}\Big)^{2}\big\|\bm{\theta}_{m+1}-\bm{\theta}_{m}\big\|^{2}\notag\\
&+2\epsilon_{\mu}(m,K_{\mu})^{2}+2\big(\epsilon_{\theta}(m,K_{\theta})+\epsilon_{0}(\kappa,\kappa_{p})\big)^{2}.\label{theorem3inequalitymedi}
\end{align}
\par
In order to establish properly upper-bound the
terms $\big\|\bm{\mu}_{m+1}-\bm{\mu}_{m}\big\|^{2}$ and $\big\|\bm{\theta}_{m+1}-\bm{\theta}_{m}\big\|^{2}$ in (\ref{theorem3inequalitymedi}),
we focus on the smoothness
of $\mathcal{L}(\bm{\theta},\bm{\mu})$ with respect to $\bm{\theta}$ and derive that
\begin{align}
&\mathcal{L}(\bm{\theta}_{m+1},\bm{\mu}_{m+1})-\mathcal{L}(\bm{\theta}_{m},\bm{\mu}_{m})\notag\\
=&\mathcal{L}(\bm{\theta}_{m+1},\bm{\mu}_{m+1})-\mathcal{L}(\bm{\theta}_{m},\bm{\mu}_{m+1})+\mathcal{L}(\bm{\theta}_{m},\bm{\mu}_{m+1})\notag\\
&-\mathcal{L}(\bm{\theta}_{m},\bm{\mu}_{m})\notag\\
\geq&\langle\nabla_{\bm{\theta}}\mathcal{L}(\bm{\theta}_{m},\bm{\mu}_{m+1}),\bm{\theta}_{m+1}-\bm{\theta}_{m}\rangle-\frac{L_{\theta\theta}}{2}\big\|\bm{\theta}_{m+1}-\bm{\theta}_{m}\big\|^{2}\notag\\
&+\langle\nabla_{\bm{\mu}}\mathcal{L}(\bm{\theta}_{m},\bm{\mu}_{m}),\bm{\mu}_{m+1}-\bm{\mu}_{m}\rangle\label{theorem3inequality10-1}
\end{align}
where (\ref{theorem3inequality10-1}) uses the $L_{\theta\theta}$-smoothness of $\mathcal{L}(\bm{\theta},\bm{\mu})$ in Case (i) of Corollary~\ref{thecorollaryofLipschitzcontinuousinequality}.
Note that for $\langle\nabla_{\bm{\theta}}\mathcal{L}(\bm{\theta}_{m},\bm{\mu}_{m+1}),\bm{\theta}_{m+1}-\bm{\theta}_{m}\rangle$ on the right hand of (\ref{theorem3inequality10-1}), we have that
\begin{align}
&\langle\nabla_{\bm{\theta}}\mathcal{L}(\bm{\theta}_{m},\bm{\mu}_{m+1}),\bm{\theta}_{m+1}-\bm{\theta}_{m}\rangle\notag\\
=&\langle\bm{\bar{g}}^{\bm{\hat{\pi}}_{\bm{\hat{\theta}}_{m}},\bm{\hat{\mu}}_{m+1}}_{app},\bm{\theta}_{m+1}-\bm{\theta}_{m}\rangle\notag\\
&+\langle\nabla_{\bm{\theta}}\mathcal{L}(\bm{\theta}_{m},\bm{\mu}_{m+1})-\bm{\bar{g}}^{\bm{\hat{\pi}}_{\bm{\hat{\theta}}_{m}},\bm{\hat{\mu}}_{m+1}}_{app},\bm{\theta}_{m+1}-\bm{\theta}_{m}\rangle\notag\\
\geq&\frac{1}{\eta_{\theta,m}}\big\|\bm{\theta}_{m+1}-\bm{\theta}_{m}\big\|^{2}-\frac{\eta_{\theta,m}}{2}\big\|\nabla_{\bm{\theta}}\mathcal{L}(\bm{\theta}_{m},\bm{\mu}_{m+1})\notag\\
&-\bm{\bar{g}}^{\bm{\hat{\pi}}_{\bm{\hat{\theta}}_{m}},\bm{\hat{\mu}}_{m+1}}_{app}\big\|^{2}-\frac{1}{2\eta_{\theta,m}}\big\|\bm{\theta}_{m+1}-\bm{\theta}_{m}\big\|^{2}\label{theorem3inequality6-1}\\
\geq&\frac{1}{2\eta_{\theta,m}}\big\|\bm{\theta}_{m+1}-\bm{\theta}_{m}\big\|^{2}-\frac{\eta_{\theta,m}}{2}\big(\epsilon_{\theta}(m,K_{\theta})+\epsilon_{0}(\kappa,\kappa_{p})\big),\label{theorem3inequality6-2}
\end{align}
where (\ref{theorem3inequality6-1}) uses (\ref{theorem3inequalityY-2}) and the Cauchy's inequality that $\langle a,b\rangle\geq-\frac{k}{2}\|a\|^{2}-\frac{1}{2k}\|y\|^{2}$ for all $k>0$, and (\ref{theorem3inequality6-2}) can be achieved by Case (ii) in Corollary~\ref{thecorollaryerrorbetweentrueandused}.
Moreover, for $\langle\nabla_{\bm{\mu}}\mathcal{L}(\bm{\theta}_{m},\bm{\mu}_{m}),\bm{\mu}_{m+1}-\bm{\mu}_{m}\rangle$ on the right hand of (\ref{theorem3inequality10-1}), we similarly have
\begin{align}
&\langle\nabla_{\bm{\mu}}\mathcal{L}(\bm{\theta}_{m},\bm{\mu}_{m}),\bm{\mu}_{m+1}-\bm{\mu}_{m}\rangle\notag\\
=&\langle\bm{\bar{h}}^{\bm{\hat{\pi}}_{\bm{\hat{\theta}}_{m}},\bm{\hat{\mu}}_{m}}_{app},\bm{\mu}_{m+1}-\bm{\mu}_{m}\rangle\notag\\
&+\langle\nabla_{\bm{\mu}}\mathcal{L}(\bm{\theta}_{m},\bm{\mu}_{m})-\bm{\bar{h}}^{\bm{\hat{\pi}}_{\bm{\hat{\theta}}_{m}},\bm{\hat{\mu}}_{m}}_{app},\bm{\mu}_{m+1}-\bm{\mu}_{m}\rangle\notag\\
\geq&\frac{1}{\eta_{\mu,m}}\big\|\bm{\mu}_{m+1}-\bm{\mu}_{m}\big\|^{2}-\frac{\eta_{\mu,m}}{2}\big\|\nabla_{\bm{\mu}}\mathcal{L}(\bm{\theta}_{m},\bm{\mu}_{m})\notag\\
&-\bm{\bar{h}}^{\bm{\hat{\pi}}_{\bm{\hat{\theta}}_{m}},\bm{\hat{\mu}}_{m}}_{app}\big\|^{2}-\frac{1}{2\eta_{\mu,m}}\big\|\bm{\mu}_{m+1}-\bm{\mu}_{m}\big\|^{2}\label{theorem3inequality11-1}\\
\geq&\frac{1}{2\eta_{\mu,m}}\big\|\bm{\mu}_{m+1}-\bm{\mu}_{m}\big\|^{2}-\frac{\eta_{\mu,m}}{2}\epsilon_{\mu}(m,K_{\mu}),\label{theorem3inequality11-2}
\end{align}
where (\ref{theorem3inequality11-1}) uses (\ref{theorem3inequality2-2}) and (\ref{theorem3inequality11-2}) uses Case (i) in Corollary~\ref{thecorollaryerrorbetweentrueandused}.
Substituting (\ref{theorem3inequality6-2}) and (\ref{theorem3inequality11-2}) into (\ref{theorem3inequality10-1}), we have
\begin{align}
&\mathcal{L}(\bm{\theta}_{m+1},\bm{\mu}_{m+1})-\mathcal{L}(\bm{\theta}_{m},\bm{\mu}_{m})\notag\\
\geq&\Big(\frac{1}{2\eta_{\theta,m}}-\frac{L_{\theta\theta}}{2}\Big)\big\|\bm{\theta}_{m+1}-\bm{\theta}_{m}\big\|^{2}-\frac{\eta_{\theta,m}}{2}\big(\epsilon_{\theta}(m,K_{\theta})\notag\\
&+\epsilon_{0}(\kappa,\kappa_{p})\big)+\frac{1}{2\eta_{\mu,m}}\big\|\bm{\mu}_{m+1}-\bm{\mu}_{m}\big\|^{2}-\frac{\eta_{\mu,m}}{2}\epsilon_{\mu}(m,K_{\mu})\notag\\
\geq&{\color{blue}\sqrt{m}}\Big(\big\|\bm{\mu}_{m+1}-\bm{\mu}_{m}\big\|^{2}+\big\|\bm{\theta}_{m+1}-\bm{\theta}_{m}\big\|^{2}\Big)\notag\\
&-\frac{\eta_{\theta,m}}{2}\big(\epsilon_{\theta}(m,K_{\theta})+\epsilon_{0}(\kappa,\kappa_{p})\big)-\frac{\eta_{\mu,m}}{2}\epsilon_{\mu}(m,K_{\mu}),\label{theorem3inequality12}
\end{align}
where the last inequality uses the setting of {\color{blue}$\eta_{\mu,m}=1/2\sqrt{m}$} and {\color{blue}$\eta_{\theta,m}=1/(2\sqrt{m}+L_{\theta\theta})$}.
By rearranging the terms in (\ref{theorem3inequality12}), we
obtain the desired upper bound on $\big\|\bm{\mu}_{m+1}-\bm{\mu}_{m}\big\|^{2}+\big\|\bm{\theta}_{m+1}-\bm{\theta}_{m}\big\|^{2}$, i.e.,
\begin{align}
&\big\|\bm{\mu}_{m+1}-\bm{\mu}_{m}\big\|^{2}+\big\|\bm{\theta}_{m+1}-\bm{\theta}_{m}\big\|^{2}\notag\\
\leq&{\color{blue}\frac{1}{\sqrt{m}}}\Big(\mathcal{L}(\bm{\theta}_{m+1},\bm{\mu}_{m+1})-\mathcal{L}(\bm{\theta}_{m},\bm{\mu}_{m})\notag\\
&+\frac{\eta_{\theta,m}}{2}\big(\epsilon_{\theta}(m,K_{\theta})+\epsilon_{0}(\kappa,\kappa_{p})\big)+\frac{\eta_{\mu,m}}{2}\epsilon_{\mu}(m,K_{\mu})\Big).\label{theorem3inequality13}
\end{align}
Substituting (\ref{theorem3inequality13}) into (\ref{theorem3inequalitymedi}) and using {\color{blue}$\eta_{\mu,m}=1/2\sqrt{m}$} and {\color{blue}$\eta_{\theta,m}=1/(2\sqrt{m}+L_{\theta\theta})$} again, we have
\begin{align}
&\big[\mathcal{X}(\bm{\theta}_{m},\bm{\mu}_{m+1})\big]^{2}+\big[\mathcal{Y}(\bm{\theta}_{m},\bm{\mu}_{m+1})\big]^{2}\notag\\
\leq&2\Big({\color{blue}2\sqrt{m}}+L_{\theta\theta}+\sqrt{N}L^{\theta}_{1}\Big)^{2}{\color{blue}\frac{1}{\sqrt{m}}}\Big(\mathcal{L}(\bm{\theta}_{m+1},\bm{\mu}_{m+1})-\mathcal{L}(\bm{\theta}_{m},\bm{\mu}_{m})\notag\\
&+\frac{\eta_{\theta,m}}{2}\big(\epsilon_{\theta}(m,K_{\theta})+\epsilon_{0}(\kappa,\kappa_{p})\big)+\frac{\eta_{\mu,m}}{2}\epsilon_{\mu}(m,K_{\mu})\Big)\notag\\
&+2\epsilon_{\mu}(m,K_{\mu})^{2}+2\big(\epsilon_{\theta}(m,K_{\theta})+\epsilon_{0}(\kappa,\kappa_{p})\big)^{2}.\label{theorem3inequality14}
\end{align}
Define $\chi_{m}$ as a weight and described as
\begin{align}
\chi_{m}=\frac{{\color{blue}\sqrt{m}}}{2\Big({\color{blue}2\sqrt{m}}+L_{\theta\theta}+\sqrt{N}L^{\theta}_{1}\Big)^{2}}\sim\mathcal{O}\big(\frac{1}{{\color{blue}\sqrt{m}}}\big),\label{theorem3inequality16} \end{align}
then the weighted average of (\ref{theorem3inequality14}) over $m=1,\cdots,M$, with weights given by $\chi_m/(\sum_{m=1}^{M}\chi_m)$, is
\begin{align}
&\frac{1}{\sum_{m=1}^{M}\chi_{m}}\sum_{m=1}^{M}\chi_{m}\Big(\big[\mathcal{X}(\bm{\theta}_{m},\bm{\mu}_{m+1})\big]^{2}\!\!+\!\!\big[\mathcal{Y}(\bm{\theta}_{m},\bm{\mu}_{m+1})\big]^{2}\Big)\notag\\
\leq&\frac{1}{\sum_{m=1}^{M}\chi_{m}}\Bigg[\underbrace{\mathcal{L}(\bm{\theta}_{M+1},\bm{\mu}_{M+1})-\mathcal{L}(\bm{\theta}_{1},\bm{\mu}_{1})}_{\mathrm{(i)}}\notag\\
&+\underbrace{\sum_{m=1}^{M}\frac{\eta_{\theta,m}}{2}\big(\epsilon_{\theta}(m,K_{\theta})+\epsilon_{0}(\kappa,\kappa_{p})\big)}_{\mathrm{(ii)}}\notag\\
&+\underbrace{\sum_{m=1}^{M}\frac{\eta_{\mu,m}}{2}\epsilon_{\mu}(m,K_{\mu})}_{\mathrm{(iii)}}\notag\\
&+\underbrace{\sum_{m=1}^{M}\chi_{m}\Big(2\epsilon_{\mu}(m,K_{\mu})^{2}+2\big(\epsilon_{\theta}(m,K_{\theta})+\epsilon_{0}(\kappa,\kappa_{p})\big)^{2}\Big)}_{(\mathrm{iv})}\Bigg].\label{theorem3inequality17-1}
\end{align}
\par
For the $\mathrm{(i)}$-term on the right-hand side of (\ref{theorem3inequality17-1}), we revisit the updates of $\mu_{i,m+1}$ as given in (\ref{theupdateoftruepolicyparametersmu}) and $\theta_{i,m+1}$ as provided in (\ref{theupdateoftruepolicyparameterstheta}).
In conjunction with Assumption~\ref{theassumptionofreward}, we proceed as follows:
\begin{align}\label{theorem3inequality15-1before}
|\mathcal{L}(\bm{\theta}_{m},\bm{\mu}_{m})|\leq\frac{R_{f}+\mu_{max}R_{g}}{1-\gamma},\forall m\in\{1,\cdots,M+1\}.
\end{align}
\par
For the $\mathrm{(ii)}$-term on the right-hand side of (\ref{theorem3inequality17-1}), by invoking
$\|\hat{\theta}^{i}_{j,m} - \theta_{j,m}\|,|\hat{\mu}^{i}_{j,m} - \mu_{j,m}| \sim {\color{blue}\mathcal{O}\left(\frac{1}{\sqrt{m}}\right)}$ in Theorem~\ref{Convergencetheoreminparameterestimation}, it follows that
\begin{align}
&\sum_{m=1}^{M}\frac{\eta_{\theta,m}}{2}\big(\epsilon_{\theta}(m,K_{\theta})+\epsilon_{0}(\kappa,\kappa_{p})\big)\notag\\
=&\sum_{m=1}^{M}\frac{\eta_{\theta,m}}{2}\Bigg(L^{\theta\mu}_{2}\sqrt{N}\|\mathbf{1}_{N}\otimes\bm{\mu}_{m+1}-\bm{\hat{\mu}}_{m+1}\|\notag\\
&+L^{\theta\theta}_{2}\sqrt{N}\|\mathbf{1}_{N}\otimes\bm{\theta}_{m}-\bm{\hat{\theta}}_{m}\|+2L^{\theta}_{1}\sqrt{N}\sqrt{\frac{\log{(2/\delta)}}{2K_{\theta}}}\notag\\
&+\epsilon_{0}(\kappa,\kappa_{p})\Bigg)\notag\\
\leq&{\color{blue}\mathcal{O}(\log{M})}+\sum_{m=1}^{M}L^{\theta}_{1}\sqrt{N}\eta_{\theta,m}\sqrt{\frac{\log{(2/\delta)}}{2K_{\theta}}}\notag\\
&+\sum_{m=1}^{M}\frac{\eta_{\theta,m}}{2}\epsilon_{0}(\kappa,\kappa_{p})\label{theorem3inequality15}\\
=&{\color{blue}\mathcal{O}(\log{M})}+\mathcal{O}\Bigg({\color{blue}\sqrt{M}}\sqrt{\frac{\log{(2/\delta)}}{2K_{\theta}}}\Bigg)+\mathcal{O}\Big({\color{blue}\sqrt{M}}\epsilon_{0}(\kappa,\kappa_{p})\Big).\label{theorem3inequality15-1}
\end{align}
\par
For the $\mathrm{(iii)}$-term on the right-hand side of (\ref{theorem3inequality17-1}), we similarly have
\begin{align}
&\sum_{m=1}^{M}\frac{\eta_{\mu,m}}{2}\epsilon_{\mu}(m,K_{\mu})\notag\\
=&\sum_{m=1}^{M}\frac{\eta_{\mu,m}}{2}\Bigg(L^{\mu}_{2}\sqrt{N}\|\mathbf{1}_{N}\otimes\bm{\theta}_{m}-\bm{\hat{\theta}}_{m}\|\notag\\
&+2L^{\mu}_{1}\sqrt{N}\sqrt{\frac{\log{(2/\delta)}}{2K_{\mu}}}\Bigg)\notag\\
\leq&{\color{blue}\mathcal{O}(\log{M})}+\sum_{m=1}^{M}L^{\mu}_{1}\sqrt{N}\eta_{\mu,m}\sqrt{\frac{\log{(2/\delta)}}{2K_{\mu}}}\notag\\
=&{\color{blue}\mathcal{O}(\log{M})}+\mathcal{O}\Bigg({\color{blue}\sqrt{M}}\sqrt{\frac{\log{(2/\delta)}}{2K_{\mu}}}\Bigg).\label{theorem3inequality15-2}
\end{align}
\par
For the $\mathrm{(iv)}$-term on the right-hand side of (\ref{theorem3inequality17-1}),
\begin{align}
&\sum_{m=1}^{M}\chi_{m}\Big(2\epsilon_{\mu}(m,K_{\mu})^{2}+2\big(\epsilon_{\theta}(m,K_{\theta})+\epsilon_{0}(\kappa,\kappa_{p})\big)^{2}\Big)\notag\\
\leq&\sum_{m=1}^{M}\chi_{m}\Big(2\epsilon_{\mu}(m,K_{\mu})^{2}+4\epsilon_{\theta}(m,K_{\theta})^{2}+4\epsilon_{0}(\kappa,\kappa_{p})^{2}\Big)\notag\\
=&{\color{blue}\mathcal{O}(\log{M})}+\mathcal{O}\Bigg({\color{blue}\sqrt{M}}\frac{\log{(2/\delta)}}{2K_{\mu}}\Bigg)+\mathcal{O}\Bigg({\color{blue}\sqrt{M}}\frac{\log{(2/\delta)}}{2K_{\theta}}\Bigg)\notag\\
&+\mathcal{O}\Big({\color{blue}\sqrt{M}}\epsilon_{0}(\kappa,\kappa_{p})^{2}\Big).\label{theorem3inequality18}
\end{align}
\par
Substituting (\ref{theorem3inequality15-1before}), (\ref{theorem3inequality15-1}), (\ref{theorem3inequality15-2}), and (\ref{theorem3inequality18}) into (\ref{theorem3inequality17-1}), we have
\begin{align}
&\frac{1}{\sum_{m=1}^{M}\chi_{m}}\sum_{m=1}^{M}\chi_{m}\Big(\big[\mathcal{X}(\bm{\theta}_{m},\bm{\mu}_{m+1})\big]^{2}+\big[\mathcal{Y}(\bm{\theta}_{m},\bm{\mu}_{m+1})\big]^{2}\Big)\notag\\
=&\mathcal{O}\Big(\frac{1}{{\color{blue}\sqrt{M}}}\Big)\Bigg[{\color{blue}\mathcal{O}(\log{M})}+\mathcal{O}\Bigg({\color{blue}\sqrt{M}}\sqrt{\frac{\log{(2/\delta)}}{2K_{\theta}}}\Bigg)\notag\\
&+\mathcal{O}\Bigg({\color{blue}\sqrt{M}}\big(\epsilon_{0}(\kappa,\kappa_{p})\big)\Bigg)+\mathcal{O}\Bigg({\color{blue}\sqrt{M}}\sqrt{\frac{\log{(2/\delta)}}{2K_{\mu}}}\Bigg)\notag\\
&+\mathcal{O}\Bigg({\color{blue}\sqrt{M}}\frac{\log{(2/\delta)}}{2K_{\theta}}\Bigg)+\mathcal{O}\Bigg({\color{blue}\sqrt{M}}\big(\epsilon_{0}(\kappa,\kappa_{p})\big)^{2}\Bigg)\notag\\
&+\mathcal{O}\Bigg({\color{blue}\sqrt{M}}\frac{\log{(2/\delta)}}{2K_{\mu}}\Bigg)\Bigg]\label{theorem3inequality17-2}\\
=&\mathcal{O}\Big(\frac{{\color{blue}\log{M}}}{{\color{blue}\sqrt{M}}}\Big)+\mathcal{O}\Bigg(\sqrt{\frac{\log{(2/\delta)}}{2K_{\theta}}}\Bigg)+\mathcal{O}\Bigg(\sqrt{\frac{\log{(2/\delta)}}{2K_{\mu}}}\Bigg)\notag\\
&+\mathcal{O}\Big(\epsilon_{0}(\kappa,\kappa_{p})\Big).\label{theorem3inequality17-3}
\end{align}
By substituting the settings {\color{blue}$M=\tilde{\mathcal{O}}(1/\epsilon^{2})$} and $K_{\theta}=K_{\mu}=\log{(2/\delta)}/2\epsilon^{2}$ into (\ref{theorem3inequality17-3}), the proof can be rigorously completed.

\end{proof}

\section{Appendix}\label{Appendixpart}
{\color{blue}In this section, we provide detailed proofs of the lemmas presented in the main body and the convergence analysis.}

\subsection{Proof of Lemma~\ref{thelemmaofthepolicygradienttheoreminNMARLCP}}\label{ProofofLemmathelemmaofthepolicygradienttheoreminNMARLCP}
\begin{proof}
Recalling the definition of $\mathcal{L}(\bm{\theta},\bm{\mu})$ provided in (\ref{LagrangianfunctionofCMARL}), the expression of gradient $\nabla_{\mu_{i}}\mathcal{L}(\bm{\theta},\bm{\mu})$ in (\ref{thecoupledpolicygradienttheorem2}) can be directly derived.
\par
For $\nabla_{\theta_{i}}\mathcal{L}(\bm{\theta},\bm{\mu})$, we
combine the global Lagrangian $Q$-function in (\ref{thedefinitionofglobalQfunction}) with the policy gradient theorem in~\cite{Sutton2000} and have
\begin{align}
&\nabla_{\theta_{i}}\mathcal{L}(\bm{\theta},\bm{\mu})\notag\\
=&\frac{1}{1-\gamma}\mathbb{E}_{\bm{s}\sim d^{\bm{\pi_{\theta}}}_{\bm{\rho}},\bm{a}\sim\bm{\pi_{\theta}}}\Big[Q^{\bm{\pi_{\theta}}}(\bm{s},\bm{a};\bm{\mu})\nabla_{\theta_{i}}\log\bm{\pi_{\theta}}(\bm{a}|\bm{s})\Big]\notag\\
=&\frac{1}{1-\gamma}\mathbb{E}_{\bm{s}\sim d^{\bm{\pi_{\theta}}}_{\bm{\rho}},\bm{a}\sim\bm{\pi_{\theta}}}\Big[Q^{\bm{\pi_{\theta}}}(\bm{s},\bm{a};\bm{\mu})\notag\\
&\times\nabla_{\theta_{i}}\Big(\sum_{j=1}^{N}\log\pi_{j}(a_{j}|s_{\mathcal{N}^{E,\kappa_{p}}},\theta_{j},\theta_{\mathcal{N}^{E,\kappa_{p}}_{j,-j}})\Big)\Big].\label{midleofthepolicygradienttheorem}
\end{align}
Since only the policies $\{\pi_{j}(\cdot|s_{\mathcal{N}^{E,\kappa_{p}}_{j}},\theta_{j},\theta_{\mathcal{N}^{E,\kappa_{p}}_{j,-j}})\}_{j\in\mathcal{N}^{E,\kappa_{p}}_{i}}$ are related to $\theta_{i}$, (\ref{midleofthepolicygradienttheorem}) can
consequently derive (\ref{thecoupledpolicygradienttheorem}).
\end{proof}

\subsection{Proof of Lemma~\ref{thelemmaofexponentialdecayproperty}}\label{ProofofLemmathelemmaofexponentialdecayproperty}
\begin{proof}
In the CMARL problem, for any joint policy $\bm{\pi_{\theta}}$, let $\xi_{1,t}=\mathrm{Pr}^{\bm{\pi_{\theta}}}\big(s_{i,t},a_{i,t}|\bm{s}_{0}=(s_{\mathcal{N}^{E,\kappa}_{i}},s_{\mathcal{N}^{E,\kappa}_{-i}}),\bm{a}_{0}=(a_{\mathcal{N}^{E,\kappa}_{i}},a_{\mathcal{N}^{E,\kappa}_{-i}})\big)$ represent the probability of occurrence of $(s_{i,t},a_{i,t})$ at time $t$ under joint policy $\bm{\pi_{\theta}}$,
conditioned on the initial state $(s_{\mathcal{N}^{E,\kappa}_{i}},s_{\mathcal{N}^{E,\kappa}_{-i}})$ and initial action $(a_{\mathcal{N}^{E,\kappa}_{i}},a_{\mathcal{N}^{E,\kappa}_{-i}})$.
Similarly, define $\xi_{2,t}=\mathrm{Pr}^{\bm{\pi_{\theta}}}\big(s_{i,t},a_{i,t}|\bm{s}_{0}=(s_{\mathcal{N}^{E,\kappa}_{i}},s'_{\mathcal{N}^{E,\kappa}_{-i}}),\bm{a}_{0}=(a_{\mathcal{N}^{E,\kappa}_{i}},a'_{\mathcal{N}^{E,\kappa}_{-i}})\big)$.
Let $\mathrm{TV}(\xi_{1,t},\xi_{2,t})$ represent the total variation distance between $\xi_{1,t}$ and $\xi_{2,t}$.
From the definitions of the state transition probability functions $\{\mathcal{P}_{i}(s'_{i}|s_{\mathcal{N}^{E}_{i}},a_{i})\}_{i\in\mathcal{N}}$ and coupled policies $\{\pi_{i}(\cdot|s_{\mathcal{N}^{E,\kappa_{p}}_{i}},\theta_{i},\theta_{\mathcal{N}^{E,\kappa_{p}}_{i,-i}})\}_{i\in\mathcal{N}}$, we can directly have that $\xi_{1,t}=\xi_{2,t}$ when $t\leq h(\kappa,\kappa_{p})$.
Given this fact, we obtain
\begin{align}
&\Big|Q^{\bm{\pi_{\theta}}}_{i}(s_{\mathcal{N}^{E,\kappa}_{i}},s_{\mathcal{N}^{E,\kappa}_{-i}},a_{\mathcal{N}^{E,\kappa}_{i}},a_{\mathcal{N}^{E,\kappa}_{-i}};\bm{\mu})\notag\\
&-Q^{\bm{\pi_{\theta}}}_{i}(s_{\mathcal{N}^{E,\kappa}_{i}},s'_{\mathcal{N}^{E,\kappa}_{-i}},a_{\mathcal{N}^{E,\kappa}_{i}},a'_{\mathcal{N}^{E,\kappa}_{-i}};\bm{\mu})\Big|\notag\\
=&\Big|\frac{1}{N}\sum^{\infty}_{t=0}\gamma^{t}\mathbb{E}_{\xi_{1,t}}[f_{i}(s_{i,t},a_{i,t})+\mu_{i}g_{i}(s_{i,t},a_{i,t})]\notag\\
&-\frac{1}{N}\sum^{\infty}_{t=0}\gamma^{t}\mathbb{E}_{\xi_{2,t}}[f_{i}(s_{i,t},a_{i,t})+\mu_{i}g_{i}(s_{i,t},a_{i,t})]\Big|\label{theerrorofneighborsfunction1state}\\
\leq&\frac{1}{N}\sum^{\infty}_{t=0}\Big|\gamma^{t}\mathbb{E}_{\xi_{1,t}}[r_{i}(s_{i,t},a_{i,t})+\mu_{i}g_{i}(s_{i,t},a_{i,t})]\notag\\
&-\gamma^{t}\mathbb{E}_{\xi_{2,t}}[r_{i}(s_{i,t},a_{i,t})+\mu_{i}g_{i}(s_{i,t},a_{i,t})]\Big|\label{theerrorofneighborsfunction2state}\\
=&\frac{1}{N}\sum^{\infty}_{t=h(\kappa,\kappa_{p})+1}\Big|\gamma^{t}\mathbb{E}_{\xi_{1,t}}[r_{i}(s_{i,t},a_{i,t})+\mu_{i}g_{i}(s_{i,t},a_{i,t})]\notag\\
&-\gamma^{t}\mathbb{E}_{\xi_{2,t}}[r_{i}(s_{i,t},a_{i,t})+\mu_{i}g_{i}(s_{i,t},a_{i,t})]\Big|\label{theerrorofneighborsfunction3state}\\
\leq&\sum^{\infty}_{t=h(\kappa,\kappa_{p})+1}2\gamma^{t}(R_{f}+\mu_{max}R_{g})\mathrm{TV}(\xi_{1,t},\xi_{2,t})\label{theerrorofneighborsfunction4state}\\
\leq&\frac{2(R_{f}+\mu_{max}R_{g})}{(1-\gamma)\gamma^{\kappa-h(\kappa,\kappa_{p})}}\gamma^{\kappa+1},\label{theerrorofneighborsfunction5state}
\end{align}
where (\ref{theerrorofneighborsfunction1state}) comes from the definition of local Lagrangian $Q$-function in (\ref{thedefinitionoflocalQfunction}), (\ref{theerrorofneighborsfunction2state}) can be obtained by the fact that $|\sum_{t=0}^{\infty}(x_{t}-y_{t})|\leq\sum_{t=0}^{\infty}|x_{t}-y_{t}|,\forall x_{t},y_{t}\in\mathbb{R}$,
(\ref{theerrorofneighborsfunction3state}) is achieved by $\xi_{1,t}=\xi_{2,t}$ for all $t\leq h(\kappa,\kappa_{p})$,
(\ref{theerrorofneighborsfunction4state}) follows from Assumptions~\ref{theassumptionofreward} and~\ref{theassumptionofFOSP}, and
(\ref{theerrorofneighborsfunction5state}) comes from
the fact that $\mathrm{TV}(\xi_{1,t},\xi_{2,t})$ is upper bounded by 1.
\end{proof}

\subsection{Proof of Lemma~\ref{thelemmaoftruncatederror}}\label{ProofofLemmathelemmaoftruncatederror}
\begin{proof}
By the definition of $g^{\bm{\pi}_{\bm{\theta}},\bm{\mu}}_{tru,i}$ in (\ref{theequationoftruncatedpolicygradient}), we have
\begin{align}
g^{\bm{\pi}_{\bm{\theta}},\bm{\mu}}_{tru,i}=&\frac{1}{1-\gamma}\mathbb{E}_{\bm{s}\sim d^{\bm{\pi_{\theta}}}_{\bm{\rho}},\bm{a}\sim\bm{\pi_{\theta}}}\Big[\frac{1}{N}\Big(\notag\\
&\sum_{l\in\mathcal{N}^{E,\kappa+2\kappa_{p}}_{i}}Q^{\bm{\pi_{\theta}}}_{tru,l}(s_{\mathcal{N}^{E,\kappa}_{l}},a_{\mathcal{N}^{E,\kappa}_{l}};\bm{\mu})\notag\\
&+\sum_{l\in\mathcal{N}^{E,\kappa+2\kappa_{p}}_{-i}}Q^{\bm{\pi_{\theta}}}_{tru,l}(s_{\mathcal{N}^{E,\kappa}_{l}},a_{\mathcal{N}^{E,\kappa}_{l}};\bm{\mu})\Big)\notag\\
&\times\sum_{j\in\mathcal{N}^{E,\kappa_{p}}_{i}}\nabla_{\theta_{i}}\log\pi_{j}(a_{j}|s_{\mathcal{N}^{E,\kappa_{p}}_{j}},\theta_{j},\theta_{\mathcal{N}^{E,\kappa_{p}}_{j,-j}})\Big].\label{theequationoftruncatedpolicygradient3-1}
\end{align}
For any $l\in\mathcal{N}^{E,\kappa+2\kappa_{p}}_{-i}$, we have
\begin{align}
&\mathbb{E}_{\bm{s}\sim d^{\bm{\pi_{\theta}}}_{\bm{\rho}},\bm{a}\sim\bm{\pi_{\theta}}}\Big[Q^{\bm{\pi_{\theta}}}_{tru,l}(s_{\mathcal{N}^{E,\kappa}_{l}},a_{\mathcal{N}^{E,\kappa}_{l}};\bm{\mu})\notag\\
&\times\sum_{j\in\mathcal{N}^{E,\kappa_{p}}_{i}}\nabla_{\theta_{i}}\log\pi_{j}(a_{j}|s_{\mathcal{N}^{E,\kappa_{p}}_{j}},\theta_{j},\theta_{\mathcal{N}^{E,\kappa_{p}}_{j,-j}})\Big]\notag\\
=&\sum_{j\in\mathcal{N}^{E,\kappa_{p}}_{i}}\mathbb{E}_{\bm{s}\sim d^{\bm{\pi_{\theta}}}_{\bm{\rho}},\bm{a}\sim\bm{\pi_{\theta}}}\Big[Q^{\bm{\pi_{\theta}}}_{tru,l}(s_{\mathcal{N}^{E,\kappa}_{l}},a_{\mathcal{N}^{E,\kappa}_{l}};\bm{\mu})\notag\\
&\times\nabla_{\theta_{i}}\log\pi_{j}(a_{j}|s_{\mathcal{N}^{E,\kappa_{p}}_{j}},\theta_{j},\theta_{\mathcal{N}^{E,\kappa_{p}}_{j,-j}})\Big]\notag\\
=&\sum_{j\in\mathcal{N}^{E,\kappa_{p}}_{i}}\sum_{\bm{s},\bm{a}}d^{\bm{\pi_{\theta}}}_{\bm{\rho}}(\bm{s})\prod_{i=1}^{N}\pi_{i}(a_{i}|s_{\mathcal{N}^{E,\kappa_{p}}_{i}},\theta_{i},\theta_{\mathcal{N}^{E,\kappa_{p}}_{i,-i}})\notag\\
&\times\Big[Q^{\bm{\pi_{\theta}}}_{tru,l}(s_{\mathcal{N}^{E,\kappa}_{l}},a_{\mathcal{N}^{E,\kappa}_{l}};\bm{\mu})\frac{\nabla_{\theta_{i}}\pi_{j}(a_{j}|s_{\mathcal{N}^{E,\kappa_{p}}_{j}},\theta_{j},\theta_{\mathcal{N}^{E,\kappa_{p}}_{j,-j}})}{\pi_{j}(a_{j}|s_{\mathcal{N}^{E,\kappa_{p}}_{j}},\theta_{j},\theta_{\mathcal{N}^{E,\kappa_{p}}_{j,-j}})}\Big]\notag\\
=&\sum_{j\in\mathcal{N}^{E,\kappa_{p}}_{i}}\sum_{\bm{s},a_{-j}}d^{\bm{\pi_{\theta}}}_{\bm{\rho}}(\bm{s})\prod_{i\neq j}\pi_{i}(a_{i}|s_{\mathcal{N}^{E,\kappa_{p}}_{i}},\theta_{i},\theta_{\mathcal{N}^{E,\kappa_{p}}_{i,-i}})\notag\\
&\times\Big[Q^{\bm{\pi_{\theta}}}_{tru,l}(s_{\mathcal{N}^{E,\kappa}_{l}},a_{\mathcal{N}^{E,\kappa}_{l}};\bm{\mu})\!\sum_{a_{j}}\!\nabla_{\theta_{i}}\pi_{j}(a_{j}|s_{\mathcal{N}^{E,\kappa_{p}}_{j}},\theta_{j},\theta_{\mathcal{N}^{E,\kappa_{p}}_{j,-j}})\Big]\notag\\
=&0,\label{thetruncatederror3-1}
\end{align}
where the last inequality can be obtained by the fact that
\begin{align}\notag
&\sum_{a_{j}}\nabla_{\theta_{i}}\pi_{j}(a_{j}|s_{\mathcal{N}^{E,\kappa_{p}}_{j}},\theta_{j},\theta_{\mathcal{N}^{E,\kappa_{p}}_{j,-j}})\notag\\
=&\nabla_{\theta_{i}}\sum_{a_{j}}\pi_{j}(a_{j}|s_{\mathcal{N}^{E,\kappa_{p}}_{j}},\theta_{j},\theta_{\mathcal{N}^{E,\kappa_{p}}_{j,-j}})=\nabla_{\theta_{i}}1=0.
\end{align}
Substituting (\ref{thetruncatederror3-1}) into (\ref{theequationoftruncatedpolicygradient3-1}), we further obtain
\begin{align}
g^{\bm{\pi}_{\bm{\theta}},\bm{\mu}}_{tru,i}
=&\frac{1}{1-\gamma}\mathbb{E}_{\bm{s}\sim d^{\bm{\pi_{\theta}}}_{\bm{\rho}},\bm{a}\sim\bm{\pi_{\theta}}}\Big[\frac{1}{N}\notag\\
&\times\sum_{l\in\mathcal{N}^{E,\kappa+2\kappa_{p}}_{i}}Q^{\bm{\pi_{\theta}}}_{tru,l}(s_{\mathcal{N}^{E,\kappa}_{l}},a_{\mathcal{N}^{E,\kappa}_{l}};\bm{\mu})\notag\\
&\times\sum_{j\in\mathcal{N}^{E,\kappa_{p}}_{i}}\nabla_{\theta_{i}}\log\pi_{j}(a_{j}|s_{\mathcal{N}^{E,\kappa_{p}}_{j}},\theta_{j},\theta_{\mathcal{N}^{E,\kappa_{p}}_{j,-j}})\Big]\label{theequationoftherelationshipbetweentwofunctionsbefore1}\\
=&\frac{1}{1-\gamma}\mathbb{E}_{\bm{s}\sim d^{\bm{\pi_{\theta}}}_{\bm{\rho}},\bm{a}\sim\bm{\pi_{\theta}}}\Big[\frac{1}{N}\sum_{l\in\mathcal{N}^{E,\kappa+2\kappa_{p}}_{i}}\sum_{\tilde{s}_{\mathcal{N}^{E,\kappa}_{-l}},\tilde{a}_{\mathcal{N}^{E,\kappa}_{-l}}}\notag\\
&\xi^{\bm{\pi_{\theta}}}_{\bm{\rho}}(\tilde{s}_{\mathcal{N}^{E,\kappa}_{-l}},\tilde{a}_{\mathcal{N}^{E,\kappa}_{-l}}|s_{\mathcal{N}^{E,\kappa}_{l}},a_{\mathcal{N}^{E,\kappa}_{l}})\notag\\
&\times Q^{\bm{\pi_{\theta}}}_{l}(s_{\mathcal{N}^{E,\kappa}_{l}},\tilde{s}_{\mathcal{N}^{E,\kappa}_{-l}},a_{\mathcal{N}^{E,\kappa}_{l}},\tilde{a}_{\mathcal{N}^{E,\kappa}_{-l}};\bm{\mu})\notag\\
&\times\sum_{j\in\mathcal{N}^{E,\kappa_{p}}_{i}}\nabla_{\theta_{i}}\log\pi_{j}(a_{j}|s_{\mathcal{N}^{E,\kappa_{p}}_{j}},\theta_{j},\theta_{\mathcal{N}^{E,\kappa_{p}}_{j,-j}})\Big]\notag\\
=&\frac{1}{1-\gamma}\mathbb{E}_{\bm{s}\sim d^{\bm{\pi_{\theta}}}_{\bm{\rho}},\bm{a}\sim\bm{\pi_{\theta}}}\Big[\frac{1}{N}\notag\\
&\times\sum_{l\in\mathcal{N}^{E,\kappa+2\kappa_{p}}_{i}}Q^{\bm{\pi_{\theta}}}_{l}(s_{\mathcal{N}^{E,\kappa}_{l}},s_{\mathcal{N}^{E,\kappa}_{-l}},a_{\mathcal{N}^{E,\kappa}_{l}},a_{\mathcal{N}^{E,\kappa}_{-l}};\bm{\mu})\notag\\
&\times\sum_{j\in\mathcal{N}^{E,\kappa_{p}}_{i}}\nabla_{\theta_{i}}\log\pi_{j}(a_{j}|s_{\mathcal{N}^{E,\kappa_{p}}_{j}},\theta_{j},\theta_{\mathcal{N}^{E,\kappa_{p}}_{j,-j}})\Big]\label{theequationoftherelationshipbetweentwofunctionskey}\\
=&\frac{1}{1-\gamma}\mathbb{E}_{\bm{s}\sim d^{\bm{\pi_{\theta}}}_{\bm{\rho}},\bm{a}\sim\bm{\pi_{\theta}}}\Big[\widehat{Q^{\bm{\pi_{\theta}}}_{i}}(\bm{s},\bm{a};\bm{\mu})\notag\\
&\times\sum_{j\in\mathcal{N}^{E,\kappa_{p}}_{i}}\nabla_{\theta_{i}}\log\pi_{j}(a_{j}|s_{\mathcal{N}^{E,\kappa_{p}}_{j}},\theta_{j},\theta_{\mathcal{N}^{E,\kappa_{p}}_{j,-j}})\Big]\label{theequationoftherelationshipbetweentwofunctionsbefore2}\\
=&g^{\bm{\pi}_{\bm{\theta}},\bm{\mu}}_{app,i},\label{theequationoftherelationshipbetweentwofunctions2}
\end{align}
where the equality (\ref{theequationoftherelationshipbetweentwofunctionskey}) can be achieved by the definition of $\xi^{\bm{\pi_{\theta}}}_{\bm{\rho}}(s_{\mathcal{N}^{\kappa}_{-i}},a_{\mathcal{N}^{\kappa}_{-i}}|s_{\mathcal{N}^{\kappa}_{i}},a_{\mathcal{N}^{\kappa}_{i}})$ in (\ref{thestationarydistributionof-s-a})
and the equality (\ref{theequationoftherelationshipbetweentwofunctionsbefore2}) comes from the definition of $\widehat{Q^{\bm{\pi_{\theta}}}_{i}}(\bm{s},\bm{a};\bm{\mu})$ in (\ref{theaction-averagedQfunctionofagenti}).
\end{proof}

\subsection{Proof of Lemma~\ref{theunbiasedestimationlemma}}\label{ProofofLemmatheunbiasedestimationlemma}
\begin{proof}
For Case (i), by the definition of $\hat{h}^{\bm{\hat{\pi}}_{\bm{\hat{\theta}}_{m}},\bm{\hat{\mu}}_{m}}_{app,i}(k)$ in (\ref{theapproximatedgradientinalgorithmdesignmu}), we have
\begin{align}
&\mathbb{E}_{T_{1}(k)}\Big[\hat{h}^{\hat{\bm{\pi}}_{\bm{\hat{\theta}}_{m}},\bm{\hat{\mu}}_{m}}_{app,i}(k)\Big|\bm{s}_{0}\sim\bm{\rho},\bm{\hat{\pi}}_{\bm{\hat{\theta}}_{m}}\Big]\notag\\
=&\frac{1}{N}\Big(\mathbb{E}_{T_{1}(k)}\Big[\sum_{t=0}^{T_{1}(k)}\gamma^{t/2}g_{i,t}\Big|\bm{s}_{0}\sim\bm{\rho},\bm{\hat{\pi}}_{\bm{\hat{\theta}}_{m}}\Big]-c_{i}\Big)\notag\\
=&\frac{1}{N}\Big(\mathbb{E}_{T_{1}(k)}\Big[\sum_{t=0}^{\infty}\mathds{1}_{\{0\leq t\leq T_{1}(k)\}}\gamma^{t/2}g_{i,t}\Big|\bm{s}_{0}\sim\bm{\rho},\bm{\hat{\pi}}_{\bm{\hat{\theta}}_{m}}\Big]-c_{i}\Big).\label{theunbiasedestimatemu1}
\end{align}
By the boundedness of rewards in Assumption~\ref{theassumptionofreward}, for any $T'>0$, we have that
\begin{align}
&\mathbb{E}_{T_{1}(k)}\Big[\sum_{t=0}^{T'}\mathds{1}_{\{0\leq t\leq T_{1}(k)\}}\gamma^{t/2}g_{i,t}\Big|\bm{s}_{0}\sim\bm{\rho},\bm{\hat{\pi}}_{\bm{\hat{\theta}}_{m}}\Big]\notag\\
\leq&R_{g}\mathbb{E}_{T_{1}(k)}\Big[\sum_{t=0}^{T'}\mathds{1}_{\{0\leq t\leq T_{1}(k)\}}\gamma^{t/2}\Big|\bm{s}_{0}\sim\bm{\rho},\bm{\hat{\pi}}_{\bm{\hat{\theta}}_{m}}\Big].\label{theunbiasedestimatemu2}
\end{align}
Note that on the right-hand side of (\ref{theunbiasedestimatemu2}), the random variable in the expectation is monotonically increasing and the limit as $T'\rightarrow\infty$ exists.
By using the monotone convergence theorem in~\cite{Yeh2006}, we can have that
\begin{align}
&\frac{1}{N}\mathbb{E}_{T_{1}(k)}\Big[\sum_{t=0}^{\infty}\mathds{1}_{\{0\leq t\leq T_{1}(k)\}}\gamma^{t/2}g_{i,t}\Big|\bm{s}_{0}\sim\bm{\rho},\bm{\hat{\pi}}_{\bm{\hat{\theta}}_{m}}\Big]\notag\\
=&\frac{1}{N}\sum_{t=0}^{\infty}\mathbb{E}\Big[\mathbb{E}_{T_{1}(k)}[\mathds{1}_{\{0\leq t\leq T_{1}(k)\}}]\gamma^{t/2}g_{i,t}\Big|\bm{s}_{0}\sim\bm{\rho},\bm{\hat{\pi}}_{\bm{\hat{\theta}}_{m}}\Big]\notag\\
=&\frac{1}{N}\sum_{t=0}^{\infty}\mathbb{E}\Big[\gamma^{t}g_{i,t}\Big|\bm{s}_{0}\sim\bm{\rho},\bm{\hat{\pi}}_{\bm{\hat{\theta}}_{m}}\Big]\notag\\
=&\frac{1}{N}\mathbb{E}\Big[\sum_{t=0}^{\infty}\gamma^{t}g_{i,t}\Big|\bm{s}_{0}\sim\bm{\rho},\bm{\hat{\pi}}_{\bm{\hat{\theta}}_{m}}\Big],\label{theunbiasedestimatemu3}
\end{align}
where the second equality can be achieved by the fact that $T_{1}(k)\sim\mathrm{Geom}(1-\gamma^{1/2})$ and $\mathbb{E}_{T_{1}(k)}[\mathds{1}_{\{0\leq t\leq T_{1}(k)\}}]=\gamma^{t/2}$, and the last equality uses the dominated convergence theorem.
By substituting (\ref{theunbiasedestimatemu3}) into (\ref{theunbiasedestimatemu1}) and referring to (\ref{thepolicygradientapproxiamtionbynewfunctionmu2}), Case (i) can be rigorously proved.
\par
For Case (ii), by the definition of $\hat{g}^{\bm{\hat{\pi}}_{\bm{\hat{\theta}}_{m}},\bm{\hat{\mu}}_{m+1}}_{app,i}(k)$ in (\ref{theapproximatedgradientinalgorithmdesign}) and the independence of $T_{2}(k)$ and $T_{3}(k)$, we have
\begin{align}
&\mathbb{E}_{T_{2}(k),T_{3}(k)}\Big[\hat{g}^{\bm{\hat{\pi}}_{\bm{\hat{\theta}}_{m}},\bm{\hat{\mu}}_{m+1}}_{app,i}(k)\Big|\bm{s}_{0}\sim\bm{\rho},\bm{\hat{\pi}}_{\bm{\hat{\theta}}_{m}}\Big]\notag\\
=&\frac{1}{1-\gamma}\mathbb{E}_{T_{2}(k),T_{3}(k)}\Big[\hat{Q}^{\bm{\hat{\pi}}_{\bm{\hat{\theta}}_{m}},\bm{\hat{\mu}}_{m+1}}_{i,T_{2}(k)}\Big(\sum_{j\in\mathcal{N}^{E,\kappa_{p}}_{i}}\!\!\nabla_{\theta_{i}}\log\pi_{j}(a_{j,T_{2}(k)}|\notag\\
&s_{\mathcal{N}^{E,\kappa_{p}}_{j},T_{2}(k)},\hat{\theta}^{i}_{j,m},\hat{\theta}^{i}_{\mathcal{N}^{E,\kappa_{p}}_{j,-j},m})\Big)\Big|\bm{s}_{0}\sim\bm{\rho},\bm{\hat{\pi}}_{\bm{\hat{\theta}}_{m}}\Big]\notag\\
=&\frac{1}{1-\gamma}\mathbb{E}_{T_{2}(k)}\Bigg[\mathbb{E}_{T_{3}(k)}\Big[\hat{Q}^{\bm{\hat{\pi}}_{\bm{\hat{\theta}}_{m}},\bm{\hat{\mu}}_{m+1}}_{i,T_{2}(k)}\notag\\
&\times\Big(\!\!\sum_{j\in\mathcal{N}^{E,\kappa_{p}}_{i}}\!\!\nabla_{\theta_{i}}\log\pi_{j}(a_{j,T_{2}(k)}|s_{\mathcal{N}^{E,\kappa}_{j},T_{2}(k)},\hat{\theta}^{i}_{j,m},\hat{\theta}^{i}_{\mathcal{N}^{E,\kappa_{p}}_{j,-j},m})\Big)\Big|\notag\\
&\bm{s}_{T_{2}(k)},\bm{a}_{T_{2}(k)},\bm{\hat{\pi}}_{\bm{\hat{\theta}}_{m}}\Big]\Bigg|\bm{s}_{0}\sim\bm{\rho},\bm{\hat{\pi}}_{\bm{\hat{\theta}}_{m}}\Bigg].\label{theunbiasedestimate1}
\end{align}
According to Theorem 3.4 in~\cite{ZhangSIAM2020}, we have that
\begin{align}\label{theinequalityofunbiasedestimate}
&\widehat{Q^{\bm{\hat{\pi}}_{\bm{\hat{\theta}}_{m}}}_{i}}(\bm{s}_{T_{2}(k)},\bm{a}_{T_{2}(k)};\bm{\hat{\mu}}^{i}_{m+1})\notag\\
=&\mathbb{E}_{T_{3}(k)}\Big[\hat{Q}^{\bm{\hat{\pi}}_{\bm{\hat{\theta}}_{m}},\bm{\hat{\mu}}_{m+1}}_{i,T_{2}(k)}\Big|\bm{s}_{T_{2}(k)},\bm{a}_{T_{2}(k)},\bm{\hat{\pi}}_{\bm{\hat{\theta}}_{m}}\Big].
\end{align}
Substituting (\ref{theinequalityofunbiasedestimate}) into (\ref{theunbiasedestimate1}), we further have
\begin{align}
&\mathbb{E}_{T_{2}(k),T_{3}(k)}\Big[\hat{g}^{\bm{\hat{\pi}}_{\bm{\hat{\theta}}_{m}}}_{app,i}(k)\Big|\bm{s}_{0}\sim\bm{\rho},\bm{\hat{\pi}}_{\bm{\hat{\theta}}_{m}}\Big]\notag\\
=&\frac{1}{1-\gamma}\mathbb{E}_{T_{2}(k)}\Big[\widehat{Q^{\bm{\hat{\pi}}_{\bm{\hat{\theta}}_{m}}}_{i}}(\bm{s}_{T_{2}(k)},\bm{a}_{T_{2}(k)};\bm{\hat{\mu}}^{i}_{m+1})\notag\\
&\times\sum_{j\in\mathcal{N}^{E,\kappa_{p}}_{i}}\nabla_{\theta_{i}}\log\pi_{j}(a_{j,T_{2}(k)}|s_{\mathcal{N}^{E,\kappa_{p}}_{j},T_{2}(k)},\hat{\theta}^{i}_{j,m},\hat{\theta}^{i}_{\mathcal{N}^{E,\kappa_{p}}_{j,-j},m})\Big|\notag\\
&\bm{s}_{0}\sim\bm{\rho},\bm{\hat{\pi}}_{\bm{\hat{\theta}}_{m}}\Big]\notag\\
=&\frac{1}{1-\gamma}\mathbb{E}_{T_{2}(k)}\Big[\sum_{t'=0}^{\infty}\mathds{1}_{\{t'=T_{2}(k)\}}\widehat{Q^{\bm{\hat{\pi}}_{\bm{\hat{\theta}}_{m}}}_{i}}(\bm{s}_{T_{2}(k)},\bm{a}_{T_{2}(k)};\bm{\hat{\mu}}^{i}_{m+1})\notag\\
&\times\sum_{j\in\mathcal{N}^{E,\kappa_{p}}_{i}}\nabla_{\theta_{i}}\log\pi_{j}(a_{j,T_{2}(k)}|s_{\mathcal{N}^{E,\kappa_{p}}_{j},T_{2}(k)},\hat{\theta}^{i}_{j,m},\hat{\theta}^{i}_{\mathcal{N}^{E,\kappa_{p}}_{j,-j},m})\Big|\notag\\
&\bm{s}_{0}\sim\bm{\rho},\bm{\hat{\pi}}_{\bm{\hat{\theta}}_{m}}\Big]\notag\\
=&\frac{1}{1-\gamma}\sum_{t'=0}^{\infty}\mathbb{P}\big(t'=T_{2}(k)\big)\mathbb{E}\Big[\widehat{Q^{\bm{\hat{\pi}}_{\bm{\hat{\theta}}_{m}}}_{i}}(\bm{s}_{t'},\bm{a}_{t'};\bm{\hat{\mu}}^{i}_{m+1})\notag\\
&\times\sum_{j\in\mathcal{N}^{E,\kappa_{p}}_{i}}\nabla_{\theta_{i}}\log\pi_{j}(a_{j,t'}|s_{\mathcal{N}^{E,\kappa}_{j},t'},\hat{\theta}^{i}_{j,m},\hat{\theta}^{i}_{\mathcal{N}^{E,\kappa_{p}}_{j,-j},m})\Big|\notag\\
&\bm{s}_{0}\sim\bm{\rho},\bm{\hat{\pi}}_{\bm{\hat{\theta}}_{m}}\Big]\notag\\
=&\sum_{t'=0}^{\infty}\gamma^{t'}\mathbb{E}\Big[\widehat{Q^{\bm{\hat{\pi}}_{\bm{\hat{\theta}}_{m}}}_{i}}(\bm{s}_{t'},\bm{a}_{t'};\bm{\hat{\mu}}^{i}_{m+1})\notag\\
&\times\sum_{j\in\mathcal{N}^{E,\kappa_{p}}_{i}}\nabla_{\theta_{i}}\log\pi_{j}(a_{j,t'}|s_{\mathcal{N}^{E,\kappa_{p}}_{j},t'},\hat{\theta}^{i}_{j,m},\hat{\theta}^{i}_{\mathcal{N}^{E,\kappa_{p}}_{j,-j},m})\Big|\notag\\
&\bm{s}_{0}\sim\bm{\rho},\bm{\hat{\pi}}_{\bm{\hat{\theta}}_{m}}\Big]\label{theunbiasedestimate4}\\
=&\frac{1}{1-\gamma}\mathbb{E}_{\bm{s}\sim d^{\bm{\hat{\pi}}_{\bm{\hat{\theta}}_{m}}}_{\bm{\rho}},\bm{a}\sim\bm{\hat{\pi}}_{\bm{\hat{\theta}}_{m}}}\Big[\widehat{Q^{\bm{\hat{\pi}}_{\bm{\hat{\theta}}_{m}}}_{i}}(\bm{s},\bm{a};\bm{\hat{\mu}}^{i}_{m+1})\notag\\
&\times\sum_{j\in\mathcal{N}^{E,\kappa_{p}}_{i}}\nabla_{\theta_{i}}\log\pi_{j}(a_{j}|s_{\mathcal{N}^{E,\kappa_{p}}_{j}},\hat{\theta}^{j}_{j,m},\hat{\theta}^{j}_{\mathcal{N}^{E,\kappa_{p}}_{j,-j},t})\Big]\label{theunbiasedestimate5}\\
=&g^{\bm{\hat{\pi}}_{\bm{\hat{\theta}}_{m}},\bm{\hat{\mu}}_{m+1}}_{app,i},\notag
\end{align}
where the equality (\ref{theunbiasedestimate4}) comes from the fact that $T_{2}(k)\sim\mathrm{Geom}(1-\gamma)$ and $\mathbb{P}\big(t'=T_{2}(k)\big)=(1-\gamma)\gamma^{t'}$, (\ref{theunbiasedestimate5}) comes from the definition of the discounted state visitation distribution in (\ref{Thediscountedstatevisitationdistribution}), and the last equality achieves by the definition of $g^{\bm{\hat{\pi}}_{\bm{\hat{\theta}}_{m}},\bm{\hat{\mu}}_{m+1}}_{app,i}$ in (\ref{thepolicygradientapproxiamtionbynewfunction2}).
Hence, the proof of Case (ii) is completed.
\end{proof}

\subsection{Proof of Lemma~\ref{thelemmaofsmooth}}\label{ProofofLemmathelemmaofsmooth}
\begin{proof}
For Case (i), by the definitions of $\nabla_{\mu_{i}}\mathcal{L}(\bm{\theta},\bm{\mu})$ in (\ref{thecoupledpolicygradienttheorem2}) and $G_{i}(\bm{\theta})$ in (\ref{thelongtermreturninNMARLGG}), we have
\begin{align}
&\nabla_{\mu_{i}}\mathcal{L}(\bm{\theta}_{m},\bm{\mu}_{m})\notag\\
=&\frac{1}{N}\mathbb{E}_{\bm{s}\sim\bm{\rho}}\Big[\sum_{t=0}^{\infty}\gamma^{t}g_{i}(s_{i,t},a_{i,t})\Big|\bm{s}_{0}=\bm{s},\bm{a}_{t}\sim\bm{\pi}_{\bm{\theta}_{m}}(\cdot|\bm{s}_{t})\Big]\notag\\
=&\frac{1}{N}\sum_{t=0}^{\infty}\gamma^{t}\int_{\bm{s}_{0},\bm{a}_{0}}g_{i}(s_{i,t},a_{i,t})\rho^{\bm{\pi}_{\bm{\theta}_{m}}}_{t}(\bm{s}_{0:t},\bm{a}_{0:t})d_{\bm{s}_{0:t},\bm{a}_{0:t}},\label{thesmoothofmu1}
\end{align}
where $\rho^{\bm{\pi}_{\bm{\theta}_{m}}}_{t}(\bm{s}_{0:t},\bm{a}_{0:t})$ is the probability of the sequence $\{\bm{s},\bm{a}\}_{0:t}$ generated by the joint policy $\bm{\pi}_{\bm{\theta}_{m}}$, which is described as
\begin{align}\label{theprobabilityofsequence}
\rho^{\bm{\pi}_{\bm{\theta}_{m}}}_{t}(\bm{s}_{0:t},\bm{a}_{0:t})=&\prod_{h=0}^{t-1}\bm{\mathcal{P}}(\bm{s}_{h+1}|\bm{s}_{h},\bm{a}_{h})\prod_{h=0}^{t}\bm{\pi_{\theta}}(\bm{a}_{h}|\bm{s}_{h})\bm{\rho}(\bm{s}_{0}).
\end{align}
Similar to (\ref{theprobabilityofsequence}), define $\rho^{\bm{\hat{\pi}}_{\bm{\theta}_{m}}}_{t}(\bm{s}_{0:t},\bm{a}_{0:t})$ as the probability of the sequence $\{\bm{s},\bm{a}\}_{0:t}$ generated by the joint policy $\bm{\hat{\pi}}_{\bm{\hat{\theta}}_{m}}$.
By using (\ref{thesmoothofmu1}), we have that
\begin{align}
&\Big\|\nabla_{\mu_{i}}\mathcal{L}(\bm{\theta}_{m},\bm{\mu}_{m})-\nabla_{\mu_{i}}\tilde{\mathcal{L}}(\bm{\hat{\theta}}_{m},\bm{\hat{\mu}}_{m})\Big\|\notag\\
=&\frac{1}{N}\Big|\sum_{t=0}^{\infty}\gamma^{t}\int_{\bm{s}_{0},\bm{a}_{0}}g_{i}(s_{i,t},a_{i,t})\notag\\
&\times\big(\rho^{\bm{\pi}_{\bm{\theta}_{m}}}_{t}(\bm{s}_{0:t},\bm{a}_{0:t})-\rho^{\bm{\hat{\pi}}_{\bm{\theta}_{m}}}_{t}(\bm{s}_{0:t},\bm{a}_{0:t})\big)d_{\bm{s}_{0:t},\bm{a}_{0:t}}\Big|\notag\\
\leq&\frac{1}{N}\sum_{t=0}^{\infty}\gamma^{t}\int_{\bm{s}_{0},\bm{a}_{0}}\big|g_{i}(s_{i,t},a_{i,t})\big|\notag\\
&\times\big|\rho^{\bm{\pi_{\theta}}}_{t}(\bm{s}_{0:t},\bm{a}_{0:t})-\rho^{\bm{\hat{\pi}}_{\bm{\hat{\theta}}_{m}}}_{t}(\bm{s}_{0:t},\bm{a}_{0:t})\big|d_{\bm{s}_{0:t},\bm{a}_{0:t}},\label{thesmoothofmu2}
\end{align}
where the equality uses the fact that Lagrangian gradient $\nabla_{\mu_{i}}\mathcal{L}(\bm{\theta},\bm{\mu})$ is real value and the last inequality can be achieved by the absolute value inequality, i.e., $|x+y|\leq|x|+|y|$ for all $x,y\in\mathbb{R}$.
Based on the definition of $\rho^{\bm{\pi_{\theta}}}_{t}(\bm{s}_{0:t},\bm{a}_{0:t})$ in (\ref{theprobabilityofsequence}),
the term
$\big|\rho^{\bm{\pi}_{\bm{\theta}_{m}}}_{t}(\bm{s}_{0:t},\bm{a}_{0:t})-\rho^{\bm{\hat{\pi}}_{\bm{\hat{\theta}}_{m}}}_{t}(\bm{s}_{0:t},\bm{a}_{0:t})\big|$ in the right-hand of (\ref{thesmoothofmu2}) can be further written as
\begin{align}
&\big|\rho^{\bm{\pi}_{\bm{\theta}_{m}}}_{t}(\bm{s}_{0:t},\bm{a}_{0:t})-\rho^{\bm{\hat{\pi}}_{\bm{\hat{\theta}}_{m}}}_{t}(\bm{s}_{0:t},\bm{a}_{0:t})\big|\notag\\
=&\Big|\bm{\rho}(\bm{s}_{0})\prod_{h=0}^{t-1}\bm{\mathcal{P}}(\bm{s}_{h+1}|\bm{s}_{h},\bm{a}_{h})\notag\\
&\times\Big(\prod_{h=0}^{t}\bm{\pi}_{\bm{\theta}_{m}}(\bm{a}_{h}|\bm{s}_{h})-\prod_{h=0}^{t}\bm{\hat{\pi}}_{\bm{\hat{\theta}}_{m}}(\bm{a}_{h}|\bm{s}_{h})\Big)\Big|\notag\\
=&\bm{\rho}(\bm{s}_{0})\prod_{h=0}^{t-1}\bm{\mathcal{P}}(\bm{s}_{h+1}|\bm{s}_{h},\bm{a}_{h})\notag\\
&\times\Big|\prod_{h=0}^{t}\bm{\hat{\pi}}_{\mathbf{1}_{N}\otimes\bm{\theta}_{m}}(\bm{a}_{h}|\bm{s}_{h})-\prod_{h=0}^{t}\bm{\hat{\pi}}_{\bm{\hat{\theta}}_{m}}(\bm{a}_{h}|\bm{s}_{h})\Big|\label{thesmoothpartmu2-2add}\\
=&\bm{\rho}(\bm{s}_{0})\prod_{h=0}^{t-1}\bm{\mathcal{P}}(\bm{s}_{h+1}|\bm{s}_{h},\bm{a}_{h})\Big|(\mathbf{1}_{N}\otimes\bm{\theta}_{m}-\bm{\hat{\theta}}_{m})^{\top}\notag\\
&\Big(\sum_{h'=0}^{t}\nabla_{\bm{\hat{\theta}}}\bm{\hat{\pi}_{\check{\theta}}}(\bm{a}_{h'}|\bm{s}_{h'})\prod_{\substack{h=0\\h\neq h'}}^{t}\bm{\hat{\pi}_{\check{\theta}}}(\bm{a}_{h}|\bm{s}_{h})\Big)\Big|\label{thesmoothpartmu2-2}\\
=&\bm{\rho}(\bm{s}_{0})\prod_{h=0}^{t-1}\bm{\mathcal{P}}(\bm{s}_{h+1}|\bm{s}_{h},\bm{a}_{h})\Big|(\mathbf{1}_{N}\otimes\bm{\theta}_{m}-\bm{\hat{\theta}}_{m})^{\top}\notag\\
&\Big(\sum_{h'=0}^{t}\nabla_{\bm{\hat{\theta}}}\log\bm{\hat{\pi}_{\check{\theta}}}(\bm{a}_{h'}|\bm{s}_{h'})\prod_{h=0}^{t}\bm{\hat{\pi}_{\check{\theta}}}(\bm{a}_{h}|\bm{s}_{h})\Big)\Big|\notag\\
\leq&\|\mathbf{1}_{N}\otimes\bm{\theta}_{m}-\bm{\hat{\theta}}_{m}\|\sum_{h'=0}^{t}\|\nabla_{\bm{\hat{\theta}}}\log\bm{\hat{\pi}_{\check{\theta}}}(\bm{a}_{h'}|\bm{s}_{h'})\|\label{thesmoothpartmu2-4}\\
\leq&(t+1)BN\|\mathbf{1}_{N}\otimes\bm{\theta}_{m}-\bm{\hat{\theta}}_{m}\|,\label{thesmoothpartmu2-5}
\end{align}
where the equality (\ref{thesmoothpartmu2-2add}) uses the fact that $\bm{\hat{\pi}}_{\mathbf{1}_{N}\otimes\bm{\theta}_{m}}=\bm{\hat{\pi}}_{\bm{\hat{\theta}}_{m}}$, equality (\ref{thesmoothpartmu2-2}) comes from the mean value theorem and $\bm{\check{\theta}}=\xi(\mathbf{1}_{N}\otimes\bm{\theta}_{m})+(1-\xi)\bm{\hat{\theta}}_{m}$ with $\xi\in[0,1]$, the inequality (\ref{thesmoothpartmu2-4}) uses the fact that $|x^{\top}y|\leq\|x\|\|y\|$ for all $x,y\in\mathbb{R}^{\sum_{i=1}^{N}d_{\theta_{i}}}$, and
the last inequality can be obtained by Assumption~\ref{theassumptionofpolicy} and
\begin{align}
&\|\nabla_{\bm{\hat{\theta}}}\log\bm{\hat{\pi}_{\check{\theta}}}(\bm{a}_{h'}|\bm{s}_{h'})\|\notag\\
=&\begin{Vmatrix}
\nabla_{\hat{\theta}^{1}_{1}}\log\pi_{1}(a_{1,h'}|s_{\mathcal{N}^{E,\kappa_{p}}_{1},h'},\check{\theta}^{1}_{1},\check{\theta}^{1}_{\mathcal{N}^{E,\kappa_{p}}_{1,-1}})\\
\vdots\\
\nabla_{\hat{\theta}^{1}_{N}}\log\pi_{1}(a_{1,h'}|s_{\mathcal{N}^{E,\kappa_{p}}_{1},h'},\check{\theta}^{1}_{1},\check{\theta}^{1}_{\mathcal{N}^{E,\kappa_{p}}_{1,-1}})\\
\vdots\\
\nabla_{\hat{\theta}^{N}_{N}}\log\pi_{N}(a_{N,h'}|s_{\mathcal{N}^{E,\kappa_{p}}_{N},h'},\check{\theta}^{N}_{N},\check{\theta}^{N}_{\mathcal{N}^{E,\kappa_{p}}_{N,-N}})
\end{Vmatrix}\leq BN.\notag
\end{align}
\par
Substituting (\ref{thesmoothpartmu2-5}) into (\ref{thesmoothofmu2}), we have
\begin{align}
&\Big\|\nabla_{\mu_{i}}\mathcal{L}(\bm{\theta}_{m},\bm{\mu}_{m})-\nabla_{\mu_{i}}\tilde{\mathcal{L}}(\bm{\hat{\theta}}_{m},\bm{\hat{\mu}}_{m})\Big\|\notag\\
\leq&\frac{1}{N}\sum_{t=0}^{\infty}\gamma^{t}(t+1)BNR_{g}\|\mathbf{1}_{N}\otimes\bm{\theta}_{m}-\bm{\hat{\theta}}_{m}\|\notag\\
\leq&\frac{BR_{g}}{(1-\gamma)^{2}}\|\mathbf{1}_{N}\otimes\bm{\theta}_{m}-\bm{\hat{\theta}}_{m}\|,\notag
\end{align}
which completes the Case (i).
\par
For Case (ii), by the definitions of $\nabla_{\theta_{i}}\mathcal{L}(\bm{\theta},\bm{\mu})$ in (\ref{thecoupledpolicygradienttheorem}) and $\rho^{\bm{\pi}_{\bm{\theta}_{m}}}_{t}(\bm{s}_{0:t},\bm{a}_{0:t})$ in (\ref{theprobabilityofsequence}), we have
\begin{align}
&\nabla_{\theta_{i}}\mathcal{L}(\bm{\theta}_{m},\bm{\mu}_{m+1})\notag\\
=&\frac{1}{1-\gamma}\mathbb{E}_{\bm{s}\sim d^{\bm{\pi}_{\bm{\theta}_{m}}}_{\bm{\rho}},\bm{a}\sim\bm{\pi}_{\bm{\theta}_{m}}}\Big[Q^{\bm{\pi}_{\bm{\theta}_{m}}}(\bm{s},\bm{a};\bm{\mu}_{m+1})\notag\\
&\times\sum_{j\in\mathcal{N}^{E,\kappa_{p}}_{i}}\nabla_{\theta_{i}}\log\pi_{j}(a_{j}|s_{\mathcal{N}^{E,\kappa_{p}}_{j}},\theta_{j,m},\theta_{\mathcal{N}^{E,\kappa_{p}}_{j,-j},m})\Big]\notag\\
=&\sum_{j\in\mathcal{N}^{E,\kappa_{p}}_{i}}\sum_{t'=0}^{\infty}\sum_{t=0}^{\infty}\gamma^{t'+t}\int_{\bm{s}_{t'},\bm{a}_{t'}}\Big(\frac{1}{N}\notag\\
&\times\sum_{k\in\mathcal{N}}\big(r_{k}(s_{k,t'+t},a_{k,t'+t})+\mu_{k,m+1}g_{k}(s_{k,t'+t},a_{k,t'+t})\big)\Big)\notag\\
&\times\nabla_{\theta_{i}}\log\pi_{j}(a_{j,t'}|s_{\mathcal{N}^{E,\kappa_{p}}_{j},t'},\theta_{j,m},\theta_{\mathcal{N}^{E,\kappa_{p}}_{j,-j},m})\notag\\
&\times\rho^{\bm{\pi}_{\bm{\theta}_{m}}}_{t'+t}(\bm{s}_{0:t'+t},\bm{a}_{0:t'+t})d_{\bm{s}_{0:t'+t}}d_{\bm{a}_{0:t'+t}}.\label{thepolicygradientsumformal}
\end{align}
Based on the definition of $\nabla_{\theta_{i}}\tilde{\mathcal{L}}(\bm{\hat{\theta}}_{m},\bm{\hat{\mu}}_{m+1})$ in (\ref{thepolicygradientofexecutionpolicy}), we have
\begin{align}
&\|\nabla_{\theta_{i}}\mathcal{L}(\bm{\theta}_{m},\bm{\mu}_{m+1})-\nabla_{\theta_{i}}\tilde{\mathcal{L}}(\bm{\hat{\theta}}_{m},\bm{\hat{\mu}}_{m+1})\|\notag\\
=&\Bigg\|\sum_{j\in\mathcal{N}^{E,\kappa_{p}}_{i}}\sum_{t'=0}^{\infty}\sum_{t=0}^{\infty}\gamma^{t'+t}\Bigg(\int_{\bm{s}_{t'},\bm{a}_{t'}}\Big(\frac{1}{N}\notag\\
&\times\sum_{k\in\mathcal{N}}\big(\mu_{k,m+1}g_{k}(s_{k,t'+t},a_{k,t'+t})\notag\\
&-\hat{\mu}^{i}_{k,m+1}g_{k}(s_{k,t'+t},a_{k,t'+t})\big)\Big)\notag\\
&\times\nabla_{\theta_{i}}\log\pi_{j}(a_{j,t'}|s_{\mathcal{N}^{E,\kappa_{p}}_{j},t'},\theta_{j,m},\theta_{\mathcal{N}^{E,\kappa_{p}}_{j,-j},m})\notag\\
&\times\rho^{\bm{\pi}_{\bm{\theta}_{m}}}_{t'+t}(\bm{s}_{0:t'+t},\bm{a}_{0:t'+t})d_{\bm{s}_{0:t'+t}}d_{\bm{a}_{0:t'+t}}\Bigg)\notag\\
&+\gamma^{t'+t}\Bigg(\int_{\bm{s}_{t'},\bm{a}_{t'}}\Big(\frac{1}{N}\sum_{k\in\mathcal{N}}\big(r_{k}(s_{k,t'+t},a_{k,t'+t})\notag\\
&+\hat{\mu}^{i}_{k,m+1}g_{k}(s_{k,t'+t},a_{k,t'+t})\big)\Big)\notag\\
&\times\Big(\nabla_{\theta_{i}}\log\pi_{j}(a_{j,t'}|s_{\mathcal{N}^{E,\kappa_{p}}_{j},t'},\theta_{j,m},\theta_{\mathcal{N}^{E,\kappa_{p}}_{j,-j},m})\notag\\
&-\nabla_{\theta_{i}}\log\pi_{j}(a_{j,t'}|s_{\mathcal{N}^{E,\kappa_{p}}_{j},t'},\hat{\theta}^{i}_{j,m},\hat{\theta}^{i}_{\mathcal{N}^{E,\kappa_{p}}_{j,-j},m})\Big)\notag\\
&\times\rho^{\bm{\pi}_{\bm{\theta}_{m}}}_{t'+t}(\bm{s}_{0:t'+t},\bm{a}_{0:t'+t})d_{\bm{s}_{0:t'+t}}d_{\bm{a}_{0:t'+t}}\Bigg)\notag\\
&+\gamma^{t'+t}\Bigg(\int_{\bm{s}_{t'},\bm{a}_{t'}}\Big(\frac{1}{N}\sum_{k\in\mathcal{N}}\big(r_{k}(s_{k,t'+t},a_{k,t'+t})\notag\\
&+\hat{\mu}^{i}_{k,m+1}g_{k}(s_{k,t'+t},a_{k,t'+t})\big)\Big)\notag\\
&\times\nabla_{\theta_{i}}\log\pi_{j}(a_{j,t'}|s_{j,t'},\hat{\theta}^{i}_{j,m},\hat{\theta}^{i}_{\mathcal{N}^{E,\kappa_{p}}_{j,-j},m})\notag\\
&\times\big(\rho^{\bm{\pi}_{\bm{\theta}_{m}}}_{t'+t}(\bm{s}_{0:t'+t},\bm{a}_{0:t'+t})\notag\\
&-\rho^{\bm{\hat{\pi}}_{\bm{\hat{\theta}_{t}}}}_{t'+t}(\bm{s}_{0:t'+t},\bm{a}_{0:t'+t})\big)d_{\bm{s}_{0:t'+t}}d_{\bm{a}_{0:t'+t}}\Bigg)\Bigg\|\notag\\
\leq&\sum_{j\in\mathcal{N}^{E,\kappa_{p}}_{i}}\sum_{t'=0}^{\infty}\sum_{t=0}^{\infty}\gamma^{t'+t}\Big(\frac{BR_{g}}{N}\|\bm{\mu}_{m+1}-\bm{\hat{\mu}}^{i}_{m+1}\|_{1}\notag\\
&+L(R_{f}+\tilde{\mu}_{max}R_{g})\|\bm{\theta}_{m}-\bm{\hat{\theta}}^{i}_{m}\|\notag\\
&+(t'+t+1)B^{2}N(R_{f}+\tilde{\mu}_{max}R_{g})\|\mathbf{1}_{N}\otimes\bm{\theta}_{m}-\bm{\hat{\theta}}_{m}\|\Big)\label{thekeyinsmooth}\\
\leq&\frac{B\sqrt{N}R_{g}}{(1-\gamma)^{2}}\|\bm{\mu}_{m+1}-\bm{\hat{\mu}}^{i}_{m+1}\|+\Big(\frac{LN(R_{f}+\tilde{\mu}_{max}R_{g})}{(1-\gamma)^{2}}\notag\\
&+\frac{(1+\gamma)B^{2}N^{2}(R_{f}+\tilde{\mu}_{max}R_{g})}{(1-\gamma)^{3}}\Big)\|\mathbf{1}_{N}\otimes\bm{\theta}_{m}-\bm{\hat{\theta}}_{m}\|,\label{thekeyinsmooth2}
\end{align}
where the inequality (\ref{thekeyinsmooth}) can be obtained by Assumptions~\ref{theassumptionofreward} and~\ref{theassumptionofpolicy} and (\ref{thekeyinsmooth2}) uses the fact that $\|x\|_{1}\leq\sqrt{N}\|x\|$ for any vector $x\in\mathbb{R}^{N}$.
Hence, we complete the proof of this lemma.
\end{proof}

\subsection{Proof of Corollary~\ref{thecorollaryerrorbetweentrueandused}}\label{ProofofCorollarythecorollaryerrorbetweentrueandused}
\begin{proof}
For Case (i), we have that
\begin{align}
&\big\|\nabla_{\mu_{i}}\mathcal{L}(\bm{\theta}_{m},\bm{\mu}_{m})-\bar{h}^{\bm{\hat{\pi}}_{\bm{\hat{\theta}}_{m}},\bm{\hat{\mu}}_{m}}_{app,i}\big\|\notag\\
\leq&\big\|\nabla_{\mu_{i}}\mathcal{L}(\bm{\theta}_{m},\bm{\mu}_{m})-\nabla_{\mu_{i}}\tilde{\mathcal{L}}(\bm{\theta}_{m},\bm{\mu}_{m})\big\|\notag\\
&+\big\|\nabla_{\mu_{i}}\tilde{\mathcal{L}}(\bm{\theta}_{m},\bm{\mu}_{m})-\bar{h}^{\bm{\hat{\pi}}_{\bm{\hat{\theta}}_{m}},\bm{\hat{\mu}}_{m}}_{app,i}\big\|\notag\\
\leq&L^{\mu}_{2}\big\|\mathbf{1}_{N}\otimes\bm{\theta}_{m}-\bm{\hat{\theta}}_{m}\big\|+2L^{\mu}_{1}\sqrt{\frac{\log{(2/\delta)}}{2K_{\mu}}}\label{thecorollaryerrorbetweentrueandused2-1}\\
=&\epsilon_{\mu}(m,K_{\mu})/\sqrt{N},
\end{align}
where (\ref{thecorollaryerrorbetweentrueandused2-1}) uses Case (i) in Lemma~\ref{thelemmaofsmooth} and (\ref{theunbiasedestimationcorollary1}).
\par
For Case (ii), we have
\begin{align}
&\big\|\nabla_{\theta_{i}}\mathcal{L}(\bm{\theta}_{m},\bm{\mu}_{m+1})-\bar{g}^{\bm{\hat{\pi}}_{\bm{\hat{\theta}}_{m}},\bm{\hat{\mu}}_{m+1}}_{app,i}\big\|\notag\\
\leq&\underbrace{\big\|\nabla_{\theta_{i}}\mathcal{L}(\bm{\theta}_{m},\bm{\mu}_{m+1})-\nabla_{\theta_{i}}\tilde{\mathcal{L}}(\bm{\hat{\theta}}_{m},\bm{\hat{\mu}}_{m+1})\big\|}_{\mathrm{(i)}}\notag\\
&+\underbrace{\big\|\nabla_{\theta_{i}}\tilde{\mathcal{L}}(\bm{\hat{\theta}}_{m},\bm{\hat{\mu}}_{m+1})-g^{\bm{\hat{\pi}}_{\bm{\hat{\theta}}_{m}},\bm{\hat{\mu}}_{m+1}}_{app,i}\big\|}_{\mathrm{(ii)}}\notag\\
&+\underbrace{\big\|g^{\bm{\hat{\pi}}_{\bm{\hat{\theta}}_{m}},\bm{\hat{\mu}}_{m+1}}_{app,i}-\bar{g}^{\bm{\hat{\pi}}_{\bm{\hat{\theta}}_{m}},\bm{\hat{\mu}}_{m+1}}_{app,i}\big\|}_{\mathrm{(iii)}}\notag\\
\leq&L^{\theta\mu}_{2}\|\mathbf{1}_{N}\otimes\bm{\mu}_{m+1}-\bm{\hat{\mu}}_{m+1}\|+L^{\theta\theta}_{2}\|\mathbf{1}_{N}\otimes\bm{\theta}_{m}-\bm{\hat{\theta}}_{m}\|\notag\\
&+\frac{2(R_{f}+\tilde{\mu}_{max}R_{g})BN}{(1-\gamma)^{2}}\gamma^{h(\kappa,\kappa_{p})+1}\notag\\
&+2L^{\theta}_{1}\sqrt{\frac{\log{(2/\delta)}}{2K_{\theta}}}\label{theinequalityofcorollary1}\\
=&\big(\epsilon_{\theta}(m,K_{\theta})+\epsilon_{0}(\kappa,\kappa_{p})\big)/\sqrt{N},
\end{align}
where (\ref{theinequalityofcorollary1}) can be obtained the Case (ii) in Lemma~\ref{thelemmaofsmooth}, (\ref{theerroroftruncatedpolicygradientapproximate}), (\ref{theunbiasedestimationcorollary2}), and the fact that
$\big\|\bm{\mu}_{m}-\bm{\hat{\mu}}^{i}_{m}\big\|\leq\big\|\mathbf{1}_{N}\otimes\bm{\mu}_{m}-\bm{\hat{\mu}}_{m}\big\|$.
\end{proof}

\vspace*{-1.0cm}

\end{document}